\newcommand{\suppl}{Supplemental Materials}
\DeclareMathAlphabet{\mathmybb}{U}{bbold}{m}{n}
\renewcommand\ket[1]{|#1\rangle}
\renewcommand{\bra}[1]{\langle{#1}|}
\renewcommand{\braket}[2]{\langle{#1}|{#2}\rangle}
\newcommand{\ketbra}[2]{|{#1}\rangle\!\langle{#2}|}
\newcommand{\dm}[1]{\ketbra{#1}{#1}}
\newcommand{\HSnorm}[1]{\left\lVert #1 \right\rVert_{\rm HS}}
\newcommand{\Tr}{\mathrm{Tr}}
\newcommand{\1}{\mathmybb{1}}
\newcommand{\h}{\hat{H}}
\newcommand{\W}{\mathcal{\hat{W}}}
\def\E{\overline{E}}
\def\V{\overline{V}}
\def\W{\hat{W}}
\newcommand{\abso}[1]{\left| #1 \right|}
\newtheorem{theorem}{Theorem}
\newtheorem{lemma}[theorem]{Lemma}
\newtheorem{corollary}[theorem]{Corollary}
\newtheorem{proposition}[theorem]{Proposition}
\newtheorem*{remark}{Remark}
\def\maketitle{
	\@author@finish
	\title@column\titleblock@produce
	\suppressfloats[t]}
\newcommand{\mg}[1]{{\color{orange}}}
\definecolor{bhblue}{HTML}{0096FF}
\definecolor{tableblue}{HTML}{ccffff}
\begin{document}
\title{Double-bracket quantum algorithms for quantum imaginary-time evolution }

\author{Marek Gluza}
\email{marekludwik.gluza@ntu.edu.sg}
\affiliation{School of Physical and Mathematical Sciences, Nanyang Technological University, 637371, Singapore}
\author{Jeongrak Son}
\affiliation{School of Physical and Mathematical Sciences, Nanyang Technological University, 637371, Singapore}
\author{Bi Hong Tiang}
\affiliation{School of Physical and Mathematical Sciences, Nanyang Technological University, 637371, Singapore}
\newcommand\FOKUS{Fraunhofer Institute FOKUS, Berlin, Germany}
\author{René Zander} 
\affiliation{\FOKUS}
\author{Raphael Seidel} 
\affiliation{\FOKUS}

\newcommand{\EPFL}{Institute of Physics, Ecole Polytechnique Fédérale de
Lausanne (EPFL), Lausanne, Switzerland}
\newcommand{\KEIO}{Quantum Computing Center, Keio University, Hiyoshi 3-14-1, Kohoku-ku, Yokohama 223-8522, Japan}
\author{Yudai Suzuki}
\affiliation{\EPFL}\affiliation{\KEIO}
\author{Zo\"{e} Holmes}
\affiliation{\EPFL}

\author{Nelly H. Y. Ng }
\email{nelly.ng@ntu.edu.sg}
\affiliation{School of Physical and Mathematical Sciences, Nanyang Technological University, 637371, Singapore}
\affiliation{Centre for Quantum Technologies, National University of Singapore, 3 Science Drive 2, 117543, Singapore}
\date{\today}

\begin{abstract}
Efficiently preparing approximate ground-states of large, strongly correlated systems on quantum hardware is challenging and yet nature is innately adept at this. This has motivated the study of thermodynamically inspired approaches to ground-state preparation that aim to replicate cooling processes via imaginary-time evolution. However, synthesizing quantum circuits that efficiently implement imaginary-time evolution is itself difficult, with prior proposals generally adopting heuristic variational approaches or using deep block encodings. Here, we use the insight that quantum imaginary-time evolution is a solution of Brockett's double-bracket flow and synthesize circuits that implement double-bracket flows coherently on the quantum computer. 
We prove that our Double-Bracket Quantum Imaginary-Time Evolution (DB-QITE) algorithm inherits the cooling guarantees of imaginary-time evolution. Concretely, each step is guaranteed to i) decrease the energy of an initial approximate ground-state by an amount proportion to the energy fluctuations of the initial state and ii) increase the fidelity with the ground-state. 
We provide gate counts for DB-QITE through numerical simulations in Qrisp which demonstrate scenarios where DB-QITE outperforms quantum phase estimation.
Thus DB-QITE provides a means to systematically improve the approximation of a ground-state using shallow circuits. 
\end{abstract}

\maketitle

\textit{Introduction.---}Preparing ground-states of Hamiltonians is a fundamental task in quantum computation with wide-ranging applications from studying properties of materials~\cite{quantumcentricmaterialscience} and chemicals~\cite{RMP_QC_chemistry} 
to solving optimization problems~\cite{blekos2024review}.
However, ground-state preparation is not only NP-hard~\cite{barahona1982computational} but also QMA-complete~\cite{GottesmanQMA,QMA,Bausch,Kempe-QIC-2003}, and thus is a challenging problem even for quantum computers~\cite{osborne2012hamiltonian,gharibian2015quantum,QuantumHamiltonianComplexityAharonov,Aharonov-2002}, let alone classical ones. Nevertheless, improvements on the computational efficiency over classical simulation is anticipated to be feasible with quantum processors~\cite{PhysRevB.phasecraft_vqe,wu2024variational,AlexiaEnergyInitiative}. In other words, the shift to quantum computing could offer a pathway to overcoming practical~\cite{NatafMilaSUN}, conceptual~\cite{Feynman} and fundamental~\cite{ETH_SAT,strassen1969gaussian} challenges to the classical simulation of ground-states. 

To date, various ground-state preparation algorithms have been proposed for  both  fault-tolerant \cite{nearoptimalground_Lin_2020,ge2019faster,gilyen2019quantum,fragmented_QITE2024,dong2022ground,epperly2021theory,temme2011quantum,kitaev1995quantum,brassard2002quantum,tan2020quantum,Low2019hamiltonian,PhysRevA.90.022305,double_bracket2024,PoulinWocjan,motlagh2024ground,Kirby2023exactefficient} and near-term quantum computers~\cite{avqite,mcArdle2019,nishi2021,yeter2020practical,motta2020determining,yeter2021benchmarking,Gomes2020,Huang2023,PhysRevResearch.6.033084,LinQPE,larose2019variational,Zeng_2021,parrish2019quantum,stair2020multireference,kokail2019self,google2020hartree,riemannianflowPhysRevA.107.062421,robbiati2024double,xiaoyue2024strategies,KishorRevModPhys.94.015004,cerezo2021variational,PreparationMPS2,PreparationMPS}.
Among these, a promising thermodynamic-inspired approach to cool an initial state $\ket{\Psi_0}$ with respect to a Hamiltonian $\h$ is the \textit{imaginary-time evolution} (ITE) defined by
\begin{align}
	|\Psi(\tau) \rangle = \frac{e^{-\tau \hat H} |\Psi_0\rangle}  {\| {e^{-\tau \hat H}}|\Psi_0\rangle \| }  \ .
 \label{eq:QITE}
\end{align}
Here $\tau$ is the ITE duration and the normalization involves the norm defined for any vector $\ket \Omega$ by $\|\ket{\Omega}\|=\sqrt{\langle\Omega|\Omega\rangle}$.  ITE is guaranteed to converge to the ground-state $\ket{\lambda_0}$ of $\h$ in case the initial overlap with the ground-state is non-zero~\cite{GellmannLow}.

We distinguish ITE from \emph{quantum imaginary-time evolution (QITE)}~\cite{motta2020determining,avqite,mcArdle2019,nishi2021,yeter2020practical,yeter2021benchmarking,Gomes2020,Huang2023,fragmented_QITE2024} in that ITE is defined by the normalized action of a non-unitary propagator and QITE is the implementation of ITE by explicitly using a unitary $Q_\tau$ such that $|\Psi(\tau) \rangle = Q_\tau |\Psi_0 \rangle$.
Finding the unitaries that implement the ITE states $\ket{\Psi(\tau)}$ is not straightforward.
One family of approaches use a hybrid quantum-classical optimization loop to learn $Q_\tau$ ~\cite{motta2020determining,avqite,mcArdle2019,nishi2021,yeter2020practical,yeter2021benchmarking,Gomes2020}. This can yield compressed circuits by fine-tuning to individual input instances, but scaling to large problem sizes is generally inhibited by growing requirements on measurement precision~\cite{larocca2024review, cerezo2023does}.
Another approach is to extend the system size and approximate the non-unitary propagator with qubitization~\cite{fragmented_QITE2024}. However, the overheads of implementing so-called block-encodings preclude flexible near-term experiments.
In other words, constructing efficient circuits for QITE remains an open problem.

\begin{figure}[t]
\centering
\begin{tikzpicture}
\definecolor{lightgray}{HTML}{F4F4F4}
\definecolor{littlelightgray}{HTML}{ecececff}
\definecolor{pale}{HTML}{7ca3d4ff}
\definecolor{lightred}{HTML}{d8a2a2}
\node[rotate=90,text width=0.3cm] at (-4.0,0.8) 
    {QITE};
\node[anchor=center] (russell) at (-0.2,1)
    {\centering\includegraphics[width=0.35\textwidth]{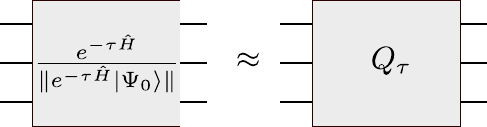}};
\node[rotate=90,text width=1.9cm] at (-4.0,-0.8) 
    {DB-QITE};
    \node[anchor=center] (russell) at (0.,-0.9)
    {\centering\includegraphics[width=0.4\textwidth]{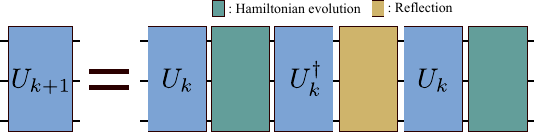}};

\end{tikzpicture}
\caption{\textbf{Double-bracket Quantum Imaginary-Time Evolution (DB-QITE).}
We propose a new quantum algorithm to implement imaginary-time evolution (ITE). To implement the Quantum Imaginary-Time Evolution (QITE) unitary $Q_{\tau}$, we utilize a Double-Bracket Quantum Algorithm (DBQA) and show that QITE can be recursively compiled using Hamiltonian evolution and reflection gates. }
\label{fig:summary}
\end{figure}

In this work, we offer a resolution to this problem by drawing on the observation that ITE is a solution to well-studied differential equations known as \emph{double-bracket flows (DBF)}~\cite{bloch1990steepest,moore1994numerical,Brockett1991DBF,smith1993geometric,optimization2012,BLOCH1985103,bloch1992completely,BROCKETT1989761,bloch1985completely,deift1983ordinary,Chu_iterations,wegner1994flow,wegner2006flow,hastings2022lieb,GlazekWilson,GlazekWilson2,kehrein_flow,brockett2005smooth}. 
DBFs are appealing for quantum computation because their solutions arise from unitaries, a feature recently exploited for quantum circuit synthesis by means of double-bracket quantum algorithms (DBQA)~\cite{double_bracket2024,robbiati2024double,xiaoyue2024strategies}. 
DBFs come with local optimality proofs in that they implement gradient flows~\cite{Brockett1991DBF,bloch1990steepest,moore1994numerical,smith1993geometric,optimization2012,malvetti2024randomized} and these are known to converge exponentially fast to their fixed points~\cite{karimi2016linear}.
Accordingly, DBQAs have recently been shown to provide a compilation procedure which is able to significantly improve on circuits optimized by brute-force~\cite{robbiati2024double}. Here we introduce an algorithm, termed \textit{double-bracket QITE} (DB-QITE), that implements steps of gradient
flows \textit{coherently} on a quantum computer~\cite{double_bracket2024}, without resorting to classical computations, heuristic hybrid quantum-classical variational methods, or block encodings. 
Instead, the DBQA approach produces recursive quantum circuits which at every step approximates the DBF of ITE, as sketched in Fig.~\ref{fig:summary}.

Crucially, DB-QITE inherits the cooling properties of ITE, as summarized in Table~\ref{tab:summary}. Namely, we provide rigorous guarantees that DB-QITE systematically lowers the energy of a state and increases its fidelity with the ground-state.
As we will see, DB-QITE has a similar cooling rate as ITE, with the rate larger for quantum states with high energy fluctuations. Moreover, a single step of DB-QITE is guaranteed to increase the fidelity with the ground-state by an amount proportional to the initial fidelity with the ground-state and the spectral gap of the target Hamiltonian. Furthermore, as our bounds hold for any input state (in contrast to Ref.~\cite{AnshuImproved}) this argument can be applied iteratively to show that the DB-QITE unitaries yield states which converge exponentially fast in the number of iterations to the ground-state. 
Given this recursive structure, the circuit depths required to implement DB-QITE grow exponentially with the number of steps.
As we show numerically, optimizing the free parameter of step durations reduces the number of iterations to a level competitive with state-of-the-art methods.

Thus DB-QITE provides a means to systematically reduce the energy of an approximate ground-state.
We expect our algorithm to be used both as a standalone ground-state preparation method in the early fault-tolerant era, as well as in conjunction with more established~\cite{nearoptimalground_Lin_2020,fragmented_QITE2024,dong2022ground,gilyen2019quantum,ge2019faster} and heuristic approaches~\cite{avqite,mcArdle2019,nishi2021,motta2020determining,yeter2020practical,MottaAdiabatic,KishorRevModPhys.94.015004,cerezo2021variational} to ground-state preparation.

\medskip

\begin{table}[t]
\centering
\begin{tikzpicture}
\definecolor{lightgray}{HTML}{F4F4F4}
\definecolor{littlelightgray}{HTML}{ecececff}
\definecolor{pale}{HTML}{7ca3d4ff}
\definecolor{lightred}{HTML}{d8a2a2}
\node[text width=6cm] at (-1.5,-4.7) 
    { };
\draw[thick] (-4.0, -4.6) rectangle (3.7, -7.6);
\fill[littlelightgray] (-4.0, -5.4) rectangle (3.7, -6.5);
\fill[pale] (-4.0, -6.5) rectangle (3.7, -7.6);
\draw[thick] (-4.0, -5.4) -- (3.7, -5.4);
\draw[thick] (-4.0, -6.5) -- (3.7, -6.5);
\draw[thick] (-2.8, -4.6) -- (-2.8, -7.6);
\draw[thick] (1.3, -4.6) -- (1.3, -7.6);
\node[text width=3] at (-3.65,-6.0) 
    {ITE};
\node[align=center,text width=3] at (-3.75,-7.1) 
    {DB-QITE};
\node[align=center,text width=4.0cm] at (-0.7,-4.98) 
    {Fluctuation-Refrigeration Relation};
\node[align=center,text width=4cm] at (2.5,-4.98) 
    {Fidelity};
\node[align=center,text width=4cm] at (-0.7,-6.0) 
    {$\partial_\tau E(\tau) = -2 V(\tau)$\\ $[$Eq.~\eqref{eq:fluctuationRefrigeration}$]$ };
\node[align=center,text width=4cm] at (-0.75,-7.1) 
    {$E_{k+1} \le  E_k - 2s_k V_k + \mathcal O (s_k^{2})$\\ $[$Theorem~\ref{th: fluctuation-refrigeration main}$]$ };
\node[align=center,text width=4cm] at (2.5,-6.0) 
    {$1-\mathcal{O}(e^{-2\tau\Delta})$\\ $[$Eq.~\eqref{eq: fidelity for Hasting QITE}$]$ };
\node[align=center,text width=4cm] at (2.5,-7.1) 
    {$1-\mathcal{O}(q^{k})$\\ $[$Theorem~\ref{th: fidelity convergence}$]$};
\end{tikzpicture}
\caption{\textbf{Energy reduction guarantees for DB-QITE}.
It is known that the ITE can lower the energy $E(\tau)$ by an amount proportional to the energy fluctuation $V(\tau)$. We call this a fluctuation-refrigeration relation.
ITE is also guaranteed to improve the fidelity to the ground-state exponentially in the imaginary-time $\tau$. 
Similarly to ITE, DB-QITE agrees with the amount of energy reduction $E_{k+1}-E_{k}$ up to corrections of order $\mathcal O(s_{k}^2)$, and improves the fidelity exponentially fast in the time of algorithm iterations $k$. }
\label{tab:summary}
\end{table}

\textit{Fluctuation-refrigeration relation of ITE.---}ITE defined in Eq.~\eqref{eq:QITE} satisfies~\cite{GellmannLow}
\begin{align}
    \partial_\tau {|\Psi(\tau) \rangle} 
    &= -(\h  -  E(\tau))|\Psi(\tau)\rangle\ 
\label{eq:GML}
\end{align}
with the energy $E(\tau) = \langle \Psi(\tau) | \h | \Psi(\tau)\rangle$.
Under such evolutions, the state is guaranteed to converge to the ground-state of the Hamiltonian in the limit of $\tau\to\infty$ as long as the initial state $\ket{\Psi_0}$ has some non-zero overlap to the ground-state~\cite{GellmannLow}.
This fact has driven the exploration of how to implement ITE on classical~\cite{paeckel2019time} and quantum computers~\cite{motta2020determining,avqite,mcArdle2019,nishi2021,yeter2020practical,yeter2021benchmarking,Gomes2020}.
From Eq.~\eqref{eq:GML}, a direct computation (see \suppl ~\ref{sec: DBF cooling rate}) gives 
\begin{align}
    \partial_\tau E(\tau) = -2 V(\tau)\ ,    \label{eq:fluctuationRefrigeration}
\end{align}
where $    V(\tau) = \langle \Psi(\tau) | (\h - E(\tau))^2| \Psi(\tau) \rangle
$ is the energy fluctuation (the operator variance of the Hamiltonian).
It follows that higher energy fluctuations in the state lead to a faster energy decrease. We call this a \emph{fluctuation-refrigeration relation}, and will show that an analogous relation holds for the quantum circuits that we will propose to approximate ITE. 

\medskip

\textit{ITE is a solution of DBF.---}We next observe that ITE Eq.~\eqref{eq:GML} can be rewritten as
\begin{align}
    \partial_\tau {|\Psi(\tau) \rangle} 
    &= [\Psi(\tau),\h] |\Psi(\tau) \rangle\ ,
\label{eq:GML DBF}
\end{align}
where  $\Psi(\tau)  = \ket{\Psi(\tau)}\bra{\Psi(\tau)}$ is the density matrix of the ITE state and $[A,B]=AB-BA$ is the commutator~\footnote{We thank D. Gosset for pointing out the appearance of the commutator for QITE which he discovered while working on Ref.~\cite{AnshuImproved}.}.
In terms of the density matrix $\Psi(\tau)$, we get
\begin{align}\label{eq: DBF DE}
     \frac{\partial \Psi(\tau)}{\partial \tau} = \big[ [\Psi(\tau),\h],\Psi(\tau)\big]\ .
\end{align}
This is exactly the form of the well-studied Brockett's DBF~\cite{optimization2012}. 
In End Matter, we review its relation to Riemannian gradient descent on  unitary manifolds and state the relevant cost functions.
Given Eq.~\eqref{eq: DBF DE}, these results from DBF theory now apply to QITE and in essence signify its local optimality. 

\medskip

\textit{QITE from ITE DBF.---}
The difficulty of implementing ITE on quantum computers lies in the non-unitarity of the propagator in Eq.~\eqref{eq:QITE} and its state-dependence in Eq.~\eqref{eq:GML}.
The challenge thus lies in circuit synthesis of unitaries $Q_\tau$ satisfying
\begin{align}
    |\Psi(\tau) \rangle \approx Q_\tau|\Psi_0 \rangle \ .
    \label{QITEsynthesis}
\end{align}
This transition from non-unitary ITE propagation in Eq.~\eqref{eq:QITE} to unitary implementation in Eq.~\eqref{QITEsynthesis} is termed QITE. Intuitively, we have from Eq.~\eqref{eq:GML DBF} that for short durations $\Delta \tau$,
\begin{align}
    \ket{\Psi(\Delta \tau)}\approx e^{\Delta \tau[\Psi(0),\h]}\ket{\Psi_0} \ .
    \label{QITE DBF apprx}
\end{align}
Thus, writing ITE as a DBF directly results in a proposal for QITE, at least for short durations.
It is not straightforward to rigorously quantify the approximation in Eq.~\eqref{QITE DBF apprx}, see Refs.~\cite{double_bracket2024,mcmahon2025equating}.
Nevertheless, we show in Prop.~\ref{prop: DBI frr} of App.~\ref{app: DBI cooling rate} that the energy decrease of this dynamics agrees with Eq.~\eqref{eq:fluctuationRefrigeration} up to corrections of order $\mathcal O(\Delta \tau^2)$. Motivated by this, we initialize a recursion by the initial state $\ket{\sigma_0} = \ket{\Psi_0}$ and 
define 
\begin{align}\label{eq: DBI state main}
    \ket{\sigma_{k+1}} = e^{s_k[\ket{\sigma_k}\bra{\sigma_k},\h]} \ket{\sigma_k}\ .
\end{align}
Here, we denote the time step size in the $(k+1)$-th step as $s_k$. 
Rigorous results from Refs.~\cite{optimization2012,moore1994numerical,bloch1990steepest,smith1993geometric} apply to Eq.~\eqref{eq: DBI state main} and prove convergence to the ground-state as $k\rightarrow\infty$. 
The recursion in Eq.~\eqref{eq: DBI state main} provides an explicit recipe for QITE unitaries in Eq.~\eqref{QITEsynthesis}.
However, further circuit synthesis is needed as $e^{s_k[\ket{\sigma_k}\bra{\sigma_k},\h]}$ is challenging to execute directly on a quantum computer.

\medskip

\textit{DB-QITE circuit synthesis.---}Next, we compile QITE unitaries as quantum circuit using the group commutator iterations first introduced in Ref.~\cite{double_bracket2024}. 
In the context of QITE we define the group commutator of  any density matrix $\Omega$ by
\begin{align}\label{eqGC}
    G_s(\Omega) &= 
    e^{i\sqrt{s}\h}e^{i\sqrt{s}\Omega}    
    e^{-i\sqrt{s}\h}
    e^{-i\sqrt{s}\Omega}\ .  
\end{align}
The \suppl~of Ref.~\cite{double_bracket2024} show that
\begin{equation}
    G_s(\Omega)= e^{s[\Omega,\h]} + \mathcal O(s^{3/2})\ ,
\end{equation}
see Refs.~\cite{dawson2006solovay,commutator_approximation_2022,product_formula2013} for further context. 
In order to formulate the circuit synthesis for DB-QITE we first notice
 that when applying $G_s(\ket \Omega \bra\Omega)$ to $\ket \Omega$ there is a cancellation and find 
\begin{align}\label{eq: GCI initial definition}
    e^{i\sqrt{s}}G_s(\ket \Omega \bra\Omega) \ket{\Omega} = e^{i\sqrt{s}\h}e^{i\sqrt{s}\ket \Omega \bra\Omega}    
    e^{-i\sqrt{s}\h}\ket{\Omega}\ .
\end{align}
This allows us to arrive at the following DB-QITE states
\begin{align}\label{eq: GCI state main}
    \ket{\omega_{k+1}} = e^{i\sqrt{s_k}\h} e^{i\sqrt{s_k}\ket{\omega_k}\bra {\omega_k}}   
    e^{-i\sqrt{s_k}\h}\ket{\omega_k}\ .
\end{align}
Finally, let $U_k$ denote the circuit to prepare $\ket{\omega_k}$ from a trivial reference state $\ket 0$, i.e., $\ket{\omega_k} = U_k |0\rangle$. We can now use unitarity to simplify $e^{i\sqrt{s}\ket{\omega_k}\bra {\omega_k}}= U_k e^{i\sqrt{s_k}\ket 0\langle0|}    U_k^\dagger$ and arrive at a recursive formula for DB-QITE circuit synthesis
    \begin{align}
    U_{k+1} = e^{i\sqrt{s_k}\h}U_k e^{i\sqrt{s_k}\ket 0\langle0|}    U_k^\dagger
  e^{-i\sqrt{s_k}\h} U_k\ \ .
  \label{DB-QITE Uk}
\end{align}
Step $k=0$ involves $U_0$ which  is used to define $\ket{\omega_0}:=U_0\ket 0$. 
When $U_0= I$ then DB-QITE operates as standalone and when $\ket{\omega_0}$ approximates the ground-state $\ket{\lambda_0}$ of the input Hamiltonian $\h$ then $U_0$ is a warm-start with subsequent steps of DB-QITE continuing to leverage on $U_0$ implicitly. 
Fig.~\ref{fig:summary} summarizes this recursive circuit synthesis and the compilation of DB-QITE is completed by invoking known efficient schemes for  Hamiltonian simulation~\cite{childs2010limitations,kothari,childs2012hamiltonian,PhysRevX.TrotterSuzukiError,SuChilds_nearly_optimal} and reflections~\cite{zindorf2024efficient,balauca2022efficient,zindorf_multicontrol_2025}.

\medskip
\textit{Analysis of DB-QITE.---}Similarly to ITE, DB-QITE obeys a fluctuation-refrigeration relation where the energy reduction in every step is bounded by the energy fluctuations as in Eq.~\eqref{eq:fluctuationRefrigeration}. 

\begin{theorem}\label{th: fluctuation-refrigeration main} \textbf{Fluctuation-refrigeration relation.}
The average energy $E_k:=\bra{\omega_k}\h\ket{\omega_k}$ of the states $\ket{\omega_k}:= U_k \ket 0$, where $U_k$ is defined recursively in Eq.~\eqref{DB-QITE Uk}, obeys 
\begin{align}
     \frac{E_{k+1} -  E_k}{s_k} \le - 2V_k + \mathcal O (s_k)
    \label{eq fluctuation-refrigeration main Os2} \ 
\end{align}
where $V_k:=\bra{\omega_k}\h^2\ket{\omega_k}-E_k^2$ is the variance of the energy in state  $\ket{\omega_k}$. 
\end{theorem}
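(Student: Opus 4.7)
The plan is to compute $E_{k+1} = \bra{\omega_{k+1}}\h\ket{\omega_{k+1}}$ directly, exploiting that $\Omega_k := \ket{\omega_k}\bra{\omega_k}$ is a rank-one projector. This yields the exact identity $e^{i\sqrt{s_k}\Omega_k} = I + (e^{i\sqrt{s_k}} - 1)\Omega_k$, and combining it with $[\h, e^{\pm i\sqrt{s_k}\h}] = 0$ together with the cancellation $e^{i\sqrt{s_k}\h} e^{-i\sqrt{s_k}\h} = I$, I would rewrite the DB-QITE state as
\begin{equation*}
\ket{\omega_{k+1}} = \ket{\omega_k} + (e^{i\sqrt{s_k}} - 1)\,\beta_k \, e^{i\sqrt{s_k}\h}\ket{\omega_k}\ ,
\end{equation*}
where $\beta_k := \bra{\omega_k}e^{-i\sqrt{s_k}\h}\ket{\omega_k}$ is a scalar. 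This closed form avoids invoking the generic operator-norm approximation $G_s(\Omega) = e^{s[\Omega,\h]} + \mathcal{O}(s^{3/2})$, which would produce a spurious $\mathcal{O}(\sqrt{s_k})$ error in the normalized cooling rate.

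Substituting into the energy expectation and again using $[\h, e^{\pm i\sqrt{s_k}\h}] = 0$ gives
\begin{equation*}
E_{k+1} = E_k + 2\,\mathrm{Re}\bigl[(e^{i\sqrt{s_k}} - 1)\beta_k g_k\bigr] + \bigl|e^{i\sqrt{s_k}} - 1\bigr|^2 |\beta_k|^2 E_k \ ,
\end{equation*}
with $g_k := \bra{\omega_k}\h\, e^{i\sqrt{s_k}\h}\ket{\omega_k}$. Writing $\eta := \sqrt{s_k}$ and introducing the moments $M_n := \bra{\omega_k}\h^n\ket{\omega_k}$, I would Taylor-expand $\beta_k = 1 - i\eta E_k - \tfrac{\eta^2}{2} M_2 + \mathcal{O}(\eta^3)$, $g_k = E_k + i\eta M_2 + \mathcal{O}(\eta^2)$, together with $e^{i\eta} - 1 = i\eta - \tfrac{\eta^2}{2} + \mathcal{O}(\eta^3)$ and $|e^{i\eta}-1|^2 = \eta^2 + \mathcal{O}(\eta^4)$. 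At order $\eta^2$, the real-part term contributes $-2V_k - E_k$ (using $V_k = M_2 - E_k^2$) while the $|e^{i\eta}-1|^2 |\beta_k|^2 E_k$ term contributes $+E_k$; the $E_k$-dependent pieces cancel, leaving the advertised $-2V_k$.

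A structural observation I would highlight upgrades the remainder from $\mathcal{O}(\eta^3) = \mathcal{O}(s_k^{3/2})$ to $\mathcal{O}(s_k^2)$. Since the moments $M_n$ are real, $\mathrm{Re}(\beta_k g_k)$ and $|\beta_k|^2$ are even functions of $\eta$ while $\mathrm{Im}(\beta_k g_k)$ is odd; combined with $\mathrm{Re}(e^{i\eta}-1) = \cos\eta - 1$ being even and $\mathrm{Im}(e^{i\eta}-1) = \sin\eta$ being odd, the quantity $E_{k+1}$ is manifestly an even function of $\eta = \sqrt{s_k}$, so only integer powers of $s_k$ survive. Hence $E_{k+1} - E_k = -2s_k V_k + \mathcal{O}(s_k^2)$, which immediately implies the stated inequality. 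The main obstacle, I anticipate, is simply tracking the cancellation of the $E_k$-dependent contributions at order $\eta^2$---the discrete analog of the ground-energy-independence of the ITE cooling rate in Eq.~\eqref{eq:fluctuationRefrigeration}---and verifying that the parity argument indeed kills every odd power of $\eta$ rather than merely a handful of them.
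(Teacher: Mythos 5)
Your proposal is correct and follows essentially the same route as the paper's proof of Theorem~\ref{thm:QITE_DBQA_FRR}: the rank-one identity $e^{i\sqrt{s_k}\omega_k}=\1+(e^{i\sqrt{s_k}}-1)\omega_k$ reduces the step to the characteristic function $\phi(t)=\bra{\omega_k}e^{it\h}\ket{\omega_k}$ (your $\beta_k=\phi(-\sqrt{s_k})$), the $E_k$-dependent second-order pieces cancel, and the evenness in $\eta=\sqrt{s_k}$ eliminates the half-integer powers so the remainder is $\mathcal O(s_k^2)$ --- exactly the paper's parity argument for its even function $f(t)$. The only cosmetic differences are that you extract the $-2V_k$ coefficient by direct series multiplication rather than via $f^{(2)}(0)=-2E_k-4V_k$, and you observe the $E_k$ cancellation exactly where the paper settles for the inequality $|c|^2\le t^2$; for full rigor one would still bound the fourth-order Lagrange remainder via $|\phi^{(n)}(\xi)|\le\|\h^n\|$ as the paper does.
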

We prove this statement in Theorem \ref{thm:QITE_DBQA_FRR} in App. \ref{sec: DBQA}.
This shows that in every step of DB-QITE  cooling rate matches the cooling rate of continuous ITE up to $\mathcal O(s_k^2)$.
Moreover, we give sufficient conditions on $s_k$ such that the higher order terms $\mathcal O (s_k^{2})$ do not overhaul the first order cooling.

The energy reduction of DB-QITE quantified in Eq.~\eqref{eq fluctuation-refrigeration main Os2} is warranted by non-negativity of the energy variances $V_k >0$.
However, Eq.~\eqref{eq fluctuation-refrigeration main Os2} does not yet exclude the possibility of converging to an excited energy eigenstate. 
Next, we will show that such scenarios do not occur.
Concretely, as proven in App.~\ref{sec: DBQA}, the following theorem holds. 

\begin{theorem}\label{th: fidelity convergence} \textbf{Ground-state fidelity increase guarantee.}
Suppose DB-QITE is initialized with a non-zero initial ground-state overlap $F_0$.
Let $\h$ be a Hamiltonian with a unique ground-state $| \lambda_0 \rangle$ and $\lambda_0 = 0$, spectral gap $\Delta$ and spectral radius $\|\h\|\ge 1$.
Let $U_0$ be an arbitrary unitary and set 
  \begin{align}
      s = \frac{\Delta}{12 \|\h\|^3}\ .
      \label{thm step duration}
  \end{align}
The states $\ket{\omega_k}:= U_k \ket 0$, where $U_k$ is defined recursively in Eq.~\eqref{DB-QITE Uk}, satisfy 
\begin{align} 
    F_{k+1} := |\langle \lambda_0|\omega_{k+1}\rangle|^2  \ge F_k \left(1 + \frac{(1-F_k) \Delta^2}{12 \|\h\|^3}\right) \geq 1- q^{k}\ 
    \label{thm main fk q}
\end{align}
where $q = 1- s  F_0 \Delta$.
\label{thmSimple}
\end{theorem}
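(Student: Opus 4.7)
The plan is to exploit the assumption $\lambda_0=0$ to collapse the recursion into an exact one-dimensional identity for the ground-state amplitude, derive a single-step lower bound using the spectral gap, and then iterate. Because $\ketbra{\omega_k}{\omega_k}$ is a rank-one projector, I have the exact factorization $e^{i\sqrt{s}\ketbra{\omega_k}{\omega_k}}=\1+(e^{i\sqrt{s}}-1)\ketbra{\omega_k}{\omega_k}$. Substituting this into Eq.~\eqref{eq: GCI state main} and using $\bra{\lambda_0}e^{\pm i\sqrt{s}\h}=\bra{\lambda_0}$ (a consequence of $\h\ket{\lambda_0}=0$), the $e^{\pm i\sqrt{s}\h}$ factors cancel against the bra, and I obtain
\begin{align*}
\braket{\lambda_0}{\omega_{k+1}}=\braket{\lambda_0}{\omega_k}\cdot M_k,\qquad M_k:=1+(e^{i\sqrt{s}}-1)\alpha_k,
\end{align*}
with $\alpha_k:=\bra{\omega_k}e^{-i\sqrt{s}\h}\ket{\omega_k}$. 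This gives the exact scalar recursion $F_{k+1}=F_k|M_k|^2$ with no approximation error.

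The central task is then a lower bound on $|M_k|^2$. I decompose $\ket{\omega_k}=\sqrt{F_k}\ket{\lambda_0}+\sqrt{1-F_k}\ket{\omega_k^\perp}$; using $\h\ket{\lambda_0}=0$ this yields $\alpha_k=F_k+(1-F_k)\alpha_\perp$, where $\alpha_\perp:=\bra{\omega_k^\perp}e^{-i\sqrt{s}\h}\ket{\omega_k^\perp}$ only probes eigenvalues in $[\Delta,\|\h\|]$. The positive contribution to $|M_k|^2-1$ is carried by the imaginary part $b_k:=\bra{\omega_k^\perp}\sin(\sqrt{s}\h)\ket{\omega_k^\perp}$, which I bound from below by applying the elementary inequality $\sin(x)\geq x-x^3/6$ spectrally on the gapped subspace to obtain $b_k\geq \sqrt{s}\Delta-\tfrac{1}{6}s^{3/2}\|\h\|^3$. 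Expanding
\begin{align*}
|M_k|^2-1=2(1-F_k)\bigl[\sin(\sqrt{s})\,b_k+(\cos\sqrt{s}-1)R_k\bigr],
\end{align*}
with $R_k$ collecting real-part terms bounded through $|\alpha_\perp|\leq 1$, the leading positive gain $2(1-F_k)s\Delta$ competes with corrections of absolute size $O((1-F_k)s^2\|\h\|^3)$. The step-size choice $s=\Delta/(12\|\h\|^3)$ is calibrated so that these corrections consume at most half the leading gain, delivering $|M_k|^2\geq 1+(1-F_k)\Delta^2/(12\|\h\|^3)$. The main technical obstacle is the careful bookkeeping of all subleading contributions — from the $\cos(\sqrt{s})-1$ factor, from the $O(s^{3/2})$ tail of $b_k$, and from residual higher powers — so that the explicit constant $1/12$ suffices uniformly for every $F_k\in[F_0,1)$.

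Finally, setting $c:=\Delta^2/(12\|\h\|^3)=s\Delta$, the single-step bound rearranges to $1-F_{k+1}\leq(1-F_k)(1-cF_k)$. The map $F\mapsto F+cF(1-F)$ is monotonically non-decreasing on $[0,1]$ (for $c<1$), so by induction $F_k\geq F_0$ throughout, hence $1-cF_k\leq 1-cF_0=q$. Iterating yields $1-F_k\leq(1-F_0)q^k\leq q^k$, which is equivalent to the second inequality in Eq.~\eqref{thm main fk q}.
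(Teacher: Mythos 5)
Your proposal is correct, and it rests on the same central identity as the paper's proof: the exact scalar recursion $\braket{\lambda_0}{\omega_{k+1}}=M_k\braket{\lambda_0}{\omega_k}$ obtained from the rank-one factorization $e^{i\sqrt{s}\ketbra{\omega_k}{\omega_k}}=\1+(e^{i\sqrt{s}}-1)\ketbra{\omega_k}{\omega_k}$ and $\h\ket{\lambda_0}=0$; your $M_k=1+(e^{i\sqrt{s}}-1)\alpha_k$ is literally the paper's $g(t)=1-(1-e^{it})\phi(-t)$. Where you diverge is in how the single-step lower bound on $|M_k|^2-1$ is extracted. The paper treats $p(t)=|g(t)|^2-1$ as an abstract even scalar function, applies Taylor's theorem with Lagrange remainder, computes $p^{(2)}(0)=4E_k$, bounds $|p^{(4)}(\xi)|\le 288\,\epsilon_k\|\h\|^3$ via derivative bounds on the characteristic function $\phi$, and then invokes $E_k\ge\Delta\epsilon_k$. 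You instead split $\ket{\omega_k}$ into ground and orthogonal components so that $\alpha_k=F_k+(1-F_k)\alpha_\perp$, isolate the gain in $\sin(\sqrt{s})\,b_k$ with $b_k=\bra{\omega_k^\perp}\sin(\sqrt{s}\h)\ket{\omega_k^\perp}\ge\sqrt{s}\Delta-\tfrac{1}{6}s^{3/2}\|\h\|^3$, and control the remainder through elementary trigonometric estimates. Your route is more transparent about where the gap enters (the orthogonal component only probes eigenvalues $\ge\Delta$, which replaces the paper's Lemma $E_k\ge\Delta\epsilon_k$) and avoids the characteristic-function derivative machinery; the paper's route is more mechanical but generalizes more readily (the same $\phi$-derivative bounds are reused in the fluctuation-refrigeration theorem). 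One caveat: your bookkeeping sketch must use that $R_k$ vanishes at $s=0$ — writing $R_k=(2F_k-1)(1-a_\perp)+(1-F_k)(1-|\alpha_\perp|^2)$ with $a_\perp=\mathrm{Re}\,\alpha_\perp$ shows $|R_k|=\mathcal{O}(s\|\h\|^2)$, so $(\cos\sqrt{s}-1)R_k=\mathcal{O}(s^2\|\h\|^2)$; a naive bound $|R_k|=\mathcal{O}(1)$ would make that term the same order $\mathcal{O}(s)$ as the leading gain and the argument would fail. With that refinement the corrections total at most $\tfrac{13}{144}$ of the leading gain at $s=\Delta/(12\|\h\|^3)$, comfortably under the claimed one half, and the geometric iteration goes through exactly as in the paper.
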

This result shows that DB-QITE systematically synthesizes circuits that improve on previous circuits to prepare a better approximation to the ground-state. In particular, the first step is guaranteed to increase the fidelity with the ground-state by $F_0 (1-F_0) \Delta^2/ 12 \|\hat{H}\|^3$ where $F_0$ is the fidelity of the initial guess state $U_0 |0 \rangle $ and $\Delta$ is the spectral gap.
Moreover, subsequent iterations are guaranteed to further increase the fidelity to the ground-state and avoid converging to excited states (as could be consistent with Theorem~\ref{th: fluctuation-refrigeration main}). Hence DB-QITE provides a means of systematically preparing states with increased fidelity with the ground-state.

\medskip

\begin{figure*}
    \centering
    \includegraphics[width=1\linewidth, trim = 0 .8cm 0 .951cm ,clip
    ]{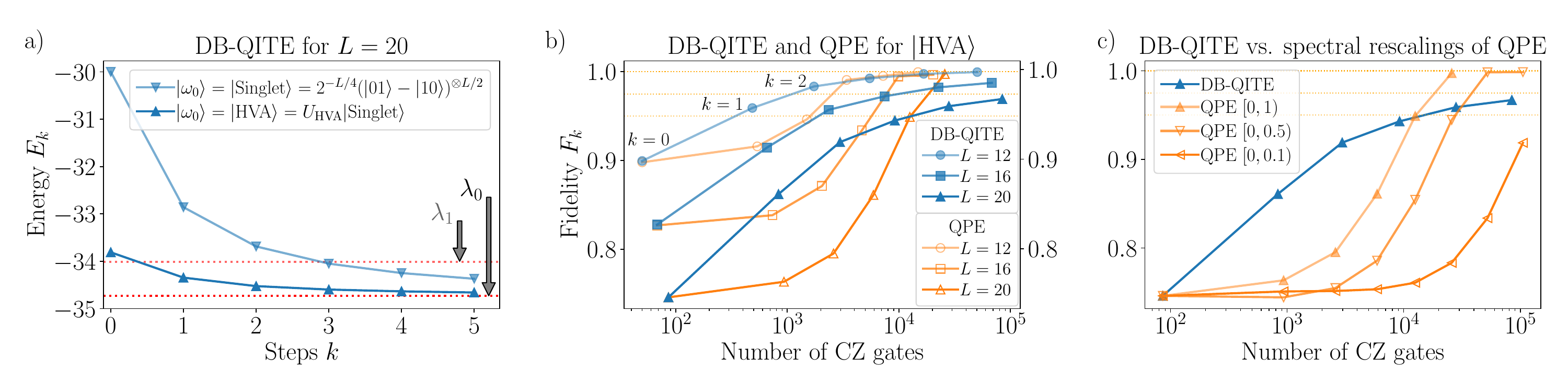}
    \caption{Numerical benchmarks for the 1d Heisenberg model. \textit{(a)} DB-QITE for $L=20$ qubits starting with either a product of singlets $\ket{\text{Singlet}}$ or a low-depth variationally learnt `warm-start' state $\ket{\text{HVA}}$ based on Refs.~\cite{bosse_Heisenberg_2022, kattemolle_vanwezel_2022,wecker2015progress}. 
    When extending $s_k$, just a handful of steps allows to reach $E_k\ll \lambda_1$.
    \textit{(b)} Two DB-QITE steps yield fidelities  $F_2\ge 95\%$ for $L=12$ and $F_2\ge 90\%$ for $L=20$ by using approximately $2\times 10^3$ CZ gates.
     QPE (orange) outperforms DB-QITE (blue) only when fault-tolerant execution allows for circuits with more than  $10^4$ CZ gates.
     In contrast, DB-QITE achieves high fidelity using fewer gates and more homogeneous circuits  (See App.~\ref{app Qrisp} for details).
    \textit{(c)}  QPE ideally uses a rescaled Hamiltonian $\h'= (\h-\lambda_0\1)/(\|\h\|-\lambda_0)$, but overestimating $|| \h||$ (e.g., by a factor of 2) significantly increases QPE runtime. 
    In such cases, DB-QITE can attain fidelity $F_4\approx 95\% $ more efficiently.     }
    \label{fig DBQITE QPE main} 
\end{figure*}

\textit{System-size scaling of DB-QITE.---}Resources required to run DB-QITE grow with system size but for a gapped local Hamiltonian the improvement from a single step of DB-QITE only degrades polynomially with system size $L$. 
Namely, it is clear from the formal statement of Theorem~\ref{th: fluctuation-refrigeration main} (i.e., Theorem \ref{thm:QITE_DBQA_FRR} in App.~\ref{sec: DBQA}) that a polynomially decreasing $s$ suffices to ensure the inequality in Eq.~\eqref{eq fluctuation-refrigeration main Os2} holds strictly and so for constant fluctuations $V_k$ the energy gain reduces only polynomially. 
Similarly, if $\Delta \in \Theta(1)$ and $\|\h\| \in \Theta(L)$ then the fidelity gain for a constant initial fidelity is only reduced by a polynomial factor. 
Moreover, the circuit depth required to implement a single iteration of DB-QITE only increases polynomially with system size (inline with the standard scaling of reflections and Hamiltonian simulation) and thus our circuit scalings for a single iteration of DB-QITE are  favorable. 
In the End Matter we further discuss the scaling for consecutive iterations of DB-QITE which are less favorable due to DB-QITE's recursive structure but next we provide numerical evidence that these scalings carry less weight when using larger step sizes $s_k$ than prescribed by worst-case bounds in Eq.~\eqref{thm step duration}.

\medskip

\textit{Numerical simulations of DB-QITE.---}We performed numerical simulations for the Heisenberg model with up to $L=20$ qubits and optimized $s_k$ by choosing the value giving the largest energy gain among $20$ evaluations of $E_k$ obtained with trial values for $s_k$.
Fig.~\ref{fig DBQITE QPE main}a) shows that this increases energy gain beyond Eq.~\eqref{thm step duration} and the linearization in Eq.~\eqref{eq fluctuation-refrigeration main Os2}; each DB-QITE step removes about half of the excess energy.
Quantitatively, DB-QITE reaches fidelity $F_2\ge 90\%$ for $k=2$ steps using less than $N_\text{CZ} \le 3\times 10^3$ CZ gates and $N_{U3}\le 4.8\times 10^3$ single-qubit gates.
Roughly half of this runtime suffices to achieve $F_2\ge 95\%$ for $L=12$ qubits.
These gate-counts are within reach of existing hardware~\cite{yamomotoPhaseEstimation,PhysRevLett.127.130505,Robledo_Moreno_2025}.
The circuit compilation is automated in the Qrisp programming language \cite{seidel2024qrisp} and the cost of implementing reflections is reduced by involving auxiliary qubits, see App.~\ref{app Qrisp} for details.
Qrisp allows us to compile circuits for $L\gg 20$ albeit without computing energy and fidelity gains.
If the same amount of Trotter-Suzuki steps in Hamiltonian simulation remains appropriate for $L=50$ then $k=1$ step of DB-QITE will require $N_\text{U3} \approx 3.4 \times 10^3$ single-qubit gates and $N_\text{CZ} \approx 2\times 10^3$ CZ gates while $k=2$ requires $N_\text{U3} \approx 1.2 \times 10^4$  and $N_\text{CZ} \approx 7\times 10^3$.

The main limitation of current large scale quantum devices is the fidelity of two-qubit gates, limiting the maximum realistically implementable CZ gates to $N_\text{CZ} \lesssim 10^3$~\cite{yamomotoPhaseEstimation}. 
Fig.~\ref{fig DBQITE QPE main}b) shows that if executed experimentally now, DB-QITE outperforms the canonical quantum phase estimation (QPE) variant reviewed in Ref.~\cite{ge2019faster}, a paradigmatic example of ground-state preparation algorithms geared towards fault-tolerant hardware~\cite{nearoptimalground_Lin_2020,fragmented_QITE2024,dong2022ground,epperly2021theory,temme2011quantum,kitaev1995quantum,brassard2002quantum,tan2020quantum,Low2019hamiltonian,PhysRevA.90.022305,double_bracket2024,PoulinWocjan,gilyen2019quantum,ge2019faster}.
Specifically, when $N_\text{CZ} \lesssim 10^3$,  DB-QITE yields $F_1\ge 95\%$ for $L=12$ and $F_2\ge 85\%$ for $L=20$ while QPE is limited to only 1 digit of precision and offers almost no improvement over initialization with $F_0\approx 75\%$. 

DB-QITE continues to outperform QPE when more CZ gates are available up until requiring fault-tolerance at around $N_\text{CZ}\gtrsim 10^4$.
This facilitates about $5$ digits of precision in QPE and better fidelities than DB-QITE.
However, it is likely a substantial resource underestimate that QPE would in practice reach such fidelities for $N_\text{CZ}\gtrsim 10^4$ because we had to make strong assumptions to achieve that: \textit{i)} knowing the energy of the ground-state and highest-excited state, \textit{ii)} all-to-all interactions and \textit{iii)} perfect measurement calibration.
In Fig.~\ref{fig DBQITE QPE main}b), we computed $\lambda_0$ and $\|\h\|$ by exact diagonalization for $L\le 20$ but in general these tasks are computationally intractable.
Fig.~\ref{fig DBQITE QPE main}c) highlights that 
even for $N_\text{CZ}\gtrsim 10^4$,
DB-QITE will continue to compete with QPE's performance: An overestimate of $\|\h\|$ by a factor 2 means that QPE needs $6$ auxiliary qubits to reach precision outperforming DB-QITE.
For a factor of $10$, we found that $7$ auxiliary qubits do not suffice.
DB-QITE's independence of rescalings is a general advantage as methods involving block-encodings rely on such rescalings too~\cite{nearoptimalground_Lin_2020,RapidInitializationPRXQuantum.6.020327,berry2024quantum}.

\medskip
\textit{Discussion.---}The fact that imaginary-time evolution (ITE) is a solution to double-bracket flows (DBFs) implies that ITE is an optimal gradient descent method in the sense of being a steepest-descent flow~\cite{optimization2012,smith1993geometric,moore1994numerical,bloch1990steepest,malvetti2024randomized}. 
The local optimality of DBFs has previously been investigated in the context of circuit compilation~\cite{riemannianflowPhysRevA.107.062421} and has motivated tangent-space methods crucial for
tensor network simulations~\cite{2011PhRvL.107g0601H,haegeman2014geometry,10.21468/SciPostPhysLectNotes.7,10.21468/SciPostPhys.10.2.040,10.21468/SciPostPhysCore.4.1.004, UnifyingTNSDBF,PhysRevResearch.5.033141}.
However, these studies involved classical processing at its core.
Here, rather than resorting to classical computations or heuristic hybrid quantum-classical variational methods~\cite{mcclean2018barren, larocca2024review, BittelKliesch,stilck2021limitations,bittel2022optimizing,cerezo2023does}, we took the radically different approach of implementing DBFs \textit{coherently} on the quantum computer~\cite{double_bracket2024}. 

Concretely, we propose the double-bracket quantum
imaginary-time evolution (DB-QITE) algorithm which exploits the connection of DBF and ITE to obtain unitaries in Eq.~\eqref{DB-QITE Uk} guaranteed to reduce the energy.
This is achieved through a recursive application of the group commutator.
DB-QITE involves forward and backward evolutions under the target Hamiltonian, interspersed with reflection operations and the initial state preparation unitaries. These ingredients are similar to bang-bang protocols for quantum control~\cite{lasalle1960bang,dong2010quantum,kuang2017rapid}.
However, while this quantum control paradigm is usually restricted to small systems, DB-QITE can be viewed as a quantum many-body echo protocol~\cite{tnsBangBangProtocolPhysRevB.106.195133}. 
Our  strategies can have implications for fields other than quantum computing which also employ ITE, e.g. path-integral optimization used to study holographic complexity in AdS/CFT correspondence~\cite{pathIntegeralOptimization,boruch2021holographic}.

DB-QITE inherits characteristics of ITE for ground-state preparation.
Specifically, Thm.~\ref{th: fidelity convergence} establishes that states prepared via DB-QITE are guaranteed to converge towards the ground-state  exponentially in the number of steps and Eq.~\eqref{eq fluctuation-refrigeration main Os2} shows that DB-QITE has, to first order, the same energy reduction rate as ITE. 
Both properties are ramifications of DB-QITE's rooting in gradient flow theory~\cite{optimization2012,smith1993geometric,moore1994numerical,bloch1990steepest,karimi2016linear}.
That the cooling rate of QITE is, similarly to ITE, determined by the energy fluctuations in the initial state is in line with the physical intuition that fast control of a system needs coherence in the energy basis~\cite{KosloffPRA,KosloffPRL}.
This suggests the possibility to derive quantum speed limits for cooling with a measurable metric that involves energy fluctuations.

While each step $k$ of DB-QITE is guaranteed to increase the ground-state fidelity, the circuit depth grows as $\mathcal{O}(3^k)$. 
Nonetheless, we provided numerical evidence that optimization of step durations $s_k$ leads to circuits within reach of existing quantum hardware and yielding high-fidelity approximations to ground-states of the Heisenberg models.
We found that for QPE to outperform DB-QITE, it required unrealistic assumptions, such as perfect knowledge of the spectrum and all-to-all connectivity.
We also found that when large circuit depths are available, then QPE's success probability is improved by initializing from a DB-QITE warm-start.
We expect DB-QITE warm-starts will be useful in conjunction with other methods too~\cite{nearoptimalground_Lin_2020,fragmented_QITE2024,dong2022ground,epperly2021theory,temme2011quantum,kitaev1995quantum,brassard2002quantum,tan2020quantum,Low2019hamiltonian,PhysRevA.90.022305,double_bracket2024,PoulinWocjan,gilyen2019quantum,ge2019faster,motta2020determining}. 
Our numerical results suggest that performing even a handful more DB-QITE steps is useful and this could be achieved by quantum dynamic programming~\cite{QDP}. 
In general, to keep the runtime low, it is essential to optimize  the step sizes $s_k$~\cite{double_bracket2024,xiaoyue2024strategies,robbiati2024double,thomson2024unravelling} which can be done based on measurements of the energy mean.
We conclude that DB-QITE, with its rigorous convergence guarantee and tractable gate counts, offers a refined pathway for ground-state preparation on large-scale quantum hardware.

\medskip
\textit{Code availability.} The source code to reproduce all numerical simulations in Qrisp and the figures is available openly at \href{https://github.com/renezander90/DB-QITE}{https://github.com/renezander90/DB-QITE}\ .

\medskip
\textit{Acknowledgments.}
We thank David Gosset for sharing his observation about QITE in Eq.~\eqref{eq:GML DBF} which he discovered while working on Ref.~\cite{AnshuImproved}.
Insightful discussions with  Jos\'e Ignacio Latorre and Alexander Jahn are gratefully acknowledged. 

MG, JS, BHT and NN are supported by the start-up grant of the Nanyang Assistant Professorship at the Nanyang Technological University in Singapore. MG has also been supported by the Presidential Postdoctoral Fellowship of the Nanyang Technological University in Singapore. ZH acknowledges support from the Sandoz Family Foundation-Monique de Meuron program for Academic Promotion.
RS and RZ are supported by European Union’s Horizon Europe research and innovation program under grant agreement no.\ 101119547.

%

\section*{End Matter}

\textit{Geometry of ITE DBF. ---} 
We show that the ITE dynamics in Eq.~\eqref{eq: DBF DE} represents the minimization of a cost function using its steepest gradient on the Riemannian manifold. 
To this end, we first show the fundamentals of DBFs and then elaborate on the property of ITE in terms of DBFs.

In general, DBFs are matrix-valued ordinary differential equations which have been shown to realize diagonalization, QR decompositions, sorting and other tasks~\cite{optimization2012,Brockett1991DBF,deift1983ordinary,Chu_iterations}.
Brockett first introduced the flow equation by studying the minimization of the least square of two matrices with the steepest descent techniques.
Let $\mathcal M(A) = \{U AU^\dagger\text{ s.t. } U^{-1}=U^\dagger\}$ be the set of all matrices generated by evolving a Hermitian operator $A$ under a unitary $U$. Consider the loss $\mathcal L_{A,B}(U)=-\frac{1}{2}\|U A U^\dagger-B\|^2_\text{HS}$ where $B$ is also Hermitian.
Then, the Riemannian gradient evaluated at $P=U AU^\dagger$ for the loss function $\mathcal L_{B}(P)=-\frac{1}{2}\|P-B\|_\text{HS}^2$ is given by~\cite{optimization2012,riemannianflowPhysRevA.107.062421}
\begin{align} \label{eq: grad_riemannian}
\text{grad}_P \mathcal{L}_B(P) = [[P,B],P]\ .
\end{align}
We note that the derivation relies on two conditions of Riemannian geometry, namely, the tangency condition and the compatibility condition, see Refs.~\cite{optimization2012,riemannianflowPhysRevA.107.062421} for a derivation.

This indicates that the the right-hand side of Eq.~\eqref{eq: grad_riemannian} is indeed the steepest descent direction of the cost function on the Riemannian manifold $\mathcal M(A)$. 
As the steepest-descent flow is the set of points $A(\ell)\in \mathcal M(A)$ which satisfy
\begin{align}
    \partial A(t)/\partial t = \text{grad}_{A(t)}\mathcal L_B(A(t))\ ,
\end{align}
we arrive at the flow equation
\begin{align}\label{eq: BrockettODE}
   \partial A(t)/\partial t = [[A(t), B],A(t)]\ ,\ 
\end{align}
which is exactly the DBF.

In case of the ground-state preparation task, we obtain the ITE in Eq.~\eqref{eq: DBF DE} by setting $A=\Psi(\tau)=\ketbra{\Psi(\tau)}{\Psi(\tau)}$ and $B=\h$.
Namely, the ITE realizes the minimization of the cost function
\begin{align} \label{eq: DBF cost}
    \mathcal{L}_{\h}(\Psi(\tau)) = - \frac12 \|\Psi(\tau) - \h\|_\text{HS}^2\ ,\
\end{align}
using its steepest descent direction on the Riemannian manifold.
Eq.~\eqref{eq: DBF cost} can be simplified to 
\begin{align} \label{eq: DBF cost_}
     \mathcal{L}_{\h}(\Psi(\tau))  = E(\tau) - \frac 12(1+ \|\h\|^2_\text{HS})\ .\
\end{align}
As the second term is independent of $\tau$, we can clearly verify that the cost function considered in this task is equivalent to the minimization of the energy, i.e., the preparation of the ground-state. Combining this observation with Eq.~\eqref{eq: BrockettODE} we obtain Eq.~\ref{eq: DBF DE}. That is, we see that DBFs provide a solution to ITE, as claimed. 
We remark that the ITE and Brockett's DBF have been extensively studied in slightly different fields, but their link has not been pointed out to the best of our knowledge.
However, individually they have been linked to gradient flows, see Ref.~\cite{HacklGeometry} for a discussion of ITE as gradient flow and Refs.~\cite{moore1994numerical,bloch1990steepest,smith1993geometric,dirr2008lie,kurniawan2012controllability,schulte2011optimal,schulte2008gradient} for a discussion of DBFs in context of gradient flows.

We also demonstrate some properties of QITE that can be inferred through the lens of DBF. For instance, the convergence to the ground-state can be checked by utilizing the properties of the DBF flow.
The solution of the DBF in Eq.~\eqref{eq: BrockettODE} is known to converge to the equilibria point that is characterized by $[A(\infty),B]=0$; that is, $A(\infty)$ shares the same eigenbasis with $B$.
Also, the DBF has the isospectral property, meaning any solution of the DBF has the same set of eigenvalues.
In case of the setting in Eq.~\eqref{eq: DBF cost}, as the initial state $\Psi(0)$ is pure (i.e., a rank-one matrix) and the cost function in Eq.~\eqref{eq: DBF cost_} indicates the energy minimization, the solution always converges to only one non-trivial eigenstate, i.e., the ground-state.

One can also derive the energy-fluctuation refrigeration relation for ITE in Eq.~\eqref{eq:fluctuationRefrigeration} directly from the link with DBFs. Concretely, by taking the derivative of Eq.~\eqref{eq:QITE}, we can see that the energy of ITE is driven by the energy fluctuation. 
As for the DBF, the derivative of the cost function in Eq.~\eqref{eq: DBF cost} reads
\begin{align}
    \partial \mathcal{L}_{\h} (\Psi(\tau))/\partial \tau = -\|[\Psi(\tau), \h]\|_\text{HS}
\end{align}
and further calculation reveals that
\begin{align}
 \|[\Psi(\tau), \h]\|_\text{HS} = 2V(\tau).\ 
\end{align}
This indicates that the energy dynamics of both cases are the same.
See~\ref{sec: DBF cooling rate} of Supplemental Materials for more discussion.

\medskip

\textit{Circuit depths for iterative applications of DB-QITE.---}
Theorem \ref{th: fidelity convergence} ensures that the fidelity with the ground-state converges exponentially with the number of DBQA \textit{steps} $k$. However, the depth of circuit (i.e., the number of queries to the Hamiltonian simulation, to reflections around the reference state or to initial state preparation circuit) also scales exponentially in $k$. 
Specifically, Eq.~\eqref{DB-QITE Uk} reveals that $k$ steps of DB-QITE require $\mathcal O(3^k)$  subroutine queries to Hamiltonian simulation or  partial reflections around the reference state.

The reflection unitary is a multi-qubit controlled parameterized phase gate, assuming the quantum state is the tensor-product zero state.
Multi-qubit controlled gates can be implemented efficiently without qubit overheads~\cite{barenco1995elementary}. 
Indeed Theorem~1 of Ref.~\cite{zindorf2024efficient} proves that any multi-qubit controlled unitary gate with $m\ge6$ controls is realizable by $12L-32$ CNOT gates with depth $8L-8$, $16L-48$ $T$ gates with depth $8L-6$ and $8L-32$ Hadamard gates with depth $4L-11$.
The App.~\ref{app Qrisp} describes the implementation in Qrisp which uses additional insights to further reduce these scalings.
This suggests that the gate complexity is linear in the number of qubits which is similar to efficient Hamiltonian simulation compiling~\cite{childs2010limitations,kothari,childs2012hamiltonian,PhysRevX.TrotterSuzukiError,SuChilds_nearly_optimal}.

Given the exponential query complexity in $k$, the DB-QITE analyzed in Theorem~\ref{th: fidelity convergence}
amplifies the initial fidelity $F_0$ to a prescribed final fidelity $F_\text{th}$ in depth 
\begin{equation}\label{eq: depth scaling}
\mathcal O\left(L\right)\times O\left(\left(\frac{1-F_0}{1-F_\text{th}}\right)^{24 \|\h\|^3/(  F_0\Delta^2)}\right)  \, ,
\end{equation}
where we multiplied the depth upper bound of the subroutines $\mathcal O(L)$ by the exponential query complexity, see Sec.~\ref{sec: DBQA} for derivation.
Thus, the depth scales exponentially with the inverse spectral gap $\Delta$ and with the inverse of the initial ground-state fidelity $F_0$.
The base of the exponential scaling is the ratio of the initial and final infidelity.
The last factor in the exponent is the inverse dependence on the step duration given in Eq.~\eqref{thm step duration}.
For local Hamiltonians this step duration is polynomially decreasing in the number of qubits $L$ implying that the runtime grows exponentially with $L^3$.
Thus, the  DB-QITE scheduling involved in the rigorous analysis allows for only a small number of steps $k$  so that circuit depths are modest.

There are strong indications that this runtime analysis is unnecessarily pessimistic.
In order to prove the Theorem we needed to use bounds that facilitated multiplicative rather than additive relations between the infidelities of consecutive DB-QITE states.
This is rather intricate and the bounds can be expected to not be tight.
More specifically, Eq.~\eqref{thm step duration} arises from taking a coarse lower bound $E_k\ge \Delta (1-F_k)$ which relates the energy to the fidelity $F_k$.
This lower bound is saturated if the $k$'th DB-QITE state $\ket{\omega_k}$ is supported only on the the ground-state $\ket{\lambda_0}$ and the first excited state $\ket{\lambda_1}$.
For such superposition of the two unknown lowest eigenstates, we can replace the Hamiltonian in our analysis by $\h_\text{eff} = \Delta\ket{\lambda_1}\bra{\lambda_1}$.
If $\h$ is gapped then the scheduling Eq.~\eqref{thm step duration} would be independent of the system size and the runtime would be system-size independent.

In turn, when the bound is not saturated then knowledge of $E_k$ (which is a basic measurement for monitoring any ground-state preparation quantum algorithm) would allow to choose a longer step duration and hence gain a larger energy decrease~\cite{double_bracket2024,xiaoyue2024strategies}.
For an unconstrained step duration $s_k$ in Eq.~\eqref{eq: GCI state main} an intermediate relation in the proof states
\begin{align}
\label{additive fidelity increase}
    F_{k+1} = F_k + 2 E_k s_k + \mathcal O(s_k^2) \ .
\end{align}
The magnitude of the higher-order terms determines the maximal step duration $s_k$. 
Rather than taking worst case estimates as we had to do in the proof, the higher-order terms can be estimated from simple measurements of the energy $E_k$ as a function of $s_k$ in Eq.~\eqref{eq: GCI state main}, see Ref.~\cite{xiaoyue2024strategies} for numerical examples.
In the proof, we upper bound the higher-order terms by the norm of the Hamiltonian but this is very likely a big over-estimate.
Indeed, those upper bounds are saturated when $\ket{\omega_k}$ is a superposition of the lowest and highest eigestates, which is highly unlikely.
Indeed, there is strong numerical evidence that in practice rather long steps can be used~\cite{double_bracket2024,xiaoyue2024strategies,robbiati2024double,thomson2024unravelling, motta2020determining}.
This makes it plausible that DB-QITE can be scheduled to gain much more fidelity in every step than is guaranteed by the worst-case lower bound Eq.~\eqref{thm main fk q}  which implies a much shorter circuit depth in those cases.

\textit{Comparison with prior work.---}Ref.~\cite{AnshuImproved} proposed circuit synthesis which provides an energy gain over product states in amount decided by energy fluctuations. 
This is similar to Eq.~\eqref{eq fluctuation-refrigeration main Os2} so we can reinterpret the energy reduction in Ref.~\cite{AnshuImproved} as being close to steepest-descent flows.
Ref.~\cite{AnshuImproved} improves the energy only for product states and hence cannot be iterated. 
In contrast, 
Thms.~\ref{th: fluctuation-refrigeration main} and~\ref{th: fidelity convergence} apply to arbitrary state initializations. Thus we not only amplify energy reductions with similarly  short depth circuits as Ref.~\cite{AnshuImproved}, but also allow the energy of arbitrary initial states to be reduced which allows to iterate our circuit synthesis. 
By this, the recursive character of DB-QITE should be viewed as a bonus despite the overheads associated with circuit repetitions.

Ref.~\cite{riemannianflowPhysRevA.107.062421} also exploits the commutator form of Riemannian gradients and derived a method for optimizing quantum circuits.
Their proposal aims to implement Eq.~\eqref{eq: DBI state main} by learning from measurements the parameters of the unitary. 
The protocol has rigorous convergence guarantees only in the limit of performing an exponential number of measurements and could be prohibitively expensive.

\textit{Quantum dynamic programming.---}
Let us comment how to reduce the circuit depth by extending its width.
Any state $\ket \psi = U_\psi \ket 0$ can be improved into a state with lowered energy by setting $\ket {\psi'} = U^{(\psi)} \ket \psi $
where
\begin{align}
    U^{(\psi)}=e^{i\sqrt{\tau}\h} e^{i\sqrt{\tau}\ket \psi\langle\psi|}    
    e^{-i\sqrt{\tau}\h}\ .
\end{align}
Thus in each step, DB-QITE depends on the input state's density matrix in a  non-trivial way and so it is a quantum recursion.
Recently, quantum dynamic programming has been proposed ~\cite{QDP} which similarly to ordinary dynamic programming leverages memoization to reduce the runtime of recursions.
In this case it suffices to combine density matrix exponentiation for the reflector $e^{i\sqrt{\tau}\ket \psi\langle\psi|}$ with regular Hamiltonian simulation.
See.~\cite{fragmented_QITE2024,nearoptimalground_Lin_2020} for discussion how to perform QITE using qubitization.

The reflection operator can be implemented using the density matrix exponentiation of $\ket{0}\bra{0}$.
Here, the cost scalings of the circuit depth and the number of copies of quantum states are respectively $\mathcal{O}(n\theta^2/\epsilon)$ and $\mathcal{O}(\theta^2/\epsilon)$ for implementing $e^{-i\theta\rho}$~\cite{lloyd2014quantum,Kimmel2017DME_OP}. 
While the precise realization could be costly, performing density-matrix exponentiation also seems feasible because the $\ket 0$ is known and so each time a copy is needed it can be prepared through a reset operation~\cite{kjaergaard2022demonstration}.

We stress that $U^{(\psi)}$ can be implemented obliviously to the circuit that prepares $\ket \psi$.
Thus, DB-QITE implemented through dynamic programming could be viewed a \textit{distillation} protocol of states with lowered energy.
The inverse of the number of quantum states needed to implement one step of density matrix exponentiation is a lower bound to the rate of approximate distillation.
If we want the distillation to have an increased gain we can perform more steps but with the trade-off that the rate will decay exponentially.

\title{Supplemental Material for ``Double-bracket quantum algorithms for quantum imaginary-time evolution"}
\maketitle
\onecolumngrid

\tableofcontents

\newpage
\section{Common terminology and techniques}\label{sec: terminology}

In this section, we make a comprehensive list of common terminology used throughout the rest of the appendix. 
\begin{enumerate}

\item The \emph{Hilbert-Schmidt norm} (or \emph{Frobenius norm}) of an operator \( A \) acting on a Hilbert space \( \mathcal{H} \) is defined as:
\begin{align}
    \| A -B\|_{\text{HS}} = \sqrt{\text{Tr}((A-B)^\dagger (A-B))}.
\end{align}

\item The \emph{operator norm} \( \| \hat{X} \| \) is defined as the smallest number such that for all normalized vectors, \( \| \hat{X} \psi \| \leq \| \hat{X} \| \) holds:
\begin{align}
    \| \hat{X} \| = \sup_{\| \psi \| = 1} \| \hat{X} \psi \|.
\end{align} 
When the operator \( \hat{X} \) is Hermitian on a finite-dimensional inner product space, its operator norm is equal to the maximum absolute value of its eigenvalues.
Any norm satisfies the \emph{triangle inequality}
$    \| x + y \| \leq \| x \| + \| y \|$.

\item The Taylor series expansion of a function \( f(x) \) around \( x_0 \), truncated at the \( n \)-th term, is
\begin{align}
    f(x) = f(x_0) + f^{(1)}(x_0)(x - x_0) + \cdots + \frac{f^{(n-1)}(x_0)}{(n-1)!}(x - x_0)^{n-1} +\frac{1}{n!} (x - x_0)^n f^{(n)}(\xi)\ .
\end{align}
where \( f^{(n+1)}(t) \) is the \((n+1)\)-th derivative of \( f \) and $\xi \in[x_0,x]$.
\item For two pure quantum states \( |\psi\rangle \) and \( |\Omega \rangle \), the fidelity is defined as
$ F(|\psi\rangle, |\Omega \rangle) = |\langle \psi |\Omega \rangle|^2$.
To analyze fidelity convergence in QITE, it is useful to derive bounds on the expected energy and energy variance in terms of the ground-state infidelity, $\epsilon=1-F$ and the spectrum of eigenenergies $\{\lambda_j\}_j$.

\begin{minipage}{0.96\textwidth}
\begin{lemma}\label{lemma: Ek > lambda epsilon}
Let $| \Omega \rangle$ be any pure state. 
Denote Hamiltonian as $ \h= \sum_{j=0}^{d-1} \lambda_j |\lambda_j \rangle \langle \lambda_j |$ where we assume that the eigenvalues of $\hat H$ are ordered increasingly and we set $\lambda_0=0$.
Suppose that the ground-state fidelity is given by $F=1-\epsilon$, then we have
  \begin{align}
      E_k\ge \lambda_1\epsilon \ , \quad \text{ and } \quad V_k\le \lambda_{d-1}^2\epsilon \ ,
  \end{align}
  where $E_k=\langle \Omega  |\h^2|\Omega \rangle $ is the expected energy and $V_k =  \langle\Omega |\h^2|\Omega  \rangle - E_k^2$ is the energy variance.
\end{lemma}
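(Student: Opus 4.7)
The plan is to expand $\ket{\Omega}$ in the energy eigenbasis and exploit the ordering $0 = \lambda_0 \le \lambda_1 \le \cdots \le \lambda_{d-1}$ together with $\lambda_0 = 0$. Write $\ket{\Omega} = \sum_{j=0}^{d-1} c_j \ket{\lambda_j}$ with $\sum_j |c_j|^2 = 1$. The assumption $F = |c_0|^2 = 1 - \epsilon$ immediately yields $\sum_{j \ge 1} |c_j|^2 = \epsilon$, which is the single identity both bounds rely on.

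For the energy lower bound (reading $E_k = \bra{\Omega}\h\ket{\Omega}$, since otherwise the formula $V_k = \bra{\Omega}\h^2\ket{\Omega} - E_k^2$ is not the variance), I would compute
\begin{align}
E_k \;=\; \sum_{j=0}^{d-1} \lambda_j |c_j|^2 \;=\; \sum_{j \ge 1} \lambda_j |c_j|^2 \;\ge\; \lambda_1 \sum_{j \ge 1} |c_j|^2 \;=\; \lambda_1 \epsilon\ ,
\end{align}
using $\lambda_0 = 0$ to drop the $j=0$ term and then the monotonicity $\lambda_j \ge \lambda_1$ for all $j \ge 1$.

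For the variance upper bound, use nonnegativity $E_k^2 \ge 0$ to get $V_k \le \bra{\Omega}\h^2\ket{\Omega}$, and then apply the analogous spectral expansion with $\lambda_j^2 \le \lambda_{d-1}^2$ for $j \ge 1$:
\begin{align}
V_k \;\le\; \sum_{j=0}^{d-1} \lambda_j^2 |c_j|^2 \;=\; \sum_{j \ge 1} \lambda_j^2 |c_j|^2 \;\le\; \lambda_{d-1}^2 \sum_{j \ge 1} |c_j|^2 \;=\; \lambda_{d-1}^2 \epsilon\ .
\end{align}

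Both inequalities are saturated in simple limiting cases (support concentrated on $\ket{\lambda_1}$ for the first, on $\ket{\lambda_{d-1}}$ with $E_k \to 0$ for the second), which is a useful sanity check. There is essentially no obstacle here: the lemma is a one-line spectral-decomposition argument, and the only subtlety is flagging the apparent typo in the definition of $E_k$ in the statement so that the variance identity $V_k = \bra{\Omega}\h^2\ket{\Omega} - E_k^2$ makes sense.
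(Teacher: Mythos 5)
Your proof is correct and follows essentially the same route as the paper's: expand in the energy eigenbasis, use $\lambda_0=0$ to drop the ground-state term, and bound by $\lambda_1$ from below and $\lambda_{d-1}^2=\|\h\|^2$ from above. You also correctly flag the typo in the statement ($E_k$ should be $\langle\Omega|\h|\Omega\rangle$, the first moment), which is consistent with how the paper's own proof actually uses $E_k$.
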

\begin{proof}
Let's define the probability to occupy the $i$-th eigenstate $| \lambda_i \rangle$ as $p_i$.
    As the ground-state fidelity is given by  $F=1-\epsilon$, the state has probability $p_0= 1-\epsilon$ to occupy the ground-state.
    Since $\lambda_0=0$ by assumption, we obtain 
     \begin{align}\label{eq: Ek > lambda epsilon}
   E_k =\sum_{i=1}^{d-1}  \lambda_i p_i \geq \lambda_1 \sum_{i=1}^{d-1}   p_i = \lambda_1\epsilon \ , \quad  \text{ and } \quad V_k \leq \sum_{i=1}^{d-1}  \lambda_i^2 p_i \le \|\h\|^2 \sum_{i=1}^{d-1}  p_i = \|\h\|^2 \epsilon_k \ .
      \end{align}
\end{proof}
\end{minipage}

\renewcommand{\arraystretch}{1.5} 
\begin{table}[h!]
\begin{tblr}{
    colspec = {|c|c|c|c|},
    row{2} = {blue9},
    row{4} = {blue9},
    row{6} = {blue9},
    row{8} = {blue9},
    row{10} = {blue9},
  }
\hline
 & \makecell{Sec.~(\ref{sec: QITE}): ITE} & \makecell{Sec.~(\ref{sec: QITE DBI}): QITE DBI} & \makecell{Sec.~(\ref{sec: DBQA}): DB-QITE}\\ 
\hline
Initial state   & $| \Psi_0 \rangle $    & $ | \sigma_0 \rangle = | \Psi_0 \rangle$           & $ | \omega_0 \rangle =| \Psi_0 \rangle$                                                                     \\
Evolution time step    & Continuous duration $\tau$    & Discrete duration $s_k$    & Discrete duration $s_k$                                                                    \\
Evolved states & $|\Psi(\tau) \rangle$, see Eq.~\eqref{eq:Psitau} & $ | \sigma_k \rangle $, see Eq.~\eqref{eq: DBI definition}                  & $ | \omega_k \rangle $, see Eq.~\eqref{eq:reducedGCI}             
\\
Expected energy   & $E(\tau) =\langle \Psi(\tau) | \h | \Psi(\tau) \rangle$   & $\overline{E}_k = \langle \sigma_k|\h|\sigma_k\rangle$        & $E_k = \langle \omega_k|\h|\omega_k\rangle$                                                \\[2pt]
Energy variance   & $V(\tau)= \langle \Psi(\tau) | \h^2 | \Psi(\tau) \rangle-E(\tau)^2$ & $\overline{V}_k =  \langle \sigma_k |\h^2|\sigma_k \rangle - \overline{E}_k^2$ & $V_k =  \langle \omega_k |\h^2|\omega_k \rangle - E_k^2$ 
\\[3pt]
\makecell{Fluctuation-refrigeration\\relation} & $\partial_\tau E(\tau) = - 2 V(\tau)$ & $\overline{\E}_{k+1} \le  \overline{\E_k} - 2s_k \overline{V}_k + \mathcal O (s_k^{2})$& $\E_{k+1} \le  \E_k - 2s_k V_k + \mathcal O (s_k^{2})$
\\[2pt]
\makecell{Time step conditions for\\ $E_{k+1}-E_k \leq -s_k V_k$}  & \makecell{N.A.} & \makecell{$ \displaystyle s_k \leq \dfrac{\V_k}{4 \|\h\| \langle \sigma_k| \h^2 |\sigma_k\rangle } $\\[10pt] (Proposition~\ref{prop: DBI frr}) }& \makecell{$\displaystyle s_k  \le \frac{4V_{k}}{5 \epsilon_{k}\lVert \h\rVert^{4}}$\\[10pt](Theorem~\ref{thm:QITE_DBQA_FRR})}
\\[4pt]
ground-state fidelity & $F\left(\ket{\Psi({\tau})},\ket{\lambda_0}\right) = \left|\braket{\lambda_0}{\Psi(\tau)}\right|^2$  & $F_k = | \langle \lambda_0 | \sigma_k \rangle|^2=1 - \epsilon_{k}$                                   & $F_k = | \langle \lambda_0 | \omega_k \rangle|^2=1 - \epsilon_{k}$                        \\[5pt]
\makecell{ground-state\\fidelity convergence} &   $F(\ket{\Psi({\tau})},\ket{\lambda_0})\geq 1-\delta_H$ 
& \makecell{$F_k \ge 1- q^{k} $ \\ with $q=1- sF_0 \lambda_1 $} 
& \makecell{$F_k \ge 1- q^{k} $\\
with  $q=1-sF_0 \Delta$}
\\[2pt]
\makecell{Time step conditions\\for the above fidelity\\convergence} & \makecell{$\displaystyle \tau_H = \log \left[O\left(\frac{L}{\delta_H |\langle \lambda_0| \Psi_0 \rangle|^{2})} \right) \right]$\\
(Sec.~\ref{subsec:QITE_convergence})}
& \makecell{ $\displaystyle s=\frac{\sqrt{F_0} }{4\| \h \|} $\\
(Theorem~\ref{th: dbi fidelity revised})}
& \makecell{ $\displaystyle s = \frac{\Delta}{12 \|\h\|^3}$\\
(Theorem~(\ref{th: fidelity convergence}) of main text \\ and Sec.~\ref{app:subsec:convergence})}\\[2pt]
\hline
\end{tblr}
\caption{The summary of notations for imaginary-time evolution (ITE), quantum imaginary-time evolution with double-bracket flows (QITE DBI), and with the explicit compiliation using techniques in double-bracket quantum algorithms (DB-QITE), for ease of reference.
For all sections of the supplemental materials, we define the Hamiltonian as $ \displaystyle \h= \sum_{j=1}^{d-1} \lambda_j |\lambda_j \rangle \langle \lambda_j |$ where $d$ is the dimension of the Hamiltonian. We also denote the spectral gap as $\Delta= \lambda_1- \lambda_0$ where $\lambda_1$ and $\lambda_0$ represent the first excited and ground-state energy respectively. We also define $\| \h\|$ as the operator norm of the Hamiltonian. Finally, we assume that the initial state has some non-zero ground-state overlap $F_0 > 0$.}
\end{table}

\end{enumerate}

\newpage
\section{Key properties of  imaginary-time evolution (ITE)}\label{sec: QITE}
This is a section that outlines all the key properties of ITE which are relevant for the scope of our work. We begin with the key observation in Section \ref{sec:QITE is DBFQITE as a double-braket flow} that the state defined via ITE
is a solution to the double-bracket flow equation. This leads to the decrease in average energy of the ITE state with respect to time, indicating convergence towards the ground-state. 
In Section \ref{sec: DBF cooling rate}, we derive a fluctuation-refrigeration relation which shows that states with higher energy fluctuations have a higher cooling rate. 
We next provide a digression in Section \ref{subsec:thermal_convergence} discussing convergence of thermal states to ground-states.
We then carry over the techniques used in this section, to present in Section \ref{subsec:QITE_convergence} an exponential convergence of the ITE state's fidelity to the ground-state. 
This means that implementation of ITE DBF Eq.~\eqref{eq app QITE DBF}  as a double-bracket quantum algorithm~\cite{double_bracket2024} will converge to the ground-state too.

\subsection{Proof that  imaginary-time evolution is a solution to a double-bracket flow}\label{sec:QITE is DBFQITE as a double-braket flow}

Recall the definition for $\tau>0$, and an initial pure state vector $\ket{\Psi_0}$ we set
\begin{align}\label{eq:Psitau}
	|\Psi(\tau) \rangle = \frac{e^{-\tau \hat H} |\Psi_0\rangle}  {\| {e^{-\tau \hat H}}|\Psi_0\rangle \| } \ ,
\end{align}
where $\|\ket \psi\|=\sqrt{\braket{\psi}\psi}$ for any vector $\ket \psi$. 
Next, we prove a key connection between ITE and DBFs.
\begin{proposition}[ITE is a solution to a DBF] \label{prop :QITE is DBFQITE as a double-braket flow}
 The ITE pure state $\hat\Psi(\tau)=|\Psi(\tau) \rangle\langle\Psi(\tau) |$ is a unique solution to Brockett’s double-bracket flow equation, i.e.
 \begin{align}
     \frac{\partial \Psi(\tau)}{\partial \tau} = \big[ [\Psi(\tau),\h],\Psi(\tau)\big]\ .
     \label{eq app QITE DBF}
 \end{align}
We refer to Eq.~\eqref{eq app QITE DBF} as ITE DBF.
\end{proposition}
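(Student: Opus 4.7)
The plan is a direct calculation using the Gell-Mann–Low equation~\eqref{eq:GML} combined with the two defining properties of a rank-one projector, namely $\Psi(\tau)^2=\Psi(\tau)$ and $\Psi(\tau)\h\Psi(\tau)=E(\tau)\Psi(\tau)$. I would first compute the time derivative of $\Psi(\tau)=\ket{\Psi(\tau)}\bra{\Psi(\tau)}$ via the product rule, which gives
\begin{equation}
\partial_\tau \Psi(\tau) = (\partial_\tau\ket{\Psi(\tau)})\bra{\Psi(\tau)} + \ket{\Psi(\tau)}(\partial_\tau\bra{\Psi(\tau)})\ .
\end{equation}
Substituting Eq.~\eqref{eq:GML} and its Hermitian conjugate (noting that $E(\tau)$ is real and $\h$ is Hermitian), the cross terms in $E(\tau)$ add, yielding
\begin{equation}
\partial_\tau \Psi(\tau) = -\{\h,\Psi(\tau)\} + 2E(\tau)\Psi(\tau)\ ,
\end{equation}
where $\{\cdot,\cdot\}$ denotes the anticommutator.

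Next I would expand the target double commutator and simplify using the projector identities:
\begin{align}
\bigl[[\Psi(\tau),\h],\Psi(\tau)\bigr]
&= \Psi(\tau)\h\Psi(\tau) - \h\Psi(\tau)^2 - \Psi(\tau)^2\h + \Psi(\tau)\h\Psi(\tau) \nonumber\\
&= 2\Psi(\tau)\h\Psi(\tau) - \h\Psi(\tau) - \Psi(\tau)\h \nonumber\\
&= 2E(\tau)\Psi(\tau) - \{\h,\Psi(\tau)\}\ .
\end{align}
The two expressions manifestly agree, which establishes that $\Psi(\tau)$ solves Eq.~\eqref{eq app QITE DBF}.

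For uniqueness, I would appeal to the standard Picard–Lindelöf theorem: the right-hand side of Eq.~\eqref{eq app QITE DBF} is a polynomial (cubic) in $\Psi$, hence locally Lipschitz on any bounded subset of the space of Hermitian operators. With initial condition $\Psi(0)=\ket{\Psi_0}\bra{\Psi_0}$, this guarantees a unique local solution. Since Brockett's DBF is isospectral (the eigenvalues of $\Psi(\tau)$ are conserved, so $\|\Psi(\tau)\|\le 1$ for all $\tau$), the solution remains bounded and extends uniquely for all $\tau\ge 0$. I do not anticipate significant obstacles here; the only subtlety is ensuring the Gell-Mann–Low equation~\eqref{eq:GML} is taken as given and that $E(\tau)$ being state-dependent is handled consistently when differentiating the bra, but both follow transparently from the chain rule.
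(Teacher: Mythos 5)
Your proposal is correct and follows essentially the same route as the paper: both derivations rest on the Gell-Mann--Low equation together with the rank-one projector identities $\Psi(\tau)^2=\Psi(\tau)$ and $\Psi(\tau)\h\Psi(\tau)=E(\tau)\Psi(\tau)$, the only cosmetic difference being that the paper first establishes the vector-level identity $\partial_\tau\ket{\Psi(\tau)}=[\Psi(\tau),\h]\ket{\Psi(\tau)}$ and then passes to the von Neumann equation, whereas you match the two sides directly at the density-matrix level. Your Picard--Lindel\"of argument for uniqueness (local Lipschitz continuity of the cubic right-hand side plus boundedness from isospectrality) is a welcome tightening of the paper's more informal appeal to smoothness of the ITE map.
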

\begin{proof}
First, observe that the ITE map Eq.~\eqref{eq:QITE} is smooth as a function of $\tau$.
Second, ITE preserves the state vector normalisation so there must exist a unitary $U_\tau$ that implements it.
Smoothness implies that there is a unique unitary $U_\tau$ such that $|\Psi(\tau) \rangle = U_\tau |\Psi_0 \rangle \ $.
We begin the proof by denoting the normalization factor of ITE as
\begin{align*}
N(\tau) = \|e^{-\tau \h} |\Psi_0 \rangle\| = \sqrt{\langle \Psi_0 | e^{-2\tau \h} | \Psi_0 \rangle}, 
\label{normalization_modified}
\end{align*}
and evaluating its derivative to be 
\begin{align}
    \partial_\tau \left( N(\tau)\right ) = \frac{-\langle \Psi_0 | \h e^{-2\tau \h}| \Psi_0\rangle}{N(\tau)} \ .
\end{align}
Thus the change of the ITE state vector as a function of $\tau$ is~\cite{GellmannLow}
\begin{align}
    \partial_\tau {|\Psi(\tau) \rangle} 
    = \partial_\tau \left( \frac{e^{-\tau \h} {| \Psi_0 \rangle}}{N(\tau)} \right ) &= \frac{ -\h e^{-\tau \h} {| \Psi_0 \rangle} }{N(\tau)} + \frac{ e^{-\tau \h} {| \Psi_0 \rangle} \cdot \left( - \partial_\tau N(\tau) \right ) }{N(\tau)^2} \\
    &= -\h |\Psi(\tau) \rangle + \frac{ \langle \Psi_0| \h e^{-2\tau \h} |\Psi_0\rangle }{N(\tau)^2 } |\Psi(\tau) \rangle  \\
    &= -\h |\Psi(\tau) \rangle + \langle \Psi(\tau) | \h | \Psi(\tau) \rangle |\Psi(\tau) \rangle\ .
\label{eq:QITEstandard}
\end{align}
As a remark, this is of the form of a nonlinear Schr\"odinger equation. On the other hand, we note that the ITE density matrix also satisfies
$    \Psi(\tau) = |\Psi(\tau) \rangle \langle \Psi(\tau) | \equiv  U_\tau \Psi_0  U_\tau^{\dagger}$
satisfies
\begin{align}
    [\Psi(\tau)  ,\h]|\Psi(\tau) \rangle &= |\Psi(\tau) \rangle \langle \Psi(\tau) | \h |\Psi(\tau) \rangle  -  \h|\Psi(\tau) \rangle \langle \Psi(\tau)  |\Psi(\tau) \rangle =  -\h |\Psi(\tau) \rangle + E(\tau) |\Psi(\tau) \rangle\ .
\end{align}
Thus, from Eq.~\eqref{eq:QITEstandard} we have $ \partial_\tau {|\Psi(\tau) \rangle} = [\Psi(\tau)  ,\h]|\Psi(\tau) \rangle$.
To complete the proof we set $\mathcal W = i[\Psi(\tau)  ,\h]$ and obtain Eq.~\eqref{eq: DBF DE} as the von Neumann evolution equation of the ITE density matrix.

\end{proof}

\begin{remark}
      Note that Eq.~\eqref{eq: DBF DE} is a Brockett DBF, see discussion in End Matter.
Let us discuss the physical meaning of this ITE state.
First, the expected energy of $\Psi(\tau)$ will keep decreasing until it reaches some fixed point $\Psi_\infty$.
Clearly, the condition for a fixed point is $[\Psi_\infty,\h]=0$, i.e. both of them are diagoanalizable in the energy basis.
Thus, they can be expressed as
\begin{align}
    \h= \sum_{j=1}^{d-1} \lambda_j |\lambda_j \rangle \langle \lambda_j |\;\; \text{ and } \;\;|\Psi(\infty) \rangle = \sum_{j=0}^{d-1} c_j |\lambda_j \rangle \langle \lambda_j | \ .
\end{align}
Without loss of generality, we assume that the eigenvalues of $\hat H$ are ordered increasingly, i.e. $\lambda_n \leq \lambda_{n+1}$.
Note that $\Psi(\tau)$ is a pure state, and any unitary conjugation preserves rank. 
Additionally, since $\Psi_\infty$ is diagonalizable in the energy basis, this means that it is an energy eigenstate.
This means that for values of $\tau >0$, the ITE state converges to the ground-state $\Psi_\infty =  |\lambda_0 \rangle \langle \lambda_0 | $.
On the other hand, if $\tau <0$, then the ITE state converges to $\Psi_\infty =  |\lambda_{d-1} \rangle \langle \lambda_{d-1} | $ which is the highest energy state of $\h$.

\end{remark}

\subsection{Fluctuation-refrigeration relation for the ITE cooling rate }\label{sec: DBF cooling rate}
In this section, our aim is to derive a \emph{fluctuation-refrigeration relation} for DBF, which shows that states with high energy fluctuation will converge to the ground-state more quickly by ITE procedure. We have seen that in the continuous-time DBF formalism, the average energy in the ITE state vector should decrease when $\tau$ increases as it converges to the ground-state, i.e. 
\begin{align}
E(\tau) := \langle \Psi(\tau) | \h | \Psi(\tau) \rangle
\end{align}
is a decreasing function of $\tau$. We show that the decrease rate is dictated by the energy variance $V(\tau) := \langle \Psi(\tau) | \h^2 | \Psi(\tau) \rangle-E(\tau)^2$.

\begin{proposition}[ITE Fluctuation-refrigeration relation]\label{prop:frr_QITE}
 The derivative of average energy with respect to inverse temperature, $\partial_\tau E(\tau) $, is given by 
 \begin{align}
     \partial_\tau E(\tau) = - 2 V(\tau) \ .
 \end{align}
\end{proposition}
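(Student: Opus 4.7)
The plan is to compute $\partial_\tau E(\tau)$ directly by differentiating $\langle \Psi(\tau)|\hat H |\Psi(\tau)\rangle$ and substituting the Gell-Mann--Low-type evolution equation~\eqref{eq:GML}, which was already established earlier in the paper. Since $\hat H$ itself has no $\tau$-dependence, the product rule gives
\begin{align}
\partial_\tau E(\tau) = \bigl(\partial_\tau \langle \Psi(\tau)|\bigr)\hat H |\Psi(\tau)\rangle + \langle \Psi(\tau)|\hat H\bigl(\partial_\tau|\Psi(\tau)\rangle\bigr)\ ,
\end{align}
so only the two terms involving the time derivative of the state vector need to be handled.

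From Eq.~\eqref{eq:GML} we have $\partial_\tau|\Psi(\tau)\rangle = -(\hat H - E(\tau))|\Psi(\tau)\rangle$, and since both $\hat H$ and $E(\tau)$ are real and self-adjoint, taking the adjoint yields $\partial_\tau\langle \Psi(\tau)| = -\langle \Psi(\tau)|(\hat H - E(\tau))$. Substituting both expressions and expanding gives
\begin{align}
\partial_\tau E(\tau) = -2\langle \Psi(\tau)|\hat H^2|\Psi(\tau)\rangle + 2E(\tau)\,\langle \Psi(\tau)|\hat H|\Psi(\tau)\rangle = -2\bigl(\langle \Psi(\tau)|\hat H^2|\Psi(\tau)\rangle - E(\tau)^2\bigr)\ ,
\end{align}
which is exactly $-2V(\tau)$ by the definition of the energy variance.

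There is essentially no obstacle here beyond careful bookkeeping: the normalization of $|\Psi(\tau)\rangle$ is preserved by the Gell-Mann--Low flow, so no extra normalization terms appear, and the fact that the cross-terms reinforce rather than cancel is what produces the factor of $2$. If desired, the same identity can be obtained equivalently from the double-bracket form~\eqref{eq: DBF DE} by computing $\partial_\tau E(\tau) = \Tr\bigl(\hat H\,\partial_\tau \Psi(\tau)\bigr) = \Tr\bigl(\hat H [[\Psi(\tau),\hat H],\Psi(\tau)]\bigr)$ and using the cyclic property of the trace together with $\Psi(\tau)^2 = \Psi(\tau)$ to recognize the result as $-\|[\Psi(\tau),\hat H]\|_{\rm HS}^2 = -2V(\tau)$, which also provides the connection noted in the End Matter between $\|[\Psi(\tau),\hat H]\|_{\rm HS}^2$ and the variance.
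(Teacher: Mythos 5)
Your proof is correct and follows essentially the same route as the paper: differentiate $E(\tau)$ by the product rule, substitute the Gell-Mann--Low equation $\partial_\tau|\Psi(\tau)\rangle = -(\hat H - E(\tau))|\Psi(\tau)\rangle$ and its adjoint, and combine the two cross-terms to get $-2V(\tau)$. The alternative trace-based derivation you sketch at the end is also sound (and matches the End Matter discussion, where you correctly square the Hilbert--Schmidt norm), but it is not needed for the main argument.
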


\begin{proof}
We may evaluate the derivative directly using Eq.~\eqref{eq:QITEstandard} and the Leibniz chain rule, 
\begin{align}
    \partial_\tau E(\tau) &=  \langle \Psi(\tau) | ( E(\tau)-\h)  \h| \Psi(\tau) \rangle + \langle \Psi(\tau) | \h( E(\tau)-\h)  | \Psi(\tau) \rangle\ .
\end{align}
Combining the two terms give $\partial_\tau E(\tau)
       = -  2\langle \Psi(\tau) | \h^2 | \Psi(\tau) \rangle +2E(\tau)\langle \Psi(\tau) | \h | \Psi(\tau) \rangle = -2 V(\tau)$.
\end{proof}

\begin{remark}
    To understand the limits of the relation let us look at asymptotic upper bounds.
We will find that they can be attained for `hot' states, where energy fluctuations scale in the system size.
On the other hand, for states close to eigenstates energy fluctuations vanish.
For a general quantum system, the energy fluctuation is upper bounded by 
\begin{align}
    V(\tau) &\leq \langle \Psi(\tau) | \h^2 | \Psi(\tau) \rangle \leq \| \h\|^2 \ .
    \end{align}
A stronger bound is obtained for local Hamiltonians, e.g. if they involve $I$ terms  $\h = \sum_{i=1}^I \h_i$ where each $\|\h_i\|\le \mathcal{O}(1)$ has norm independent of system size. In this case, for a system containing $L$ qubits,
\begin{align}
    E(\tau) \leq \mathcal{O}(L), \qquad V(\tau) &\leq \mathcal{O}(L^2)\ .
\end{align}

For example for the transverse field Ising model $\h_\text{TFIM} = \sum_{i=1}^{L-1} (Z_iZ_{i+1}+X_i)$ we can set $\h_i = Z_iZ_{i+1}+X_i$ where $Z_i$ and $X_i$ act on qubit $i$ and we have $\|\h_i\| = 2$. 
For $ \ket{\Psi_0}= \ket{+}^{\otimes L}$ we have $E(0) = L-1 $ and  $V(0) = L-1$ because $(Z_iZ_{i+1})^2 =1$ and contribute to the energy fluctuation.

\end{remark}

\subsection{Digression: Exponential convergence of thermal states to ground-states in 1-norm}\label{subsec:thermal_convergence}

We will next recall a result by Hastings~\cite{hastingsBound} who showed that thermal states of gapped Hamiltonians converge exponentially fast to the ground-state if it is unique and the density of eigenstates above the gap is polynomial.
This is a digression because in our work we are dealing with pure states rather than thermal states which are generically mixed.
However, the discussion is instructive because the same assumption and same mathematical idea will allow us to show an analogous result for ITE. 
 
More specifically, in Ref.~\cite{hastingsBound}, Hastings  identified a condition on the density of states which suffices to obtain the scaling of the inverse temperature of a thermal state such that it approximates the unique ground-state of a gapped system.
More specifically, let $\hat H$ be a Hamiltonian with eigenvalues 
$\{\lambda_j\}$ and we have labeled by $ j=0,1,\ldots,\text{dim}(\h)-1$, again ordering the eigenvalues increasingly such that $\lambda_j \leq \lambda_{j+1}$. Furthermore, let us set the ground-state energy as $\lambda_0=0$. 
We can define a simple counting function that captures the density of energy eigenstates. In particular, given the spectral gap $\Delta = \lambda_1-\lambda_0$, let us define a series of intervals $\mathcal{M}_m  := [m\Delta , (m+1)\Delta)$. The function $\rho_m(\h) = \big|\{ \lambda_j \in\mathcal{M}_m \}\big|$, for $m = 0,1,2,\ldots$ therefore counts the number of eigenvalues in each interval $\mathcal{M}_m$.
With this definition, Hastings' polynomial density of states condition is expressed as
\begin{align}\label{eq:Hastings}
\rho_m(\h) \le  (c_H L)^m/ m!
\end{align}
for $c_H=O(1)$, $L$ being the number of qubits.
In other words, we assume that the distribution of eigenenergies is sparse near the ground-state for any gapped Hamiltonian.
This allows to bound for any inverse temperature $\beta$ the partition function
\begin{align}
Z_\beta = \sum_{j=0}^{\text{dim}(\h)-1} e^{-\beta \lambda_j} \le e^{-\beta \lambda_0} \left[\sum_{m=0}^\infty \rho_m(\h) e^{-\beta \Delta m} \right]
&\le e^{-\beta \lambda_0} \sum_{m=0}^\infty\frac{(e^{-\beta \Delta } c_H L)^m}{m!}  \\
 &=  e^{-\beta \lambda_0} \exp(e^{-\beta \Delta E } c_H L)
\end{align}
where in the first inequality, we use the fact that for all $\lambda_j \in \mathcal{M}_m$, we have that $\lambda_j \geq m\Delta $ and therefore $e^{-\beta E} \leq e^{-\beta \Delta m}$. To illustrate this further, if $\rho_0=\ketbra{\lambda_0}{\lambda_0}$ is the ground-state projector and $\rho^{(\beta)}=\exp(-\beta\h)/Z_\beta$ is the thermal state (as opposed to ITE state it is generically mixed and not pure), then
\begin{align}
\|\rho^{(\beta)}-\rho_0\| \le 2\sum_{m=1}^\infty \rho_m(\h) e^{-\beta \Delta E m}
\le2 \left[ \exp(e^{-\beta \Delta E } c_H L)-1 \right].
\end{align}
Thus, if we set  \vspace{-0.3cm}
\begin{align}
    \beta_H =\dfrac{1}{\Delta} \left[ \log\left(\dfrac{c_H L}{\log(1+\varepsilon/2)} \right) \right]\ ,
\end{align}
then $\|\rho^{(\beta_H)}-\rho_0\| \le \varepsilon\ $.
To understand this more intuitively we use $\log(1+\varepsilon/2)  \ge \varepsilon/4$ which implies $\beta_H\le \frac 1\Delta\log(\mathcal{O}( L \epsilon^{-1}))$.
This means that thermal states of gapped Hamiltonians converge to the respective ground-states in inverse temperature scaling logarithmically in the system size and the desired approximation precision.

\subsection{Exponential convergence of fidelity to the ground-state through ITE}\label{subsec:QITE_convergence}
In the previous section, we studied the exponential convergence of thermal states to the ground-state, under Hasting's condition on the system Hamiltonian. However, the central interest of this work revolves around ITE states. While such states are pure as opposed to generically mixed thermal states, both involve an imaginary-time exponential of the Hamiltonian and a normalization. This is a key feature that allows for similar fidelity convergence behavior on the ITE state, to the ground-state. 

To see this explicitly, we apply a similar consideration to the ground-state fidelity of the ITE which we define as
\begin{align}\label{eq:qite_fidelity}
F(\ket{\Psi({\tau})},\ket{\lambda_0}) = |\braket{\lambda_0}{\Psi(\tau)}|^2 =\frac { e^{-2\tau \lambda_0} |\braket{\lambda_0}{\Psi_0}|^2}{ \left\| e^{-\tau \hat H} \ket{\Psi_0}\right\|^2
}\ ,
\end{align}
where one should recall that $\ket{\Psi_0}$ is the initial state of ITE, see Eq.~\eqref{eq:Psitau}. 
A lower bound for Eq.~\eqref{eq:qite_fidelity} can be obtained by upper bounding the ITE normalization
\begin{align}
\|e^{-\tau \hat H} \ket{\Psi_0}\|^2 = \bra{\Psi_0}e^{-2\tau \hat H} \ket{\Psi_0}
&= \sum_{j=0}^d |\langle \lambda_j| \Psi_0 \rangle|^2 e^{-2\tau \lambda_j}\\ 
&= |\langle \lambda_0| \Psi_0 \rangle|^2 e^{-2\tau \lambda_0}+ \sum_{j=1}^d |\langle \lambda_j| \Psi_0 \rangle|^2 e^{-2\tau \lambda_j}\\ 
&\le e^{-2\tau \lambda_0} \left[ |\braket{\lambda_0}{\Psi_0}|^2 + \sum_{m=1}^\infty e^{-2\tau m\Delta} \sum_{j:\lambda_j\in \mathcal{M}_m}|\langle \lambda_j| \Psi_0 \rangle|^2\right] \\
&\le e^{-2\tau \lambda_0} \left[ |\braket{\lambda_0}{\Psi_0}|^2 + \sum_{m=1}^\infty \rho_m(\h) e^{-2\tau m\Delta}\right]
\label{eq:simp_uppbound}\\
&\leq  e^{-2\tau \lambda_0} \left( |\braket{\lambda_0}{\Psi_0}|^2 + \exp \left(e^{-2\tau \Delta} c_H L\right)-1\right)\ .
\end{align}
In Eq.~\eqref{eq:simp_uppbound} we use a very loose upper bound $|\braket{\lambda_j}{\Psi_0} \rangle| \leq 1$ to insert the density of states $\rho_m(\h)$, and in the last line we invoked Hastings' condition. Finally, we proceed to upper bound $F(\Psi_0,\lambda_0)$ in Eq.~\eqref{eq:qite_fidelity}:
\begin{align}\label{eq: fidelity for Hasting QITE}
    F(\ket{\Psi({\tau})},\ket{\lambda_0})\geq \frac{|\braket{\lambda_0}{\Psi_0}|^2}{|\braket{\lambda_0}{\Psi_0}|^2+\exp \left(e^{-2\tau \Delta} c_H L\right)-1} \geq \frac{1}{1+\delta_H} \geq 1-\delta_H,
\end{align}
where we have set $\displaystyle \delta_H = \frac{\exp \left(e^{-2\tau \Delta} c_H L\right)-1}{|\braket{\lambda_0}{\Psi_0}|^2}$. 
Inverting this relation we find that it suffices for imaginary-time evolution to have the duration given by
\begin{align}\label{eq: flow duration for HastingsQITE}
\tau_H = \frac 1{2\Delta} \log\bigg(\frac{c_H L}{\log(1+\delta_H |\langle\lambda_0|\Psi_0\rangle|^2)}\bigg)\ .
\end{align}
In Eq.~\eqref{eq: fidelity for Hasting QITE} we defined $\delta_H>0$ as the desired ground-state infidelity for ITE initialized with a non-zero ground-state overlap $F(\ket{\Psi_{0}},\ket{\lambda_0})=|\langle \lambda_0| \Psi_0 \rangle|^2\neq 0$.
If $\h$ has a unique gapped ground-state and it satisfies Hastings' density of states condition in Eq.~\eqref{eq:Hastings}, then the scaling 
\begin{align}
    \tau_H = \log(O(L \delta^{-1}_H |\langle \lambda_0| \Psi_0 \rangle|^{-2}))
\end{align}
in  Eq.~\eqref{eq: flow duration for HastingsQITE} suffices for ITE to achieve the desired high fidelity.

\section{Double-bracket iteration (DBI) approach to QITE} \label{sec: QITE DBI}
In the previous section, we established that QITE is a DBF. Importantly, we were able to show two key performance indicators of QITE: the cooling rate as a function of energy variance, and the fact that under Hastings' assumption, one may obtain an exponential convergence for the fidelity of the QITE state to target ground-state. Nevertheless, for both numerical and compilation purposes, it will be necessary to demonstrate these properties for \emph{discretizations} of the DBF method. We begin by a considering the discretization which we referred to as \emph{double-bracket iteration} (DBI) though previous works used the name Lie bracket recursions~\cite{optimization2012}.

For the computational purposes of this appendix, we introduce the following notation:
denote $\ket{\sigma_0}$ as the initial state fed into a discretized DBI computation, and $\lbrace\ket{\sigma_{k}}\rbrace_{k}$ as the sequence of states that solve the QITE DBI, which takes on a recursive form: 
\begin{align}\label{eq: DBI definition}
    \ket{\sigma_{k+1}} = e^{s_k[\sigma_k,\h]} \ket{\sigma_k}.
\end{align}
In the above, we have denoted the time step size in the $(k+1)$-th DBI iteration as $s_k$. Furthermore, to simplify the notation we use the density matrix representation $\sigma_k = | \sigma_k \rangle \langle \sigma_k|$.
The quantities of interest, as in Section \ref{sec: DBF cooling rate}, are given by the average and variance of energy after $k$-th DBI iteration:
\begin{align}\label{eq:qite_dbi_e&v}
    \overline{E}_k = \langle \sigma_k|\h|\sigma_k\rangle, \quad \text{and} \qquad  \overline{V}_k =  \langle \sigma_k |\h^2|\sigma_k \rangle - \langle \sigma_k|\h^2|\sigma_k\rangle^2.
\end{align}

\subsection{Useful lemmas for proving fluctuation-refrigeration relation and fidelity convergence of QITE DBI }

\begin{enumerate}

\item We will use the following lemma in Sec.~\ref{app: DBI cooling rate}.

\begin{minipage}{0.96\textwidth}
\begin{lemma} Let $\W = -\W^\dagger$ and $\h =\h^\dagger$. Then         \begin{align}\label{eq: nth derivative of H evolution}
         \partial_s^n  \; \bigg( e^{-s\W}\h e^{s\W} \bigg) = e^{s\W}[(\W)^n, \h] e^{-s\W} \ ,
     \end{align}
     where $n \in \mathbb{N} $ indicates the order of the derivative and we define the notation for nested commutators 
     \begin{align}
         [(X)^n, Y] = \bigg[X, [(X)^{n-1}, Y]  \bigg] \quad \text{ with } \quad[(X)^0, Y] = Y  \ ,
     \end{align}
\end{lemma}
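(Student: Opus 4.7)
The natural plan is induction on $n$, resting on the elementary commutation $[\W, e^{\pm s\W}]=0$ and the product rule. No separate deep identity is needed.

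For the base case $n=1$, I would differentiate $e^{-s\W}\h e^{s\W}$ directly, obtaining $-\W e^{-s\W}\h e^{s\W} + e^{-s\W}\h\W e^{s\W}$. Because $\W$ commutes with its own exponential, I can pull $\W$ past one of the factors $e^{\pm s\W}$ so that both terms share a common sandwiching and collect into a single commutator $[\W,\h]$ conjugated by the exponentials, matching the claimed right-hand side for $n=1$ (modulo the sign/ordering convention fixed in the statement).

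For the inductive step, assume the identity holds at order $n$. I would differentiate the right-hand side $e^{s\W}[(\W)^n,\h]e^{-s\W}$ once more. By the product rule and $[\W,e^{\pm s\W}]=0$, the two outer exponentials contribute $\pm \W$ acting on either side of the middle factor $[(\W)^n,\h]$, while the middle factor is $s$-independent. The two contributions combine as
\begin{equation}
\W\cdot[(\W)^n,\h]-[(\W)^n,\h]\cdot\W=\bigl[\W,[(\W)^n,\h]\bigr]=[(\W)^{n+1},\h],
\end{equation}
by the recursive definition of the iterated commutator, which closes the induction.

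There is no real obstacle here; the argument is a standard textbook calculation, and the only thing to be careful about is the sign convention and which exponential sits on which side of the nested commutator, matching the precise form as stated. The anti-Hermiticity assumption $\W=-\W^\dagger$ plays no role in the algebra itself — it only guarantees that $e^{s\W}$ is unitary, which is the physical context (conjugation of a Hamiltonian by a unitary flow) in which the lemma will be invoked, most likely via a Taylor expansion $e^{-s\W}\h e^{s\W}=\sum_{n\ge 0}\tfrac{s^n}{n!}[(-\W)^n,\h]$ in the subsequent DB-QITE analyses.
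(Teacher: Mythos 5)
Your proposal is correct and follows essentially the same route as the paper: induction on $n$, with the base case by direct differentiation and the inductive step closed via the product rule, $[\W,e^{\pm s\W}]=0$, and the recursive definition of the nested commutator. Your parenthetical about the sign/ordering convention is apt --- the paper's own proof silently differentiates $e^{s\W}\h e^{-s\W}$ rather than the $e^{-s\W}\h e^{s\W}$ written in the lemma statement, so both you and the authors are quietly correcting the same typo.
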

\end{minipage}
\begin{minipage}{0.96\textwidth}
\begin{proof}
We prove it by mathematical induction. For the base case \( n = 1 \), we have 
\begin{align}
    \partial_s \left( e^{s\W} \h e^{-s\W} \right) = e^{s\W}\W \h e^{-s\W} - e^{s\W} \h\W e^{-s\W} = e^{s\W} [\W, \h ] e^{-s\W}
\end{align}
Next, we assume that for some integer \( n \geq 1 \):
\begin{align}
    \partial_s^n \left( e^{s\W} \h e^{-s\W} \right) = e^{s\W} [ (\W)^n, \h ] e^{-s\W}
\end{align}
The \( (n+1) \)-th derivative is then given by
\begin{align}
\partial_s^{n+1} \left( e^{s\W} \h e^{-s\W} \right) 
&= \partial_s \left( e^{s\W} [ (\W)^n, \h ] e^{-s\W} \right) \\
&= e^{s\W}\W [ (\W)^n, \h ] e^{-s\W} - e^{s\W} [ (\W)^n, \h ]\W e^{-s\W} \\
&= e^{s\W} [\W, [ (\W)^n, \h ] ] e^{-s\W} .
\end{align}
Using the definition of the nested commutator, we have $    [\W, [ (\W)^n, \h ] ] = [ (\W)^{n+1}, \h ]$, and thus we obtain
\begin{align}
    \partial_s^{n+1} \left( e^{s\W} \h e^{-s\W} \right) = e^{s\W} [ (\W)^{n+1}, \h ] e^{-s\W}.
\end{align}
By induction, the formula holds for all \( n \geq 1 \). 
\end{proof}
\end{minipage}

\item Let us consider the state $\ket \psi = \h \ket {\phi} / \|\h \ket {\phi} \|$. We will use it as follows.

\begin{minipage}{0.95\textwidth}
\begin{lemma}[Second moment bound]\label{lem:secondmomentbound}
    Let $\hat A =\hat A^\dagger$, then we have $\bra\phi \h \hat A\h\ket \phi \le \bra\phi \h^2 \ket \phi \|\hat A\| $.
\end{lemma}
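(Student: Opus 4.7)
The plan is to reduce the inequality to the defining property of the operator norm applied to a unit vector. The preceding setup already hints at the key move: introduce the normalized state
\begin{align}
|\psi\rangle = \frac{\h|\phi\rangle}{\|\h|\phi\rangle\|}\ ,
\end{align}
assuming for the moment that $\h|\phi\rangle \neq 0$. Then by Hermiticity of $\h$ I have $\langle\phi|\h^2|\phi\rangle = \|\h|\phi\rangle\|^2$, and the left-hand side factors as
\begin{align}
\langle\phi|\h\hat A\h|\phi\rangle = \|\h|\phi\rangle\|^2\, \langle\psi|\hat A|\psi\rangle\ .
\end{align}
So the claimed inequality reduces to $\langle\psi|\hat A|\psi\rangle \le \|\hat A\|$, which is immediate: since $\hat A$ is Hermitian its operator norm equals $\sup_{\|\chi\|=1}|\langle\chi|\hat A|\chi\rangle|$, and $|\psi\rangle$ is a unit vector by construction.

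The degenerate case $\h|\phi\rangle = 0$ is trivial since then both sides equal zero. I do not anticipate any real obstacle; the content of the lemma is just the variational characterization of the operator norm combined with the ``factor out $\h$'' trick enabled by Hermiticity of $\h$, and the argument is essentially one line once the normalized vector $|\psi\rangle$ is introduced as in the surrounding text.
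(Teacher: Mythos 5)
Your proof is correct and follows essentially the same route as the paper: introduce the normalized vector $\ket\psi = \h\ket\phi/\|\h\ket\phi\|$, factor the left-hand side as $\bra\phi\h^2\ket\phi\,\bra\psi\hat A\ket\psi$, and invoke the variational characterization of the operator norm. Your explicit handling of the degenerate case $\h\ket\phi=0$ is a small addition the paper omits, but the argument is otherwise identical.
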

\begin{proof}
By definition
    \begin{align}
        \bra\phi \h \hat A\h\ket \phi = \bra\phi \h^2 \ket \phi  \bra\psi \hat A \ket \psi 
    \end{align}
    and we use the variational definition of the operator norm.
\end{proof}
\end{minipage}

\item Usually, operator norms should be used to bound the proximity of unitaries.
However, while in general $\|[\h,\Omega\|\le 2\|\h\|\,\|\Omega\|$ is the standard bound, we can get a much stronger bound as follows.

\begin{minipage}{0.96\textwidth}
\begin{lemma}[Bracket perturbation]\label{lemma: bracket perturbation} Let $\h= \h^\dagger$ and $\Omega = \ket \Omega\bra\Omega$ then $\|\1- e^{[\h,\Omega]}\|\le 2 \sqrt{V(\Omega)}$.
\end{lemma}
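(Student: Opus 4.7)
The commutator $[\h,\Omega]$ with $\Omega=\ket\Omega\bra\Omega$ has very restricted support, and the first step of the plan is to exhibit this explicitly. Introduce the mean $E=\bra\Omega\h\ket\Omega$ and decompose $\h\ket\Omega = E\ket\Omega + \sqrt{V(\Omega)}\,\ket{\Omega^\perp}$, where $\ket{\Omega^\perp}$ is the unit vector orthogonal to $\ket\Omega$ in the direction $(\h-E)\ket\Omega$; this is well-defined whenever $V(\Omega)>0$, and the statement is trivial otherwise (both sides vanish). A direct substitution then yields
\begin{equation}
    [\h,\Omega] = \sqrt{V(\Omega)}\bigl(\ket{\Omega^\perp}\bra\Omega - \ket\Omega\bra{\Omega^\perp}\bigr),
\end{equation}
so $[\h,\Omega]$ is anti-Hermitian and supported on the two-dimensional subspace $S=\operatorname{span}\{\ket\Omega,\ket{\Omega^\perp}\}$, acting there like $\sqrt{V(\Omega)}$ times a Pauli-$y$-like generator.

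The second step is to diagonalize this generator. On $S$ the operator $[\h,\Omega]$ has eigenvalues $\pm i\sqrt{V(\Omega)}$ and annihilates the orthogonal complement $S^\perp$. Consequently $e^{[\h,\Omega]}$ restricts to the identity on $S^\perp$ and has eigenvalues $e^{\pm i\sqrt{V(\Omega)}}$ on $S$. Since the operator norm of a normal operator equals the largest modulus of its eigenvalues, we have
\begin{equation}
    \bigl\|\1 - e^{[\h,\Omega]}\bigr\| = \bigl|1 - e^{i\sqrt{V(\Omega)}}\bigr| = 2\bigl|\sin\bigl(\tfrac{1}{2}\sqrt{V(\Omega)}\bigr)\bigr|.
\end{equation}

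The third and final step is the elementary bound $|\sin x| \le |x|$, which gives $2|\sin(\tfrac{1}{2}\sqrt{V(\Omega)})| \le \sqrt{V(\Omega)} \le 2\sqrt{V(\Omega)}$, completing the proof. I expect no genuine obstacle here; the only delicate point is the degenerate case $V(\Omega)=0$, in which $\ket\Omega$ is an eigenvector of $\h$, $[\h,\Omega]=0$, and both sides of the inequality are zero. Notably, the argument actually yields the sharper estimate $\|\1 - e^{[\h,\Omega]}\| \le \sqrt{V(\Omega)}$, so the factor of $2$ in the statement is a comfortable slack that accommodates downstream uses of the lemma.
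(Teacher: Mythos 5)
Your proof is correct, but it proceeds by a genuinely different route from the paper's. The paper invokes a general matrix-analysis inequality $\|\1-e^{A}\|\le\|A\|$ for anti-Hermitian $A$ (citing Bhatia and Eq.~(115) of Ref.~\cite{double_bracket2024}), then bounds the operator norm by the Hilbert--Schmidt norm and applies Lemma~\ref{lemma: HS norm squared of commutator} to get $\|[\h,\Omega]\|_{\rm HS}=\sqrt{2V(\Omega)}\le 2\sqrt{V(\Omega)}$. You instead exploit the rank-two structure of $[\h,\Omega]$ for a pure state: the decomposition $\h\ket\Omega=E\ket\Omega+\sqrt{V(\Omega)}\ket{\Omega^\perp}$ shows the commutator acts as $\sqrt{V(\Omega)}$ times an antisymmetric generator on $\operatorname{span}\{\ket\Omega,\ket{\Omega^\perp}\}$ and vanishes elsewhere, so the exponential can be diagonalized exactly and $\|\1-e^{[\h,\Omega]}\|=2|\sin(\tfrac12\sqrt{V(\Omega)})|\le\sqrt{V(\Omega)}$. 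Your computation of the commutator and its spectrum checks out, the degenerate case $V(\Omega)=0$ is handled, and the normality argument for passing to eigenvalue moduli is valid since $e^{[\h,\Omega]}$ is unitary. What each approach buys: the paper's argument is shorter given its auxiliary lemmas and extends to any unitarily invariant norm without needing $\Omega$ to be pure, whereas yours is fully self-contained, avoids the external citation, and delivers a strictly sharper constant (indeed the exact value of the norm), with the factor of $2$ in the statement absorbed as slack.
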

\begin{proof}
    Following Bhatia \cite{bhatia} or using Eq.~(115) of Ref.~\cite{double_bracket2024} one can prove
    \begin{align}
        \|\1 - e^{[\h,\Omega]}\|\le  \|{[\h,\Omega]}\|\ ,
    \end{align}
    which holds for any unitarily invariant norm, in particular the operator norm.
    By using $\|[\h,\Omega]\|\le \|[\h,\Omega]\|_\text{HS}$ and Lemma~\ref{lemma: HS norm squared of commutator} we obtain the result. 
\end{proof}
Note, that normally upper bounding the operator norm by the Hilbert-Schmidt norm gives exponentially loose bounds. This is not the case here.
\end{minipage}

 \item For QITE, we are often interested in the commutator of the state and the system Hamiltonian. In the case of pure states, we may show that its Hilbert Schmidt norm is related to thermodynamic properties of the state, in particular energy variance. 
    
    \begin{minipage}{0.96\textwidth} 
\begin{lemma}[Bracket-variance duality]\label{lemma: HS norm squared of commutator} 
For any Hamiltonian $\h= \h^\dagger$ and pure state $\Omega = \ket \Omega\bra\Omega$,
    \begin{align}
        \HSnorm{\big[\h, \Omega \big]}^2 = 2 V(\Omega) = 2 \langle \Omega | \h^2 | \Omega\rangle - 2 \Omega | \h | \Omega\rangle^2 \ .
    \end{align}
\end{lemma}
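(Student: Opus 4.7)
The plan is to expand the Hilbert--Schmidt norm directly and exploit the defining property $\Omega^{2}=\Omega$ of a pure-state projector.

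First I would observe that $[\h,\Omega]^{\dagger} = -[\h,\Omega]$ since both $\h$ and $\Omega$ are Hermitian, so
\begin{align}
\HSnorm{[\h,\Omega]}^{2} = \Tr\!\left([\h,\Omega]^{\dagger}[\h,\Omega]\right) = -\Tr\!\left([\h,\Omega]^{2}\right).
\end{align}
Next I would expand $[\h,\Omega]^{2} = (\h\Omega-\Omega\h)^{2}$ into four terms $\h\Omega\h\Omega - \h\Omega^{2}\h - \Omega\h^{2}\Omega + \Omega\h\Omega\h$, and use $\Omega^{2}=\Omega$ to simplify the middle two. Taking the trace and cycling, the two outer terms each give $\Tr(\h\Omega\h\Omega) = \bra{\Omega}\h\ket{\Omega}^{2}$, while the two middle terms each give $\Tr(\Omega\h^{2}) = \bra{\Omega}\h^{2}\ket{\Omega}$. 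Assembling,
\begin{align}
\HSnorm{[\h,\Omega]}^{2} = 2\bra{\Omega}\h^{2}\ket{\Omega} - 2\bra{\Omega}\h\ket{\Omega}^{2} = 2V(\Omega),
\end{align}
which is the claimed identity.

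There is no real obstacle here: the only subtlety is remembering that idempotency $\Omega^{2}=\Omega$ (which fails for mixed states) is exactly what collapses the four-term expansion into the variance. If one wanted a basis-free picture, the same calculation follows instantly from writing $\ket{\Omega}=\ket{0}$ in a basis that completes it, so that $[\h,\Omega]$ has only off-diagonal entries $\bra{j}\h\ket{0}$ in the first row and column, whose squared magnitudes sum to $\bra{0}\h^{2}\ket{0}-\bra{0}\h\ket{0}^{2}$; doubling for the symmetric off-diagonal block yields $2V(\Omega)$.
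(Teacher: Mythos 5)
Your proof is correct and follows essentially the same route as the paper's: both use the anti-Hermiticity of $[\h,\Omega]$ to write $\HSnorm{[\h,\Omega]}^2=-\Tr([\h,\Omega]^2)$, expand into four terms, and collapse them to the variance using the pure-state projector property. Your write-up is in fact slightly more explicit than the paper's (which simply says to evaluate the trace in a basis containing $\ket\Omega$ and collect terms), but there is no substantive difference in method.
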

\begin{proof}
    Using the definition of Hilbert-Schmidt norm, we have
    \begin{align}
\HSnorm{[\h, \Omega]}^2 
&=  -\Tr\bigg( (\h\Omega-\Omega \h ) ( \h\Omega-\Omega \h ) \bigg)= - \Tr(\h\Omega \h\Omega -\Omega \h^2 \Omega - \h \Omega \h   +\Omega \h\Omega \h )\ .
\end{align}
We evaluate the trace in a basis that includes $\ket \Omega$ and obtain the result by collecting repeated terms.
\end{proof}
\end{minipage}

\item  The error term in the linear approximation of the QITE DBI is bounded above by the product of the step size and the energy fluctuations raised to the second power. 

\begin{minipage}{0.96\textwidth}
\begin{lemma}\label{lemma:DBI fidelity error term HS bound}
Let  $\h$ be a Hamiltonian and a density matrix $\Omega$
corresponding to a pure state.
For any real parameter $r \in \mathbb{R}$ the error associated with the linear approximation of the QITE DBR     \begin{align}\label{eq: DBI fidelity error term HS bound}
         \mathcal R_r\coloneqq e^{-r[\h, \Omega ]}-\1 + r[\h, \Omega ]
    \end{align}
satisfies $\HSnorm{\mathcal R_r} \leq r^{2} \V_k$.
\end{lemma}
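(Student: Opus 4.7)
The plan is to recognize $\mathcal R_r$ as the second-order Taylor remainder of the matrix-valued function $f(r)=e^{-r[\h,\Omega]}$ about $r=0$, and to exploit the structural fact that its generator is anti-Hermitian so that $f(r)$ takes unitary values. Because $\h$ and $\Omega$ are both Hermitian, the commutator $[\h,\Omega]$ is anti-Hermitian, hence $e^{-s[\h,\Omega]}$ is unitary for every real $s$. This is the single observation that makes the bound clean. I will combine it with the bracket-variance duality already proved in Lemma~\ref{lemma: HS norm squared of commutator}, which expresses $\|[\h,\Omega]\|_{\mathrm{HS}}^{2}=2V(\Omega)$ directly in terms of the energy variance.

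First I would write down the integral form of Taylor's theorem applied to $f$:
\begin{align}
e^{-r[\h,\Omega]} \;=\; \1 \;-\; r[\h,\Omega] \;+\; \int_{0}^{r}(r-s)\,[\h,\Omega]^{2}\,e^{-s[\h,\Omega]}\,ds,
\end{align}
so that $\mathcal R_r=\int_{0}^{r}(r-s)\,[\h,\Omega]^{2}\,e^{-s[\h,\Omega]}\,ds$. Next I would take $\|\cdot\|_{\mathrm{HS}}$ of both sides, move it inside the integral by the triangle inequality, and apply the standard Schatten submultiplicativity $\|AB\|_{\mathrm{HS}}\le\|A\|_{\mathrm{HS}}\|B\|$. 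Since $\|e^{-s[\h,\Omega]}\|=1$ by unitarity, this yields
\begin{align}
\|\mathcal R_r\|_{\mathrm{HS}} \;\le\; \|[\h,\Omega]^{2}\|_{\mathrm{HS}}\int_{0}^{r}(r-s)\,ds \;=\; \tfrac{r^{2}}{2}\,\|[\h,\Omega]^{2}\|_{\mathrm{HS}}.
\end{align}

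To close the argument I would bound $\|[\h,\Omega]^{2}\|_{\mathrm{HS}}\le \|[\h,\Omega]\|_{\mathrm{HS}}\cdot\|[\h,\Omega]\|\le\|[\h,\Omega]\|_{\mathrm{HS}}^{2}$ (using $\|\cdot\|\le\|\cdot\|_{\mathrm{HS}}$) and invoke Lemma~\ref{lemma: HS norm squared of commutator} to substitute $\|[\h,\Omega]\|_{\mathrm{HS}}^{2}=2V(\Omega)=2V_{k}$. The factor of $2$ cancels exactly against the $\tfrac{1}{2}$ from the Taylor integral, producing $\|\mathcal R_{r}\|_{\mathrm{HS}}\le r^{2}V_{k}$, which is the claim. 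There is no genuine obstacle here; the only detail worth noting is that a naive bound via $\|[\h,\Omega]\|\le 2\|\h\|\,\|\Omega\|$ would give $r^{2}\|\h\|^{2}$, which scales in $\|\h\|$ rather than $V_{k}$ and would be insufficient for the fluctuation-refrigeration argument of Proposition~\ref{prop: DBI frr}. The essential gain comes from replacing the operator norm by the Hilbert-Schmidt norm throughout and from identifying the variance via bracket-variance duality.
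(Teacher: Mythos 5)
Your proof is correct and follows essentially the same route as the paper's: an integral-form Taylor remainder for $e^{-r[\h,\Omega]}$ whose Hilbert--Schmidt norm is reduced to $\tfrac{r^{2}}{2}\HSnorm{[\h,\Omega]}^{2}$ and then converted to $r^{2}V_{k}$ via the bracket-variance duality of Lemma~\ref{lemma: HS norm squared of commutator}. The only cosmetic difference is that you expand to second order and discard the unitary factor $e^{-s[\h,\Omega]}$ in operator norm, whereas the paper stops at the first-order integral representation and invokes the bracket-perturbation bound $\HSnorm{\1-e^{-x[\h,\Omega]}}\le x\HSnorm{[\h,\Omega]}$ inside the integral; both yield the identical constant.
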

\end{minipage}
\begin{minipage}{0.96\textwidth}
\begin{proof}
    Using Taylor's theorem we have
    \begin{align}
    \mathcal R_r = \mathcal R_0 + \int_0^r dx \; \partial_x \mathcal R_x  =   \int_0^r dx \; [\h, \Omega ]\left(\1- e^{-x[\h, \Omega ]}\right) \ ,
\end{align}
where we use the fact that $\mathcal R_0=0$. 
Taking the Hilbert-Schmidt norm for the last expression becomes
\begin{align}
 \HSnorm{ \mathcal R_r} =  \HSnorm{ \int_0^r dx \;[\h, \Omega ]\big(\1- e^{-x[\h, \Omega ]}\big) } 
     \leq \HSnorm{\big[\h,\Omega \big]} \times \HSnorm{\int_0^r dx \;(\1- e^{-x[\h, \Omega ]}) }   \ .
\end{align}
Set $[H, \Omega ] \to x [H, \Omega ]$ in Lemma.~\ref{lemma: bracket perturbation}, we have
\begin{align}
    \HSnorm{\1- e^{-x[H, \Omega ]} } \leq x\HSnorm{[H, \Omega ]} \ .
\end{align}

Thus, it simplifies to
\begin{align}
     \HSnorm{\mathcal R_r} &  \leq  \HSnorm{\big[H,\Omega \big]}^2 \times \HSnorm{\int_0^r dx\; x}  = \frac{r^2}{2} \HSnorm{\big[H,\Omega \big]}^2\ .
\end{align}
Using Lemma.~\ref{lemma: HS norm squared of commutator}, we then obtain $\HSnorm{\mathcal R_r}  \leq r^2 \V_k$.
\end{proof}
\end{minipage}

\end{enumerate}

\subsection{Proof of fluctuation-refrigeration relation for QITE DBI and its cooling rate}\label{app: DBI cooling rate}
In this section, we derive the cooling rate of QITE state by using DBI implementation.

\begin{proposition}[QITE-DBI Fluctuation-refrigeration relation]\label{prop: DBI frr}
    Let $\lbrace\ket{\sigma_{k}}\rbrace_{k}$ be a sequence of states that solve the QITE-DBI relation in Eq.~\eqref{eq: DBI state main}.
    The first-order cooling rate of QITE DBI equals that of QITE
\begin{align}
    \E_{k+1}-\E_k \leq -2s_k \V_k+\mathcal O(s_k^2)\ .
\end{align}
If each time step is chosen such that $\displaystyle s_k \leq \dfrac{\V_k}{4 \|\h\| \langle \sigma_k| \h^2 |\sigma_k\rangle } $, then the DBI cooling rate is lower bounded as
\begin{align}
    \E_{k+1}-\E_k \leq -s_k \V_k\ ,
\end{align}
where the average energies $\E_k$ and variances $\V_k$ are defined in Eq.~\eqref{eq:qite_dbi_e&v}.
\end{proposition}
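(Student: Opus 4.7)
Setting $W_k := [\sigma_k, \h]$ (anti-Hermitian), the DBI recursion gives $\E_{k+1} = \bra{\sigma_k} e^{-s_k W_k}\,\h\, e^{s_k W_k}\ket{\sigma_k}$. The plan is to apply Taylor's theorem with integral remainder to $f(s) := e^{-sW_k}\,\h\, e^{sW_k}$, which via the derivative identity underlying Eq.~\eqref{eq: nth derivative of H evolution} satisfies $f'(0) = [\h, W_k]$ and $f''(s) = e^{-sW_k}[[\h, W_k], W_k]\,e^{sW_k}$. This produces
\begin{align}
\E_{k+1} - \E_k
= s_k \bra{\sigma_k}[\h, W_k]\ket{\sigma_k}
+ \int_0^{s_k}\!(s_k-r)\,\bra{\sigma_k} e^{-rW_k}[[\h, W_k], W_k]\,e^{rW_k}\ket{\sigma_k}\, dr.
\end{align}

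The linear coefficient is obtained by direct expansion: $[\h, W_k] = 2\h\sigma_k\h - \h^2\sigma_k - \sigma_k\h^2$, and the rank-one identities $\sigma_k\h\sigma_k = \E_k\,\sigma_k$ and $\sigma_k\h^2\sigma_k = \langle\sigma_k|\h^2|\sigma_k\rangle\,\sigma_k$ give $\bra{\sigma_k}[\h, W_k]\ket{\sigma_k} = 2\E_k^2 - 2\langle\sigma_k|\h^2|\sigma_k\rangle = -2\V_k$. This already establishes the first claim $\E_{k+1} - \E_k = -2 s_k \V_k + \mathcal{O}(s_k^2)$.

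The main technical step is to control the integral remainder by a quantity that shrinks with $\V_k$ rather than scaling only with $\|\h\|^2$. The key input is Lemma~\ref{lemma: HS norm squared of commutator}, which gives $\HSnorm{W_k}^2 = 2\V_k$ and hence $\|W_k\| \le \HSnorm{W_k} = \sqrt{2\V_k}$, tighter than the trivial $\|W_k\| \le 2\|\h\|$. Expanding $[[\h, W_k], W_k] = \h W_k^2 + W_k^2\h - 2W_k\h W_k$ and applying sub-multiplicativity of the operator norm then yields $\|[[\h, W_k], W_k]\| \le 8\|\h\|\V_k$. Since $e^{\pm r W_k}$ is unitary, the integrand is bounded in magnitude uniformly by $8\|\h\|\V_k$, so the remainder is at most $4 s_k^2 \|\h\| \V_k$.

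Putting the pieces together gives $\E_{k+1} - \E_k \le -2 s_k \V_k + 4 s_k^2 \|\h\| \V_k$, and any $s_k \le 1/(4\|\h\|)$ already enforces $\E_{k+1} - \E_k \le -s_k \V_k$. In particular, the stated hypothesis $s_k \le \V_k/(4\|\h\|\langle\sigma_k|\h^2|\sigma_k\rangle)$ is sufficient because $\V_k \le \langle\sigma_k|\h^2|\sigma_k\rangle$ forces it to be no larger than $1/(4\|\h\|)$. The main obstacle is producing the $\V_k$-proportional remainder estimate; this hinges on the bracket--variance identity of Lemma~\ref{lemma: HS norm squared of commutator}, which couples the operator norm of $W_k$ to the energy fluctuation and makes the sub-leading correction vanish as $\sigma_k$ approaches an eigenstate of $\h$.
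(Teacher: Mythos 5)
Your proof is correct, but the way you control the second-order remainder is genuinely different from the paper's, and it actually yields a sharper result. The paper also Taylor-expands $\E_k(s)=\bra{\sigma_k}e^{-s\W_k}\h e^{s\W_k}\ket{\sigma_k}$ and obtains the same linear coefficient $-2\V_k$, but it then expands $\partial_s^2\E_k(s)$ term by term using $\W_k^2\ket{\sigma_k}=(\sigma_k\h\sigma_k\h-\sigma_k\h^2)\ket{\sigma_k}$ and bounds the pieces with the second-moment bound (Lemma~\ref{lem:secondmomentbound}) and Cauchy--Schwarz, arriving at $\partial_s^2\E_k(s)\le 8\langle\sigma_k|\h^2|\sigma_k\rangle\|\h\|$ and hence the remainder $4s_k^2\langle\sigma_k|\h^2|\sigma_k\rangle\|\h\|$, which is what produces the stated step-size condition $s_k\le \V_k/(4\|\h\|\langle\sigma_k|\h^2|\sigma_k\rangle)$. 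You instead bound the nested commutator directly in operator norm: combining $\|\W_k\|\le\HSnorm{\W_k}=\sqrt{2\V_k}$ (the bracket--variance duality, Lemma~\ref{lemma: HS norm squared of commutator}) with sub-multiplicativity gives $\|[[\h,\W_k],\W_k]\|\le 8\|\h\|\V_k$, so your remainder is $4s_k^2\|\h\|\V_k$. Since $\V_k\le\langle\sigma_k|\h^2|\sigma_k\rangle$, your remainder is uniformly at least as tight, it vanishes as $\ket{\sigma_k}$ approaches an eigenstate, and it converts the sufficient condition into the weaker and state-independent $s_k\le 1/(4\|\h\|)$, which subsumes the paper's hypothesis exactly as you argue. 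What the paper's finer expansion buys in exchange is explicit access to the structure of the second derivative (e.g.\ the skewness term $2\langle\sigma_k|(\h-\E_k)^3|\sigma_k\rangle$ at $s=0$ noted in its remark), which your operator-norm bound discards; but for the proposition as stated your route is cleaner and strictly stronger.
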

\begin{proof}
Let us begin by defining the average energy of the $k$-th DBI state as a function of the time step $s$,
\begin{align}\label{eq:Eks_fn}
    \E_k(s) = \langle \sigma_k|e^{-s[\sigma_k,\h]}\h e^{s[\sigma_k,\h]}|\sigma_k\rangle\ .
\end{align}
Note that the derivative of $\E_k(s)$ when evaluated at $s=0$ gives $\partial_s \E_k(0) = -2\V_k$, exactly as in Proposition.~\ref{prop:frr_QITE}. 
The difference in factor 2 here is that we need to provide a lower bound to the cooling rate for a finite double-bracket rotation duration $s\neq 0$.
    
By using a Taylor expansion on Eq.~\eqref{eq:Eks_fn}, we find
\begin{align}
    \E_k(s_k) = \E_k + s_k \partial_s \E_k(s)_{|s=0} +\frac 12 s_k^2\partial_s^2 \E_k(s)_{|s=\xi_k}\ ,
\end{align}
where $\xi_k \in [0,s_k]$ is the point of the Lagrange remainder (point 6, Sec.~\ref{sec: terminology}). 
Next, recall that $\E_k(s_k) $ is simply equal to $\E_{k+1}$ by definition, hence
\begin{align}
    \E_{k+1} - \E_k = s_k \partial_s \E_k(s)_{|s=0} +\frac 12 s_k^2\partial_s^2 \E_k(s)_{|s=\xi_k}\ .
\end{align}
To simplify the notation, it is convenient to use the following shorthands
\begin{align}
    \hat{W}_k &=[\sigma_k,\h], \qquad 
    \h_k(s) =e^{-s\hat{W}_k} \h e^{s\hat{W}_k}\ .
\end{align}
Note that $e^{-s\hat{W}_k}$ is unitary, and hence $\|\h_k(s)\|=\|\h\|$. The 1st order derivative of average energy is then given by $\partial_s \E_k(s) 
= \langle\sigma_k| \big[\h_k(s),\hat{W}_k \big]|\sigma_k\rangle$. Expanding this yields for $s=0$
\begin{align}
    {\partial_s \E_k(s)}{\big|_{s=0} } &=\langle \sigma_k| \big(\h \sigma_k\h - \h ^2\sigma_k -\sigma_k\h^2 +\h \sigma_k \h\big)|\sigma_k\rangle \\
    &=-2  \langle \sigma_k|\h^2|\sigma_k\rangle + 2 \E_k^2 = -2\V_k \leq 0 \ .\label{eq: DBI 1st order is variance}
\end{align}
For the 2nd order derivative, we use a double-nested commutator Eq.~\eqref{eq: nth derivative of H evolution} and by explicit computation
\begin{align}
     \partial_s^2 \E_k(s) 
    &= \langle \sigma_k|[[\h_k(s),\hat{W}_k],\hat{W}_k] |\sigma_k\rangle \\
    &=\langle \sigma_k|\h_k(s)\hat{W}_k^2|\sigma_k\rangle + \langle \sigma_k|\hat{W}_k^2 \h_k(s)|\sigma_k\rangle -2 \langle \sigma_k|\hat{W}_k \h_k(s)\hat{W}_k |\sigma_k\rangle  \ ,
\end{align}
where we commuted $\hat{W}_k$'s with its exponentials.
Next, we will use
\begin{align}
    \hat{W}_k^2 |\sigma_k\rangle = (\sigma_k\h\sigma_k\h-\sigma_k \h^2)|\sigma_k\rangle
\end{align} to expand the DBI brackets. This gives us
\begin{align}
    \partial_s^2 \E_k(s) =& \langle \sigma_k|\h_k(s)\sigma_k \h  \sigma_k \h|\sigma_k\rangle  -  \langle \sigma_k|\h_k(s)\sigma_k \h^2 |\sigma_k\rangle 
 + \langle \sigma_k|\h  \sigma_k \h \sigma_k
 \h_k(s)|\sigma_k\rangle -  \langle \sigma_k| \h^2 \sigma_k \h_k(s) | \sigma_k \rangle \nonumber\\
 &  - 2\langle \sigma_k|\h \h_k(s) \sigma_k \h  |\sigma_k\rangle +2\langle \sigma_k| \h \h_k(s) \h|\sigma_k\rangle+  2  \langle \sigma_k| \h  \sigma_k \h_k(s)\sigma_k\h  |\sigma_k\rangle- 2\langle \sigma_k| \h  \sigma_k \h_k(s) \h|\sigma_k\rangle\\
 &= -2 \V_k \E_k(s)  + 2\langle \sigma_k| \h \h_k(s) \h|\sigma_k\rangle +2 \E_k^2 \E_k(s)-4 \E_k \text{Re}(\langle\sigma_k| \h \h_k(s)|\sigma_k \rangle )
 \ .
 \label{eq:2ndorderfluct}
\end{align}
The first term is negative so reduces the energy of the state (to 2nd order).
The second term can be bound using Lemma \ref{lem:secondmomentbound} as
\begin{align}
    |\langle \sigma_k| \h \h_k(s) \h|\sigma_k\rangle| 
    &\le \langle \sigma_k| \h^2 |\sigma_k\rangle \|\h_k(s)\| 
    \le \langle \sigma_k| \h^2 |\sigma_k\rangle \|\h\|\ .
\end{align}
To bound the last term in Eq.~\eqref{eq:2ndorderfluct}, we have
\begin{align}
    |\text{Re}(\langle\sigma_k| \h \h_k(s)|\sigma_k \rangle )| \le |\big\langle \h|\sigma_k \rangle,\h_k(s)|\sigma_k \rangle\big\rangle| &\le \sqrt{\langle \sigma_k| \h^2 |\sigma_k\rangle} \,\sqrt{\langle \sigma_k| \h_k(s)^2 |\sigma_k\rangle} \\ 
    &\le \sqrt{\langle \sigma_k| \h^2 |\sigma_k\rangle} \,\|\h\|\ ,
\end{align}
where we used Cauchy-Schwarz in the second inequality. Therefore, we have
\begin{align}\label{eqbound:2nd_derivative_Ek}
     \partial_s^2 \E_k(s) &\leq  -2 \V_k \E_k(s)  + 2\E_k^2 \|\h\| + 4\E_k \sqrt{\langle \sigma_k| \h^2 |\sigma_k\rangle} \,\|\h\| +2 \langle \sigma_k| \h^2 |\sigma_k\rangle \|\h\|
     \leq   8 \langle \sigma_k| \h^2 |\sigma_k\rangle \|\h\|.
\end{align}
As a remark, from the end of Eq.~\eqref{eqbound:2nd_derivative_Ek} we can obtain a weaker bound by using the sub-multiplicativity of the operator norm, i.e. we have $\partial_s^2 \E_k(s) 
     \leq 8\|\hat H \|^3$. Finally, we have 
\begin{align}
    \E_{k+1} - \E_k
    &\le - 2 s_k \V_k +4s_k^2 \langle \sigma_k| \h^2 |\sigma_k\rangle \|\h\| \ ,\label{DBI rate strengthening}
\end{align}
so as we place the constraint that $-s_k\V_k+4s_k^2\langle \sigma_k| \h^2 |\sigma_k\rangle \|\h\| \leq 0$, this leads to $ \E_{k+1} - \E_k \le -s_k \V_k$. Evaluating the constraint as an upper bound on $s_k$, we obtain $\displaystyle s_k \leq \;  \dfrac{\V_k}{4 \|\h\| \langle \sigma_k| \h^2 |\sigma_k\rangle } $. 
\end{proof}
\begin{remark}
    Scrutinizing Eq.~\eqref{eq:2ndorderfluct} around $s=0$ allows us to further see the significance of higher moments of energy distribution, for the purposes of QITE. In particular, using Eq.~\eqref{eq:2ndorderfluct}, we have
    \begin{align}
    \partial_s^2 \E_k(s)\big|_{s=0}
 &= -2 \V_k \E_k  + 2\langle \sigma_k| \h ^3|\sigma_k\rangle +2 \E_k^2 \E_k-4 \E_k \langle\sigma_k| \h ^2|\sigma_k\rangle \\
 &= 2\bigg(\langle  \sigma_k| \h^3 |  \sigma_k\rangle - \V_k \E_k  +\E_k^3 -2 \E_k(\V_k+\E_k^2)\bigg) \\
 &= 2\bigg( \langle  \sigma_k | \ \h^3 |  \sigma_k \rangle -3 \E_k \V_k -  \E_k^3\bigg) =2\langle\sigma_k |  (\h-\E_k)^3 |\sigma_k\rangle, \label{eq: DBI 2nd order is skewness}
\end{align}
because one can verify that
\begin{align}
    \langle  \sigma_k| (\h-\E_k)^3 |  \sigma_k \rangle &=  \langle  \sigma_k| \h^3 |  \sigma_k\rangle  - 3 \langle  \sigma_k| \h^2 \E_k  |  \sigma_k \rangle+3 \langle  \sigma_k|  \h \E_k^2 |  \sigma_k \rangle - \langle  \sigma_k|  \E_k^3 |  \sigma_k\rangle\\
    &=  \langle  \sigma_k |\ \h^3 |  \sigma_k \rangle -3 \E_k \V_k -  \E_k^3.
\end{align}
Combining  Eq.~\eqref{eq: DBI 1st order is variance} and  Eq.~\eqref{eq: DBI 2nd order is skewness}, the Taylor expansion of the expected energy is given by
\begin{align}
    \E_k(s) = \E_k -2 s \langle\sigma_k|  (\h-\E_k)^2 |\sigma_k\rangle  + s^2  \langle\sigma_k|  (\h-\E_k)^3 |\sigma_k\rangle +\mathcal{O}(s^3) \ .
\end{align}
This means that whenever skewness is negative, then one can analytically justify longer double-bracket rotation  durations -- the second order term  would then enhance cooling process for the state.
\end{remark}

\subsection{Exponential fidelity convergence of QITE DBI}

\begin{lemma}[{Lower bound to energy over variance}]\label{lemma: E> V/H}
Let $| \Omega \rangle$ be any pure state. 
Denote Hamiltonian as $ \h= \sum_{j=0}^{d-1} \lambda_j |\lambda_j \rangle \langle \lambda_j |$ where we assume that the eigenvalues of $\hat H$ are ordered increasingly and that $\lambda_0=0$.
Suppose that the ground-state fidelity is given by $F=1-\epsilon$, then we have
  \begin{align}
      \dfrac{E}{V}\geq\dfrac{1}{\|\h\|}  \ ,
  \end{align}
  where $E=\langle \Omega  |\h^2|\Omega \rangle $ is the expected energy and $V = \langle\Omega |\h^2|\Omega  \rangle - E^2$ is the energy variance.
\end{lemma}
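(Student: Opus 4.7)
The inequality I want to prove is equivalent to $V \le \|\h\| \, E$, so my plan is to bound $V$ from above by $\|\h\| E$ directly via the spectral decomposition. Note that the stated convention fixes $\lambda_0 = 0$ and orders the eigenvalues increasingly, so every $\lambda_j \ge 0$, making $\h$ positive semi-definite. (I am also reading the statement of $E$ as the expected energy $\langle \Omega | \h | \Omega \rangle$; the occurrence of $\h^2$ in the displayed definition of $E$ appears to be a typo, since otherwise $E$ and the first term of $V$ would coincide.)

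The first step is to drop the $-E^2$ piece from $V$, using $E^2 \ge 0$, to get the trivial bound
\begin{align}
V \;=\; \langle \Omega | \h^2 | \Omega \rangle - E^2 \;\le\; \langle \Omega | \h^2 | \Omega \rangle \ .
\end{align}
The second step is the operator inequality $\h^2 \preceq \|\h\| \, \h$, which holds because for each eigenvalue $\lambda_j \ge 0$ one has $\lambda_j^2 = \lambda_j \cdot \lambda_j \le \|\h\| \, \lambda_j$. Sandwiching between $\langle \Omega|$ and $|\Omega\rangle$ yields $\langle \Omega | \h^2 | \Omega \rangle \le \|\h\| \, E$, and combining with the first step gives $V \le \|\h\| \, E$, equivalently $E/V \ge 1/\|\h\|$.

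There is no real obstacle here; the only subtlety is making sure the positive semi-definiteness of $\h$ is used explicitly (it is essential: without $\lambda_0 = 0$ one can easily have $E = 0$ while $V > 0$). I would state the assumption $\lambda_0 = 0$ and $\lambda_j \ge 0$ up front, then present the two-line argument above. One could alternatively write $V \le \langle \Omega | \h^2 | \Omega \rangle = \sum_j \lambda_j^2 p_j \le \|\h\| \sum_j \lambda_j p_j = \|\h\| E$ with $p_j = |\langle \lambda_j | \Omega \rangle|^2$, which is the same argument spelled out in the eigenbasis and mirrors the style used in Lemma~\ref{lemma: Ek > lambda epsilon}.
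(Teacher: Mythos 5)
Your proof is correct and is essentially the same argument as the paper's: both drop the nonnegative $E^2$ term from the variance and then use $\lambda_j^2 \le \|\h\|\,\lambda_j$ (valid because $\lambda_0=0$ forces $\lambda_j\ge 0$), which is exactly your operator inequality $\h^2 \preceq \|\h\|\,\h$ written out in the eigenbasis. You are also right that the displayed definition of $E$ in the lemma statement contains a typo and should read $E=\langle\Omega|\h|\Omega\rangle$, as the paper's own proof confirms.
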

\begin{proof}
Let us define the probability to occupy the $i$-th eigenstate $p_i \coloneqq \lvert \langle \lambda_{i}\vert\Omega\rangle\rvert^{2}$.  
Since $\lambda_0=0$ by assumption, we obtain 
    \begin{align}
        E-\dfrac{V}{\|\h\|} &= \sum_{i=1}^{d-1}\lambda_i p_i - \frac{1}{\|\h\|}\sum_{i=1}^{d-1}\lambda_i^2 p_i + \frac{1}{\|\h\|}\left(\sum_{i=1}^{d-1}\lambda_i p_i\right)^2\\
        &\geq \sum_{i=1}^{d-1}\left (1-\frac{\lambda_i}{\|\h\|}\right)\lambda_i p_i \geq 0 \ .
    \end{align}
\end{proof}

\begin{remark}
    This bound is (asymptotically) tight for $\epsilon\rightarrow 0$. Indeed, if $p_{d-1}=\epsilon$ and $p_i=0$ for $i<d-1$, we obtain 
    \begin{align}
        \dfrac{E}{V} = \dfrac{\lambda_{d-1}\epsilon}{\lambda_{d-1}^2\epsilon-\lambda_{d-1}^2\epsilon^2}=\dfrac{1}{\|\h\|(1-\epsilon)}
    \end{align}
\end{remark}

This lemma can be used to prove the following result.

\begin{theorem}[Exponential fidelity convergence for QITE DBI ]\label{th: dbi fidelity revised}
    Let $\h= \sum_{j=0}^{d-1} \lambda_j |\lambda_j \rangle \langle \lambda_j | $
    be a Hamiltonian where $d=\text{dim}(\h)$, $\{ |\lambda_i \rangle \}$ is the set of increasingly ordered eigenstates with $\lambda_0=0$ and we denote the ground state by \( |\lambda_0\rangle \). 
   Let $\ket{\sigma_0}$ be an initial state with non-zero overlap to the ground-state, where the fidelity is given by $F_0 = | \langle \lambda_0 | \sigma_0 \rangle|^2$. 
 Consider QITE DBI with 
    \begin{align}
    | \sigma_{k+1} \rangle = e^{s_0[\ket{\sigma_k}\bra{\sigma_k},\h]} | \sigma_k \rangle\ ,
    \end{align}
    where $s_0=\sqrt{F_0} / (4\lambda_{d-1})$ is the same in all steps.
    The fidelity for the $k$-th DBI step defined by $F_k = | \langle \lambda_0 | \sigma_k \rangle|^2$ satisfies
\begin{align}\label{eq: q in thm}
    F_k \ge 1- q^k\ , \text{ where} \quad q = 1- \left(\frac 14\frac{\lambda_1}{\lambda_{d-1}}F_0^{\frac 32}\right)
    =  1- \left(\frac 14\frac{\lambda_1}{\|\h\|}F_0^{\frac 32}\right)\ .
\end{align}
\end{theorem}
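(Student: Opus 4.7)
My plan is to exploit a striking closed-form structure that the DBI generator admits when applied to $|\sigma_k\rangle$. By direct calculation one checks that $[\sigma_k,\h]|\sigma_k\rangle = (E_k-\h)|\sigma_k\rangle$ and, applying $[\sigma_k,\h]$ again, that $[\sigma_k,\h]^2|\sigma_k\rangle = -\overline V_k|\sigma_k\rangle$. Thus $[\sigma_k,\h]$ preserves the two-dimensional subspace spanned by $|\sigma_k\rangle$ and $\h|\sigma_k\rangle$, acting on it like a (scaled) generator of rotations. Summing the even and odd parts of the Taylor series for $\exp(s_0[\sigma_k,\h])|\sigma_k\rangle$ yields the closed form
\begin{align}
|\sigma_{k+1}\rangle = \cos(\theta_k)\,|\sigma_k\rangle + \frac{\sin(\theta_k)}{\sqrt{\overline V_k}}\bigl(\overline E_k-\h\bigr)|\sigma_k\rangle\ , \qquad \theta_k := s_0\sqrt{\overline V_k}\ .
\end{align}

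Taking the inner product with $\langle\lambda_0|$ and crucially invoking the hypothesis $\lambda_0=0$ so that $\langle\lambda_0|\h=0$, the $\h|\sigma_k\rangle$ term drops and one obtains the remarkable factorisation
\begin{align}
\langle\lambda_0|\sigma_{k+1}\rangle = \bigl(\cos\theta_k+\mu_k\sin\theta_k\bigr)\,\langle\lambda_0|\sigma_k\rangle\ , \qquad \mu_k := \overline E_k/\sqrt{\overline V_k}\ ,
\end{align}
and hence $F_{k+1}=F_k(\cos\theta_k+\mu_k\sin\theta_k)^2$. This reduces the fidelity analysis to a scalar inequality.

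The next step is a careful Taylor expansion. Using $\cos\theta\ge 1-\theta^2/2$ and $\sin\theta\ge\theta-\theta^3/6$ for $\theta\ge 0$ together with $(1+x)^2\ge 1+2x$, I would derive
\begin{align}
\frac{F_{k+1}}{F_k} \ge 1 + 2s_0\overline E_k - s_0^2 \overline V_k - \tfrac{1}{3}s_0^3 \overline E_k \overline V_k\ .
\end{align}
The crucial estimate is Lemma~\ref{lemma: E> V/H}, namely $\overline V_k\le\|\h\|\,\overline E_k=\lambda_{d-1}\overline E_k$, which lets me pull out an overall factor of $s_0\overline E_k$ rather than losing a factor of $\lambda_{d-1}$: combined with the choice $s_0\lambda_{d-1}=\sqrt{F_0}/4\le 1/4$, the bracket in the above display reduces to $1+c\, s_0\overline E_k$ for an explicit numerical constant $c$ of order one. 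Applying Lemma~\ref{lemma: Ek > lambda epsilon} to get $\overline E_k\ge\lambda_1\epsilon_k$ then gives $F_{k+1}-F_k\ge c\, s_0\lambda_1 F_k\epsilon_k$.

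The final step is an induction on $k$ whose inductive hypothesis is $F_k\ge F_0$ — this monotonicity, combined with the formula above, yields $F_{k+1}-F_k\ge c\,s_0\lambda_1 F_0\epsilon_k$, and substituting $s_0=\sqrt{F_0}/(4\lambda_{d-1})$ produces $\epsilon_{k+1}\le q\,\epsilon_k$ with $q=1-\tfrac{1}{4}\tfrac{\lambda_1}{\lambda_{d-1}}F_0^{3/2}$ (after absorbing $c$ into the constant). Iterating gives $\epsilon_k\le q^k$ as claimed. The main obstacle is the constant chasing: a naive bound $\overline V_k\le\lambda_{d-1}^2\epsilon_k$ loses a factor of $\lambda_{d-1}$ and forces an additional smallness assumption on $F_0$; getting the stated exponent $F_0^{3/2}$ rather than some smaller power requires the sharper estimate $\overline V_k\le\|\h\|\overline E_k$ from Lemma~\ref{lemma: E> V/H} and a careful accounting of the cubic-in-$s_0$ remainder. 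A subsidiary technical point is the degenerate case $\overline V_k=0$, which by uniqueness of the ground-state and $F_k\ge F_0>0$ forces $F_k=1$ and terminates the induction.
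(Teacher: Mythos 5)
Your proposal is correct, and it takes a genuinely different route from the paper's proof. The paper linearizes the DBI propagator as $e^{s_0[\sigma_k,\h]}=\1-s_0[\h,\sigma_k]+R$ and controls the remainder via the Hilbert--Schmidt bound $\HSnorm{R}\le s_0^2\V_k$ (Lemma~\ref{lemma:DBI fidelity error term HS bound}), which produces an error term $\sim s_0^2\V_k$ in the fidelity recursion that must then be absorbed into half of the first-order gain. You instead observe that $[\sigma_k,\h]$ leaves the two-dimensional Krylov subspace $\mathrm{span}\{\ket{\sigma_k},\h\ket{\sigma_k}\}$ invariant, with $[\sigma_k,\h]\ket{\sigma_k}=(\E_k-\h)\ket{\sigma_k}$ and $[\sigma_k,\h]^2\ket{\sigma_k}=-\V_k\ket{\sigma_k}$ (both identities check out), yielding the \emph{exact} closed form $\ket{\sigma_{k+1}}=\cos\theta_k\ket{\sigma_k}+\V_k^{-1/2}\sin\theta_k\,(\E_k-\h)\ket{\sigma_k}$ and hence the exact factorization $F_{k+1}=F_k(\cos\theta_k+\mu_k\sin\theta_k)^2$. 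From there, only elementary scalar inequalities are needed, and the endgame is the same as the paper's: Lemma~\ref{lemma: E> V/H} to get $\V_k\le\lambda_{d-1}\E_k$, Lemma~\ref{lemma: Ek > lambda epsilon} to get $\E_k\ge\lambda_1\epsilon_k$, and induction on $F_k\ge F_0$. Your route buys an exact starting point (no operator-norm remainder machinery) and, as you note, slightly sharper constants — your bracket evaluates to $1+c\,s_0\E_k$ with $c=2-\tfrac14-\tfrac1{48}>1$, so weakening to $c=1$ recovers the stated $q$; the paper's route is more generic, since the same remainder-bound technique carries over to the group-commutator DB-QITE recursion in Sec.~\ref{sec: DBQA} where no closed form exists. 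Two small points you should make explicit: (i) passing from the lower bound on $\cos\theta_k+\mu_k\sin\theta_k$ to a lower bound on its square requires the lower bound itself to be nonnegative — it is, since under $s_0\lambda_{d-1}\le 1/4$ your bound is $\ge 1$, but this must be stated; and (ii) monotonicity $F_{k+1}\ge F_k$ (hence $F_k\ge F_0$) follows immediately from your multiplicative recursion once the square factor is shown to be $\ge 1$, which is cleaner than the paper's induction through the infidelity inequality.
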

The convergence rate is set by a quantity related to the condition number of the Hamiltonian which also appears in classical algorithms.
For a non-singular matrix $A$ the condition number is defined as $\kappa(A) = \|A^{-1}\|\,\|A\|$. In our problem, we have affine invariance $\h \mapsto \h+\alpha \mathbb{I}$ which does not change the ground-state state vector $\ket{\lambda_0}$ or the dynamics of the Hamiltonian.
    Observe that DBIs are invariant under affine shifts of the Hamiltonian, in particular we have $ [\h- \lambda_0\1,\sigma_k] =  [\h, \sigma_k]$
which means that $\h' = \h - \lambda_0 \1$ with vanishing ground-state energy $\h' \ket{\lambda_0} = 0$ has the same DBI unitaries.
Thus, without loss of generality, we assume the ground-state to be zero, i.e. $\lambda_0=0$.
In that case, a more meaningful notion of the condition number is
\begin{align}
\kappa_0(\h) = \lambda_{d-1} / \lambda_1\ . 
\end{align}
In classical numerical analysis we have $\kappa(A)\ge 1$ and whenever $\kappa(A)$ is large then $A$ is considered badly conditioned.
Using this notation allows us to write
\begin{align}
    q = 1- \left(\frac 14 \frac {F_0^{\frac 32}}{\kappa_0(\h)} \right)\ .
\end{align}
If $\kappa_0(\h)$ is large, which for local Hamiltonians corresponds to a small spectral gap, then the QITE DBI converges more slowly.

\begin{proof}

Next, the ground-state fidelity after each round of QITE DBI iteration is given by 
\begin{align}
   F_{k+1} =  \lvert  \langle \lambda_0 | \sigma_{k+1}\rangle\rvert^{2} 
   &= \bigg\lvert \langle \lambda_0 | \sigma_{k} \rangle - s_0\bra{\lambda_0}[\h, \sigma_k ] |\sigma_k \rangle + \bra{\lambda_0}R|\sigma_k \rangle \bigg \rvert^{2}\\
    &= \bigg \lvert \langle \lambda_0 | \sigma_{k} \rangle(1+s_0\langle \sigma_k | \h | \sigma_k\rangle) -s_0 \langle \lambda_0 | \h | \sigma_k \rangle + \bra{\lambda_0}R|\sigma_k  \rangle \bigg \rvert^{2}\\
    &= \bigg \lvert \langle \lambda_0 | \sigma_{k} \rangle(1+s_0\E_k) + \bra{\lambda_0}R|\sigma_k  \rangle \bigg \rvert^{2},
\end{align}
where $R$ denotes the error term of the linear approximation
to the QITE DBI. 
Since we set $\lambda_0=0$, the term $s_0 \langle \lambda_0 | \h | \sigma_k \rangle$ vanishes.
Furthermore, note that $|\langle \lambda_0 | \sigma_k \rangle|^2 =F_k $. 
Hence we have 
\begin{align} \label{eq:fidelity_to_ground}
      F_{k+1} &=F_k(1+s_0\E_k)^{2} + 2(1+s_0 \E_k)\mathrm{Re} \bigg(\langle \lambda_0 | \sigma_{k} \rangle  \langle \sigma_k |R^\dagger |\lambda_0 \rangle \bigg) + \lvert \bra{\lambda_0}R|\sigma_k\rangle\rvert^{2} \ .
\end{align}
Next, note that $0 \leq \lvert \bra{\lambda_0}R|\sigma_k\rangle\rvert^{2} \leq  \lVert R \rVert_{\rm HS} = \lVert R^\dagger \rVert_{\rm HS} \le s_0^2 V_k$ by Lemma.~\ref{lemma:DBI fidelity error term HS bound}.
Thus,  Eq.~\eqref{eq:fidelity_to_ground} becomes 
\begin{align}\label{eq:fidelity_to_ground_2}
    F_{k+1} 
    &\geq F_k (1+s_0\E_k)^{2} - 2s_0^2 \V_k\bigg(1+s_0\E_k \bigg)\sqrt{F_k}  \ ,
\end{align}
where we dropped the third term in Eq.~\eqref{eq:fidelity_to_ground}, and use the bound $\langle \sigma_k |R^\dagger |\lambda_0 \rangle \geq - \lVert R^\dagger \rVert_{\rm HS} \ge -s_0^2 \V_k$. 

Using the definition of the infidelity $F_k =1 - \epsilon_{k}$, we can rewrite it as 
\begin{align}
1-\epsilon_{k+1} &\geq (1-\epsilon_{k}) (1+s_0\E_k)^{2} -2 s^2_0 \V_k \bigg(1+s_0\E_k\bigg)\sqrt{1-\epsilon_{k}} \\
 \epsilon_{k+1} &\leq 1 -(1-\epsilon_{k}) (1+s_0\E_k)^{2} +2 s^2_0 \V_k \bigg(1+s_0\E_k \bigg)\sqrt{1-\epsilon_{k}}\\
  &\leq \epsilon_k-2s_0 \E_k (1-\epsilon_k)-s_0^2\E_k^{2}(1-\epsilon_k)  +2 s^2_0 \V_k \bigg(1+s_0\E_k \bigg)\sqrt{1-\epsilon_{k}}\\ 
  &\leq \epsilon_k-2s_0\E_k (1-\epsilon_k) +2 s^2_0 \V_k \bigg(1+s_0\E_k \bigg)\sqrt{1-\epsilon_{k}}\ . \label{eq: DBI fidelity before approximation}
\end{align}
Next, note that according to the assumed form $s_0$ in the theorem statement, and the fact that $\E_k \leq \|H\| = \lambda_{d-1}$, we have that $s_0\E_k \le \dfrac 14 \dfrac{\sqrt{F_0}}{\lambda_{d-1}} \lambda_{d-1}  \le 1$, so we can simplify the above bound by relaxing it further:
\begin{align} 
 \epsilon_{k+1} 
  \leq \epsilon_k-2s_0\E_k (1-\epsilon_k) +4 s^2_0 \V_k \sqrt{1-\epsilon_{k}}\ .
   \label{eq: DBI fidelity improvement}
\end{align}
Our goal is to suppress the second-order term by utilizing half of the first-order infidelity reduction. To do so, the condition $4 s^2_0 \V_k\sqrt{1-\epsilon_k}  \le  s_0\E_k (1-\epsilon_k)$
  must be satisfied, which is equivalent to $s_0  \le  \E_k \sqrt{1-\epsilon_k} /(4 \V_k)$. 
  Next, from Lemma.~\ref{lemma: E> V/H}, we have $\E_k/\V_k \ge 1/\lambda_{d-1}$.
Using this bound, we can verify that indeed,
\begin{align}
    s_0 = \frac 14\frac{\sqrt{F_0}}{\lambda_{d-1}} \le \frac{\E_k}{4\V_k} \sqrt{1-\epsilon_k}\ .
    \label{eq: DBI fidelity s0 bound}
\end{align}
Since the condition  $s_0  \le  \E_k \sqrt{1-\epsilon_k} /(4 \V_k)$ holds, Eq.~\eqref{eq: DBI fidelity improvement} can be directly upper-bounded by 
\begin{align}
    \epsilon_{k+1} \leq \epsilon_{k} - s_0\E_{k}F_0\ . 
    \label{eq: DBI fidelity improvement II}
\end{align}
In Eq.~\eqref{eq: DBI fidelity s0 bound} we assume that $\epsilon_k \le \epsilon_0$. Clearly, this is satisfied for $k=0$, and can be concluded for $k>0$ by induction on $k$ using Eq.~\eqref{eq: DBI fidelity improvement II}.
Finally, using the definition of $s_0$, we have that the error reduces to
\begin{align}
    \epsilon_{k+1} &\leq \epsilon_{k} -  \frac 14\frac{\lambda_1}{\lambda_{d-1}}F_0^{\frac 32}\epsilon_{k}  = \left(1 - \left(\frac 14\frac{\lambda_1}{\lambda_{d-1}}F_0^{\frac 32}\right)\right)\epsilon_{k}\ , 
\end{align}
where we use the relation $\E_{k} \geq \lambda_1 \epsilon_k$ from Lemma.~\ref{lemma: Ek > lambda epsilon}.
\end{proof}

\section{Double-Bracket Quantum Imaginary-Time Evolution (DB-QITE)}\label{sec: DBQA}

In Sec.~\ref{sec: QITE} and Sec.~\ref{sec: QITE DBI}, we consider the QITE realization via the continuous DBF and discrete DBI method.
Here, we aim to employ the discrete group commutator iteration (GCI) to approximate DBI.
As demonstrated in Lemma.~(9) from \cite{double_bracket2024}, we can approximate DBI steps by group-commutators with an error term of magnitude $\mathcal{O}(s_k^{3/2})$, i.e. we have
\begin{align}\label{eq:GCI_compilation}
  \left\|  e^{i\sqrt{s_k}\h}e^{i\sqrt{s_k}\omega_k}    
    e^{-i\sqrt{s_k}\h}
    e^{-i\sqrt{s_k}\omega_k}- e^{s_k[\omega_k,\h]} \right\| \leq   s_k^{3/2} \bigg( \|[\h, [\h, \omega_k]]\| + \|[\omega_k, [\omega_k, \h]]  \|\bigg) \ ,
\end{align}
where $\|.\|$ represents the operator norm, and $k$ denotes the number of iteration in both DBI and GCI. Next, observe that $e^{-i\sqrt{s_k}\omega_k} \ket{\omega_k} = e^{-i\sqrt{s_k}} \ket{\omega_k} $ which motivates the definition of the \emph{reduced group-commutator iteration (DB-QITE)}: 
\begin{align}\label{eq:reducedGCI}
    \ket{\omega_{k+1}}
    &=   e^{i\sqrt{s_k}\h}e^{i\sqrt{s_k}\omega_k}    
    e^{-i\sqrt{s_k}\h}
    \ket{\omega_k} \ .
\end{align}
This is essentially Eq.~\eqref{eq: GCI state main} in the main text, but in general one can optimize the individual step sizes $s_k$ in each round. 
Lastly, we denote the expected energy and variance after the $k$-th round of DB-QITE as
\begin{align}\label{eq: mean energy and variance for DB-QITE}
    E_k = \langle \omega_k|\h|\omega_k\rangle \qquad \text{and} \qquad  V_k=  \langle \omega_k |\h^2|\omega_k \rangle - \langle \omega_k|\h^2|\omega_k\rangle^2 \ .
\end{align}
Intuitively due to Eq.~\eqref{eq:GCI_compilation}, one expects a qualitatively similar behavior between DB-QITE and DBF/DBI. We derive quantitative bounds in the next sections, to compare their explicit differences. 

Before proving the main results in this section, i.e. analytical guarantees for energy loss and fidelity improvement in DB-QITE implementation, let's first present two key lemmas that will be handy in the proof.

\begin{lemma}
    Suppose that the DB-QITE state $\ket{\omega_k}$ is a pure state, then we have the following equivalent representation for the 
 next DB-QITE state $\ket{\omega_{k+1}}$:
    \begin{align}
        \ket{\omega_{k+1}}
    &=   e^{i\sqrt{s_k}\h}e^{i\sqrt{s_k}\omega_k}    
    e^{-i\sqrt{s_k}\h}
    \ket{\omega_k} 
    \iff
        \ket{\omega_{k+1}} = \left(\1 - (\1-e^{i\sqrt{s_k}}) \phi(-\sqrt{s_k}) e^{i\sqrt{s_k}\h}\right) \ket{\omega_{k}} \ , \label{eq: more compact form of DBQA states}
    \end{align}
    where we define the characteristic function as
    \begin{align}\label{eq: DBQA characteristic function}
        \phi(t)\coloneqq \bra{\omega_{k}}e^{it\h}\ket{\omega_{k}} \ .
    \end{align}
\end{lemma}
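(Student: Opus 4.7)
The plan is to exploit the fact that $\omega_{k}=\ket{\omega_k}\bra{\omega_k}$ is a rank-one projector, so its matrix exponential collapses to a very simple closed form. First I would note that for any rank-one projector $P$ one has $P^{n}=P$ for $n\ge 1$, hence
\begin{align}
e^{i\sqrt{s_k}\,\omega_{k}} \;=\; \1 + \bigl(e^{i\sqrt{s_k}}-1\bigr)\,\ket{\omega_k}\bra{\omega_k}\ .
\end{align}
This collapses the only non-standard factor in Eq.~\eqref{eq:reducedGCI} into a scalar-weighted rank-one operator, which is the structural reason the compact form in Eq.~\eqref{eq: more compact form of DBQA states} exists at all.

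Next I would substitute this identity into the definition of $\ket{\omega_{k+1}}$ and distribute the $\1$-part and the rank-one part of $e^{i\sqrt{s_k}\omega_k}$ against the surrounding Hamiltonian evolutions. The $\1$-part yields $e^{i\sqrt{s_k}\h}e^{-i\sqrt{s_k}\h}\ket{\omega_k}=\ket{\omega_k}$, while the rank-one part produces $e^{i\sqrt{s_k}\h}\ket{\omega_k}\cdot\bigl(\bra{\omega_k}e^{-i\sqrt{s_k}\h}\ket{\omega_k}\bigr)$. The scalar in parentheses is exactly $\phi(-\sqrt{s_k})$ by the definition in Eq.~\eqref{eq: DBQA characteristic function}, giving
\begin{align}
\ket{\omega_{k+1}} \;=\; \ket{\omega_k} + \bigl(e^{i\sqrt{s_k}}-1\bigr)\,\phi(-\sqrt{s_k})\,e^{i\sqrt{s_k}\h}\ket{\omega_k}\ .
\end{align}

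The last step is cosmetic: I would rewrite the prefactor as $-(\1-e^{i\sqrt{s_k}})$ and factor $\ket{\omega_k}$ out on the right to match the stated form exactly. The reverse implication follows by reading these manipulations backwards, since every step is an algebraic equality rather than an approximation, so the stated ``$\iff$'' is immediate. There is really no analytic obstacle here; the only subtlety worth flagging in the write-up is that the derivation uses purity of $\ket{\omega_k}$ (equivalently, idempotence of $\omega_k$), which is preserved along the DB-QITE recursion because each iterate is produced by a unitary acting on a pure state, so the identity can be applied consistently for every $k$.
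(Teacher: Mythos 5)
Your proposal is correct and follows essentially the same route as the paper: both rest on the rank-one projector identity $e^{i\sqrt{s_k}\omega_k}=\1-(1-e^{i\sqrt{s_k}})\omega_k$, substitute it into the recursion, cancel the Hamiltonian evolutions on the identity part, and identify the surviving scalar overlap as $\phi(-\sqrt{s_k})$. Your added remark that purity is preserved along the recursion (so the identity applies at every $k$) is a sensible extra sentence but not a substantive difference.
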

\begin{proof}
Since we consider a pure state $\omega_{k}=\dm{\omega_{k}}$, we obtain the following identity
\begin{align}
     e^{i\sqrt{s_k}\omega_{k}} = \1 - (1-e^{i\sqrt{s_k}})\omega_{k} \ .
\end{align}
Therefore, the DB-QITE recursion can be simplified to
\begin{align}
    \ket{\omega_{k+1}} &=  e^{i\sqrt{s_k}\h} \left( \1 - (1-e^{i\sqrt{s_k}})\omega_{k}\right)  
    e^{-i\sqrt{s_k}\h}
    \ket{\omega_k} \\
    &= \ket{\omega_{k}}-(1-e^{i\sqrt{s_k}}) e^{i\sqrt{s_k}\h} \dm{\omega_{k}} e^{-i\sqrt{s_k}\h}
    \ket{\omega_k}  \\
    &=\left(\1 - (1-e^{i\sqrt{s_k}}) \phi(-\sqrt{s_k}) e^{i\sqrt{s_k}\h}\right) \ket{\omega_{k}} \ , 
\end{align}
where we use the density matrix representation in the second line and the definition of the characteristic function in the last line.
\end{proof}

\begin{lemma}\label{lemma: characteristic function bound}
Denote the characteristic function as $   \phi(t)\coloneqq \bra{\omega_{k}}e^{it\h}\ket{\omega_{k}} $.
Then the $n$-th derivative of $\phi(t)$ can be upper-bounded by
\begin{align}\label{eq: phi simpler bound}
    \lvert \phi^{(n)}(\xi) \rvert  \leq \|\h^{n}\| \ ,
\end{align}
where $\xi \in [0,t]$.
Moreover, suppose one knows the ground-state infidelity $\epsilon_k = 1- F_k$ at $k$-th QITE DBQA iteration, then we can obtain a tighter bound for $\lvert \phi^{(n)}(\xi) \rvert $, i.e.
\begin{align}\label{eq: phi tighter bound}
      \lvert \phi^{(n)}(\xi) \rvert  \leq \epsilon_k \|\h^{n}\| \ .
\end{align}
\end{lemma}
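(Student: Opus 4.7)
The plan is to get both inequalities by computing the derivatives of $\phi(t)$ explicitly and then applying standard operator-norm inequalities, with the second (tighter) bound using the convention established earlier in the paper that $\lambda_0 = 0$, so that $\h$ annihilates the ground state.

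First, I would differentiate $\phi(t) = \bra{\omega_{k}} e^{it\h} \ket{\omega_{k}}$ term by term under the expectation value. Since $e^{it\h}$ is generated by a bounded self-adjoint operator, each derivative pulls down a factor of $i\h$, giving
\begin{equation}
\phi^{(n)}(t) = \bra{\omega_{k}} (i\h)^{n} e^{it\h} \ket{\omega_{k}}.
\end{equation}
For the first bound, I would apply the variational characterization of the operator norm together with the fact that $e^{i\xi\h}$ is unitary:
\begin{equation}
|\phi^{(n)}(\xi)| \;\leq\; \bigl\|(i\h)^{n} e^{i\xi\h}\bigr\| \;=\; \|\h^{n}\|\,\|e^{i\xi\h}\| \;=\; \|\h^{n}\|,
\end{equation}
which gives Eq.~\eqref{eq: phi simpler bound}.

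For the tighter bound Eq.~\eqref{eq: phi tighter bound}, the key observation is that throughout the QITE-DBQA analysis we have adopted $\lambda_0 = 0$ (affine invariance of the iteration under $\h \mapsto \h - \lambda_0 \1$, as was exploited in Theorem~\ref{th: dbi fidelity revised}). Hence $\h\ket{\lambda_0} = 0$ and consequently $\h^{n}\ket{\lambda_0} = 0$ and $\bra{\lambda_0}\h^{n} = 0$ for all $n\ge 1$. I would then decompose
\begin{equation}
\ket{\omega_{k}} = \sqrt{1-\epsilon_k}\,\ket{\lambda_0} + \sqrt{\epsilon_k}\,\ket{\omega_k^\perp},
\end{equation}
where $\ket{\omega_k^\perp}$ is a unit vector orthogonal to $\ket{\lambda_0}$ (absorbing irrelevant phases into $\ket{\omega_k^\perp}$). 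Inserting this decomposition into both sides of $\phi^{(n)}(\xi) = \bra{\omega_{k}} (i\h)^{n} e^{i\xi\h} \ket{\omega_{k}}$, the four resulting cross-terms collapse: the two terms with $\ket{\lambda_0}$ on the right vanish because $(i\h)^n e^{i\xi\h}\ket{\lambda_0}=0$, and the term with $\bra{\lambda_0}$ on the left vanishes because $\bra{\lambda_0}(i\h)^n=0$. Only the $\sqrt{\epsilon_k}\cdot\sqrt{\epsilon_k}$ cross-term survives, yielding
\begin{equation}
\phi^{(n)}(\xi) = \epsilon_k\,\bra{\omega_k^\perp}(i\h)^{n} e^{i\xi\h} \ket{\omega_k^\perp}.
\end{equation}
Bounding the remaining expectation value by the operator norm as in the first part gives $|\phi^{(n)}(\xi)| \leq \epsilon_k \|\h^{n}\|$.

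There is no real obstacle here: the statement is essentially a direct consequence of the ground state being annihilated by $\h$. The only thing to be careful about is writing the orthogonal decomposition so that the complex phases do not obstruct the factor $\epsilon_k$ from coming out, which is why it is clean to push any phase into the definition of $\ket{\omega_k^\perp}$ before computing cross terms.
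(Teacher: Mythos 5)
Your proposal is correct and follows essentially the same route as the paper: both bounds rest on $|\bra{\psi}A\ket{\psi}|\le\|A\|$ for the first part, and on the fact that $\lambda_0=0$ makes $\h^{n}e^{i\xi\h}$ annihilate the ground-state component for the second, with the paper using the projector split $\1=\Pi_0+\Pi_\perp$ and a spectral expansion where you use an explicit orthogonal decomposition of $\ket{\omega_k}$. Your version is marginally cleaner in that it yields the exact identity $\phi^{(n)}(\xi)=\epsilon_k\bra{\omega_k^\perp}(i\h)^{n}e^{i\xi\h}\ket{\omega_k^\perp}$ before bounding, but the underlying idea is identical (and, like the paper's statement, your argument implicitly assumes $n\ge 1$ for the tighter bound).
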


\begin{proof}
Directly evaluating the $n$-th order derivative of $\phi(t)$ gives
\begin{align}
     \phi^{(n)}(\xi)= i^n \bra{\omega_{k}} e^{i\xi\h}\h^{n}\ket{\omega_{k}} \ .
\end{align}
As the operator norm is equal to the largest eigenvalue, we obtain the bound
\begin{align}
		\lvert \phi^{(n)}(\xi) \rvert = \lvert \bra{\omega_{k}} e^{i\xi\h}\h^{n}\ket{\omega_{k}}\rvert  \leq \|e^{i\xi\h}\h^{n}\| = \|\h^{n}\| \ ,
        \end{align}
        where we use the unitary invariant property of the operator norm in the last equality and neglect the factor $i^n$ as it is of norm 1.
        Thus, the first part of this lemma has been proven.
        
    Next, to prove the second part, we denote $\Pi_0, \Pi_\perp$ as the ground-state projector and its complement (i.e. $\Pi_0+ \Pi_\perp=\1$), then $F_k = \bra{\omega_k}\Pi_0\ket{\omega_k}$ and $\epsilon_k = \bra{\omega_k}\Pi_{\perp}\ket{\omega_k}$. 
    Therefore, we obtain
\begin{align}
		\lvert \phi^{(n)}(\xi) \rvert = \lvert \bra{\omega_{k}} e^{i\xi\h}\h^{n}\ket{\omega_{k}}\rvert &\leq |\bra{\omega_k}\Pi_0e^{i\xi\h}H^n\ket{\omega_k}|+|\bra{\omega_k}\Pi_{\perp}e^{i\xi\h}H^n\ket{\omega_k}|\\
        &= |\bra{\omega_k}\Pi_{\perp}e^{i\xi\h}H^n\ket{\omega_k} | \leq \epsilon_k \|\h^n\|\ , 
        \end{align}    
        where in the last inequality, one may expand the projector and use the fact that $\h^n e^{i\xi\h}\ket{\lambda_j} = \lambda_j^n e^{i\xi\lambda_j}\ket{\lambda_j}$ and that $\lambda_{j}^{n}\leq\lambda_{d-1}^{n}$ (recall that $\lambda_{d-1}$ is the largest energy eigenvalue).
\end{proof}

\subsection{Fluctuation-refrigeration relation of DB-QITE}\label{app DB-QITE fluc ref}
\newcommand{\avgE}[1]{E_{#1}}
\newcommand{\varE}[1]{V_{#1}}
\begin{theorem}\label{thm:QITE_DBQA_FRR}
	DB-QITE satisfies the fluctuation-refrigeration up to correction terms,
    \begin{align}\label{qite_dbqa_cooling_thm}
        E_{k+1} \le  E_k - 2s_k V_k + \mathcal O (s_k^{2})\ .
    \end{align}
    In particular, if the step sizes are chosen such that 
	$s_k  \le 2V_{k}\cdot \left[5\epsilon_{k}\lVert \h\rVert^{4}\right]^{-1}$,
	then $\avgE{k+1}\le \avgE{k} - s_k\varE{k}$.
\end{theorem}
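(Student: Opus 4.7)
The plan is to compute $E_{k+1}-E_k$ exactly using the compact characteristic-function form of the DB-QITE state given in Eq.~\eqref{eq: more compact form of DBQA states}, then Taylor expand in $\sqrt{s_k}$, exploit cancellations to isolate the linear-order term $-2s_k V_k$, and finally use the tighter bound of Lemma~\ref{lemma: characteristic function bound} to control the remainder by $\epsilon_k\|\h\|^4$.

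First I would write $\ket{\omega_{k+1}} = (\1 - c\,\phi\, U)\ket{\omega_k}$ with $c := 1-e^{i\sqrt{s_k}}$, $U:=e^{i\sqrt{s_k}\h}$, and $\phi:=\phi(-\sqrt{s_k})=\bra{\omega_k}U^\dagger\ket{\omega_k}$. The crucial simplification is that $[U,\h]=0$, so $U^\dagger\h U=\h$; combined with the identity $|c|^2=c+\bar c$ (from $|1-e^{i\sqrt{s_k}}|^2=2-2\cos\sqrt{s_k}$), this yields the normalization $\braket{\omega_{k+1}}{\omega_{k+1}}=1$ and the closed-form expression
\begin{align}
E_{k+1}\;=\;E_k\bigl(1+|c|^2|\phi|^2\bigr)\;-\;2\,\mathrm{Re}\!\bigl(c\,\phi\,\langle \h U\rangle_k\bigr),
\end{align}
where $\langle\h U\rangle_k=\bra{\omega_k}\h e^{i\sqrt{s_k}\h}\ket{\omega_k}=-i\,\phi'(-\sqrt{s_k})$ (up to sign conventions). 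I would then Taylor expand $c$, $\phi$, and $\langle\h U\rangle_k$ in powers of $\sqrt{s_k}$, using $\phi^{(n)}(0)=i^n\langle\h^n\rangle_k$.

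Next I would collect terms order by order. At order $\sqrt{s_k}$, the contribution to $2\,\mathrm{Re}(c\phi\langle\h U\rangle_k)$ is purely imaginary and vanishes. At order $s_k$, direct computation gives $2\,\mathrm{Re}(c\phi\langle\h U\rangle_k)=2s_kV_k+s_kE_k+O(s_k^{3/2})$, while $|c|^2|\phi|^2=s_k+O(s_k^2)$ so the ``$(1+|c|^2|\phi|^2)E_k$'' term contributes $+s_kE_k$, which precisely cancels the spurious $s_kE_k$. I would then verify that at order $s_k^{3/2}$ the relevant combinations $c_1d_2+c_2d_1+c_3d_0$ (in the obvious notation for Taylor coefficients) are again purely imaginary, so $\mathrm{Re}(c\phi\langle\h U\rangle_k)$ has no $s_k^{3/2}$ contribution. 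This is the key cancellation delivering $E_{k+1}-E_k=-2s_kV_k+O(s_k^2)$ rather than the naive $O(s_k^{3/2})$.

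The remaining work is bookkeeping the $O(s_k^2)$ remainder explicitly. I would use Taylor's theorem with Lagrange remainder on $\phi(-\sqrt{s_k})$ and $\langle\h U\rangle_k=-i\phi'(-\sqrt{s_k})$ to express the remainder in terms of $\phi^{(n)}(\xi)$ for $n\leq 4$, then invoke the tighter bound $|\phi^{(n)}(\xi)|\leq \epsilon_k\|\h\|^n$ from Lemma~\ref{lemma: characteristic function bound} (this is where the $\epsilon_k$ in the step-size condition enters). Combined with $|c|\leq \sqrt{s_k}$ and $|\phi|\leq 1$, the remainder is bounded by a constant multiple of $s_k^2\,\epsilon_k\|\h\|^4$. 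Demanding that this correction not exceed $s_kV_k$ yields a condition of the form $s_k\le \alpha V_k/(\epsilon_k\|\h\|^4)$; tracking the constants carefully through the expansion reproduces $\alpha=2/5$ as in the theorem statement. The main obstacle is precisely this constant-tracking: one must keep enough terms in the $\sqrt{s_k}$-expansion to confirm the cancellation at order $s_k^{3/2}$ and to get a tight prefactor for the $s_k^2$ remainder, and one must carefully split $\ket{\omega_k}$ into its ground-state and orthogonal components to extract the $\epsilon_k$ factor from Lemma~\ref{lemma: characteristic function bound} wherever $\phi$ and its derivatives appear.
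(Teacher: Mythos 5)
Your proposal is correct and follows essentially the same route as the paper: rewrite $\ket{\omega_{k+1}}$ via the characteristic function $\phi$, split $E_{k+1}$ into the $|c|^2 E_k$ piece and the cross term, observe the cancellation of the spurious $s_kE_k$ contributions so that the leading term is $-2s_kV_k$, and bound the fourth-order remainder by $\epsilon_k\|\h\|^4$ using the tighter derivative bound on $\phi$ to recover the $2V_k/(5\epsilon_k\|\h\|^4)$ step-size condition. The only cosmetic difference is that the paper kills the $s_k^{3/2}$ term at once by noting the cross term $f(t)$ is an even function of $t=\sqrt{s_k}$ (so only $t^2$ and $t^4$ survive in the Lagrange-form Taylor expansion), whereas you verify the imaginary-part cancellations order by order.
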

\begin{proof}
	We start from RHS of Eq.~\eqref{eq: more compact form of DBQA states}, i.e.
    \begin{align}
         \ket{\omega_{k+1}} = \left(\1 - (1-e^{i\sqrt{s_k}}) \phi(-\sqrt{s_k}) e^{i\sqrt{s_k}\h}\right) \ket{\omega_{k}} \ ,
    \end{align}
    where we employ the same notation for the characteristic function defined in Eq.~\eqref{eq: DBQA characteristic function}, i.e.
    \begin{align}\label{eq:char_fn}
        \phi(t)\coloneqq \bra{\omega_{k}}e^{it\h}\ket{\omega_{k}}\ .
    \end{align}
    To simplify calculational notation, we drop the square root and index from step sizes, i.e. $\sqrt{s_k}\rightarrow t$ for the moment. Moreover, we define $c = (1-e^{it}) \phi(-t)$ and hence  we have
    \begin{align}
        \ket{\omega_{k+1}} = (\1-ce^{it\h})\ket{\omega_{k}} \ .
    \end{align}

	To derive the cooling rate, we calculate
	\begin{align}
		\avgE{k+1} = \bra{\omega_{k+1}}\h\ket{\omega_{k+1}} &= \bra{\omega_k}\left[\1-c^*e^{-it\h}\right]\cdot\h\cdot\left[\1-ce^{it\h}\right]\ket{\omega_k}\\
        & = E_k + |c|^2 E_k - 2{\rm Re} \left( \bra{\omega_k} ce^{itH} \h\ket{\omega_k}\right),\label{eq:cooling_DBQA}
	\end{align}
    where we have already made use of the fact that $\h$ commutes with $e^{-it\h}$. To achieve Eq.~\eqref{qite_dbqa_cooling_thm}, we need to derive upper bounds from Eq.~\eqref{eq:cooling_DBQA}, which we do for the individual terms:
    \begin{enumerate}
        \item The second term on the R.H.S. of Eq.~\eqref{eq:cooling_DBQA} can be upper bounded by the fact that 
    \begin{align}\label{eq:firstbound}
        |c|^2 = |(1-e^{it})\phi(-t)|^2 \leq |(1-e^{it})|^2\cdot |\phi(-t)|^2 \leq t^2,
    \end{align}
    since $(1-e^{it})(1-e^{-it}) = 2(1-\cos(t))\leq t^2$, and $|\phi(-t)|\leq \|e^{-it\h}\| \leq 1$. 
    \item The third term in Eq.~\eqref{eq:cooling_DBQA} can be rewritten as
    \begin{align}
		f(t) \coloneqq -2{\rm Re} \left( c \bra{\omega_k} e^{it\h} \h\ket{\omega_k}\right) &= -2{\rm Re}  \left( (1-e^{it})\cdot \bra{\omega_k} e^{-it\h} \ket{\omega_k} \cdot \bra{\omega_k} e^{it\h} \h\ket{\omega_k}\right) \\&=-2\mathrm{Im}\left[(1-e^{it})\phi(-t)\phi^{(1)}(t)\right]\ , 
	\end{align}
since the first derivative of $\phi(t)$ with respect to $t$ is given by $\phi^{(1)}(t) = i\bra{\omega_k} e^{it\h} \h\ket{\omega_k}$. Our grand goal is to upper bound this term. At this point, let us note that $f(t)$ is an even function with $f(0)=0$. We may then omit odd derivatives of it, and write the following Taylor expansion,
	\begin{align}
		f(t) = \frac{f^{(2)}(0)}{2}t^{2} + \frac{f^{(4)}(\xi)}{24}t^{4}. 
	\end{align}
To access the higher derivatives of $f(t)$, we start by defining 
\begin{align}\label{eq:h_fn}
    h(t) \coloneqq \phi(-t)\phi^{(1)}(t),
\end{align}
and write derivatives of $f(t)$ as
	\begin{align}
		f^{(2)}(t) &= -2\mathrm{Im}\left[e^{it}h(t) - 2ie^{it}h^{(1)}(t) + (1-e^{it})h^{(2)}(t)\right]\ ,\label{eq:f2_eval}\\
		f^{(4)}(t) &= -2\mathrm{Im}\left[-e^{it}h(t) + 4ie^{it}h^{(1)}(t) + 6e^{it}h^{(2)}(t)  - 4ie^{it}h^{(3)}(t) + (1-e^{it})h^{(4)}(t)\right]\ .\label{eq:f4_eval}
	\end{align}

    {\underline{i) Evaluating $f^{(2)}(0)$ in Eq.~\eqref{eq:f2_eval}}}\\[2pt]
	The first term is easy; we have explicitly derived $f^{(2)}(t)$ in Eq.~\eqref{eq:f2_eval}, and we need the expression for $h(t)$ as detailed in Eq.~\eqref{eq:h_fn} and \eqref{eq:char_fn}. In particular, one can verify that $h(0) = iE_k$ and $h^{(1)}(0) = -V_k$. Therefore, we have 
    \begin{align}\label{eq:f2}
        f^{(2)}(0) = -2{\rm Im} \left[h(0)-2ih^{(1)}(0)\right] = -2E_k-4V_k.
    \end{align}
    At this point, it is good to note that by combining Eqns.~\eqref{eq:firstbound} and Eq.~\eqref{eq:f2} into the equation for cooling, i.e.  Eq.~\eqref{eq:cooling_DBQA}, 
    \begin{align}
        E_{k+1}\leq E_k -2 V_kt^2 + \frac{f^{(4)}(\xi)}{24}t^{4} = E_k - V_kt^2 - V_kt^2 + \frac{f^{(4)}(\xi)}{24}t^{4} ,
    \end{align}
    which will give us the desired fluctuation-refrigeration relation $E_{k+1}\leq E_k - V_kt^2$ if we can choose $t$ such that 
    \begin{equation}\label{eq:frr_cond}
        - V_kt^2 + \frac{f^{(4)}(\xi)}{24}t^{4} \leq 0 \qquad \implies \qquad t^2 \leq \frac{24V_k}{f^{(4)}(\xi)}.
    \end{equation}
    In other words, our next step is to formulate an upper bound $f^{(4)}(\xi)\leq X $, such that by choosing $t^2 \leq \frac{24V_k}{X}$, we complete the proof.\\
    
    {\underline{ii) Upper bounding $f^{(4)}(t)$ in Eq.~\eqref{eq:f4_eval}}}\\[2pt]
    We proceed to carefully bound each individual term contained in $f^{(4)}(t)$, since
    \begin{align}
        f^{(4)}(t) &= -2\mathrm{Im}\left[-e^{it}h(t) + 4ie^{it}h^{(1)}(t) + 6e^{it}h^{(2)}(t)  - 4ie^{it}h^{(3)}(t) + (1-e^{it})h^{(4)}(t)\right] \\
        & \leq 2 \left[ ~\abso{h(t)}+ 4\abso{h^{(1)}(t)} + 6\abso{h^{(2)}(t)}+4\abso{h^{(3)}(t)} + \abso{(1-e^{-it}) h^{(4)}(t)}~ \right],\label{eq:f4bound}
    \end{align}
    where note that most $e^{it}$ terms are omitted since its norm is bounded by 1.
    Recall that $h(t)$ is defined in Eq.~\eqref{eq:h_fn} in terms of derivatives of $\phi(t)$, and so $h^{(n)}(t)$ needs to be explicitly derived as functions containing derivatives of $\phi(t)$ via chain rule.
    To bound the term $h^{(n)}(t)$, we first recall the bound of $ \lvert \phi^{(n)}(\xi) \rvert $  from Eq.~\eqref{eq: phi simpler bound} in Lemma.~\ref{lemma: characteristic function bound}, i.e. 
    \begin{align}
    \lvert \phi^{(n)}(\xi) \rvert  \leq \epsilon_{k}\|\h^{n}\| \ ,
    \end{align}
    where $\xi \in [0,t]$.

        Furthermore, let us assume that $\lVert \h\rVert\geq 1$ so that $\lVert \h^n\rVert\geq \lVert \h^{n-1}\rVert$. We summarize the bounds below:
	\begin{align}
		\lvert h(t) \rvert &\leq \avgE{k} \leq \epsilon_{k}\|\h\| \leq \epsilon_{k}\|\h^4\| \ ,\label{eq:firstterm}\\
		\lvert h^{(1)}(t) \rvert &\leq \varE{k} \leq \epsilon_{k}\|\h^2\| \leq \epsilon_{k}\|\h^4\| \ ,\\
		\lvert h^{(2)}(t) \rvert &\leq \lvert \bra{\omega_{k}}\h^{3}\ket{\omega_{k}} - \avgE{k}\bra{\omega_{k}}\h^{2}\ket{\omega_{k}}\rvert \leq  \epsilon_{k}\|\h^3\|\leq  \epsilon_{k}\|\h^4\|\ ,\\
		\lvert h^{(3)}(t) \rvert &\leq \lvert \bra{\omega_{k}}H^{4}\ket{\omega_{k}} - \bra{\omega_{k}}H^{3}\ket{\omega_{k}}\avgE{k} - 3\bra{\omega_{k}}H^{3}\ket{\omega_{k}}\avgE{k} + 3\bra{\omega_{k}}H^{2}\ket{\omega_{k}}^{2}\rvert \leq 4 \epsilon_{k}\|\h^4\|\ .
        \end{align}
        The final term requires an additional assumption that $t\lVert \h\rVert\leq 1$. With this,
        \begin{align}
		\lvert (1-e^{it})h^{(4)}(t)\rvert &\leq t\lvert -3\bra{\omega_{k}}H^{4}\ket{\omega_{k}}\avgE{k} +2\bra{\omega_{k}}H^{3}\ket{\omega_{k}}\bra{\omega_{k}}H^{2}\ket{\omega_{k}}  +\bra{\omega_{k}}H^{5}\ket{\omega_{k}}  \rvert \\
		&\leq t\bra{\omega_{k}}H^{5}\ket{\omega_{k}} +2t\max\{\bra{\omega_{k}}H^{4}\ket{\omega_{k}}\avgE{k},\bra{\omega_{k}}H^{3}\ket{\omega_{k}}\bra{\omega_{k}}H^{2}\ket{\omega_{k}}  \}\  \\
        &\leq 3t \epsilon_{k}  \|\h^5\| \leq 3
        \epsilon_{k} \|\h^4\|.\label{eq:lastterm}
	\end{align}
	Putting Eqn.~\eqref{eq:firstterm}-\eqref{eq:lastterm} back into Eq.~\eqref{eq:f4bound}, we obtain 
	\begin{align}\label{eq:boundingf4}
		f^{(4)}(\xi)\leq 60\epsilon_{k}\lVert \h\rVert^{4}\ .
	\end{align}   
    Plugging Eq.~\eqref{eq:boundingf4} into the choice of $t$ as detailed after Eq.~\eqref{eq:frr_cond} concludes the proof of the theorem (recall that $t= \sqrt{s_k}$).
    \end{enumerate}
	
\end{proof}

\subsection{Exponential fidelity convergence of DB-QITE: Proof of Theorem \ref{th: fidelity convergence}}\label{app:subsec:convergence}

Before we begin, recall that $| \lambda_0 \rangle$ denotes the ground-state.
In this section, we set the reference point $\lambda_0 = 0$, i.e. the ground-state is of zero energy.
Consequently, the spectral gap is given by $\Delta=\lambda_1-\lambda_0 =\lambda_1$.
With this assumption, we now present the proof of Theorem.~\ref{th: fidelity convergence} of main text.
\begin{theorem} \textbf{ground-state fidelity increase guarantee.}
Suppose that DB-QITE is initialized with some non-zero initial ground-state overlap $F_0$.
Let $\h$ be a Hamiltonian with a unique ground-state $| \lambda_0 \rangle$, $\lambda_0=0$, spectral gap $\Delta$ and spectral radius $\|\h\|\ge 1$.
Let $U_0$ be an arbitrary unitary and set 
  \begin{align}
      s = \frac{\Delta}{12 \|\h\|^3}\ .
  \end{align}
The states $\ket{\omega_k}:= U_k \ket 0$, where $U_k$ is defined in main text, i.e.
    \begin{align}
    U_{k+1} = e^{i\sqrt{s_k}\h}U_k e^{i\sqrt{s_k}\ket 0\langle0|}    U_k^\dagger
  e^{-i\sqrt{s_k}\h} U_k\ \ ,
  \label{DB-QITE Uk app}
\end{align}
has the ground-state fidelity lower-bounded by
\begin{align} 
    F_k := |\langle \lambda_0|\omega_k\rangle|^2  \ge  1- q^{k}\ 
\end{align}
where $q = 1- s  F_0 \Delta$.
\label{thmSimpleapp}
\end{theorem}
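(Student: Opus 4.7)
The plan is to reduce the fidelity recursion to a one-parameter scalar problem and then Taylor expand, exploiting the special role of the ground-state to control the remainder. Because $\lambda_0 = 0$ by assumption, one has $\bra{\lambda_0}e^{i\sqrt{s}\h} = \bra{\lambda_0}$, and the compact form of the DB-QITE step in Eq.~\eqref{eq: more compact form of DBQA states} yields the clean recursion
\begin{equation}
\langle\lambda_0|\omega_{k+1}\rangle \;=\; \langle\lambda_0|\omega_k\rangle \cdot z(\sqrt{s})\,, \qquad z(t) := 1 - (1-e^{it})\phi(-t)\,,
\end{equation}
so that $F_{k+1} = F_k\,|z(\sqrt{s})|^2$. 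A first observation is that $|z(t)|^2$ is an even function of $t$: since $\overline{\phi(-t)} = \phi(t)$, one has $\overline{z(t)} = z(-t)$ and therefore $|z(t)|^2 = z(t)\,z(-t)$. Consequently, odd-power Taylor coefficients at $t=0$ vanish, so the expansion has the form $|z(t)|^2 = 1 + c_2\, t^2 + R(t)$ with $R(t) = O(t^4)$.

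Next I would compute $c_2$ and isolate the $\epsilon_k$ factor that must appear in $R(t)$. Decomposing $\ket{\omega_k} = \sqrt{F_k}\ket{\lambda_0} + \sqrt{\epsilon_k}\ket{\omega_k^\perp}$ gives the split $\phi(-t) = F_k + \epsilon_k\,\tilde\psi(t)$ with $\tilde\psi(t) := \bra{\omega_k^\perp}e^{-it\h}\ket{\omega_k^\perp}$, whence $z(t) = F_k e^{it} + \epsilon_k\, w(t)$ for some smooth $w$ with $w(0)=1$. This produces the structural identity
\begin{equation}
|z(t)|^2 \;=\; F_k^2 \;+\; 2F_k\epsilon_k\,\mathrm{Re}(e^{-it} w(t)) \;+\; \epsilon_k^2\,|w(t)|^2\,,
\end{equation}
in which the purely ground-state contribution $F_k^2$ is constant in $t$, so every $t$-dependent term carries an explicit factor of $\epsilon_k$. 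A short Leibniz calculation then gives $c_2 = 2\epsilon_k \bra{\omega_k^\perp}\h\ket{\omega_k^\perp}(F_k+\epsilon_k) = 2 E_k$, recovering the additive fidelity increase of Eq.~\eqref{additive fidelity increase}.

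The core technical step is bounding $R(t)$. Using the Taylor--Lagrange remainder on the fourth derivative of $|z|^2$, the derivative estimates $|\tilde\psi^{(n)}(\xi)| \le \|\h^n\|$ (Lemma~\ref{lemma: characteristic function bound} applied to $\ket{\omega_k^\perp}$), the standing assumption $\|\h\|\ge 1$, and the bound $t\|\h\|\le 1$ that follows from the chosen $s$, a product-rule bookkeeping yields $|R(t)| \le C\,\epsilon_k\,\|\h\|^3\,t^4$ for a universal constant $C$. The trick is to use $t\|\h\|\le 1$ to trade one power of $\|\h\|$ against one power of $t$, so the effective scaling is $\|\h\|^3$ rather than $\|\h\|^4$. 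Substituting $t^2 = s$ and invoking $E_k \ge \Delta\epsilon_k$ from Lemma~\ref{lemma: Ek > lambda epsilon} then gives
\begin{equation}
F_{k+1} - F_k \;\ge\; F_k\,s\,\epsilon_k\,\bigl(2\Delta - C\|\h\|^3 s\bigr)\,,
\end{equation}
and the prescribed choice $s = \Delta/(12\|\h\|^3)$ is what makes the parenthesis exceed $\Delta$, so $F_{k+1} - F_k \ge s\, F_k\epsilon_k\Delta$, equivalently $\epsilon_{k+1} \le \epsilon_k\,(1 - sF_k\Delta)$.

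The induction then closes immediately: $\epsilon_{k+1} \le \epsilon_k$ implies $F_{k+1} \ge F_k$, so the hypothesis $F_k \ge F_0$ propagates, whence $\epsilon_{k+1} \le q\,\epsilon_k$ with $q = 1 - sF_0\Delta$, and iterating gives $\epsilon_k \le q^k\epsilon_0 \le q^k$, i.e.\ $F_k \ge 1 - q^k$. The principal obstacle throughout is exhibiting the $O(t^4)$ remainder as $O(\epsilon_k\|\h\|^3 t^4)$ rather than $O(\|\h\|^4 t^4)$: it is precisely this $\epsilon_k$ factor, extracted from the decomposition of $z(t)$ into a ground-state piece and an orthogonal piece, that allows the recursion to improve the fidelity geometrically while keeping $s$ independent of how close the current iterate is already to the ground-state.
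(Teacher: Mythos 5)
Your proposal is correct and follows essentially the same route as the paper's proof in the appendix: the scalar recursion $F_{k+1}=|g(\sqrt{s})|^2F_k$ via $\bra{\lambda_0}e^{it\h}=\bra{\lambda_0}$, the evenness of $p(t)=|g(t)|^2-1$, the second-order coefficient $2E_k$, a fourth-order Lagrange remainder of size $O(\epsilon_k\|\h\|^3t^4)$ (the paper extracts the $\epsilon_k$ factor via the projector bound $|\phi^{(n)}(\xi)|\le\epsilon_k\|\h^n\|$ rather than your equivalent decomposition $z(t)=F_ke^{it}+\epsilon_k w(t)$), then $E_k\ge\Delta\epsilon_k$ and iteration. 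The only loose end is that your universal constant $C$ must actually come out at most $12$ for the stated $s$ to make the parenthesis at least $\Delta$; the paper's explicit bookkeeping gives $|p^{(4)}(\xi)|\le 288\epsilon_k\|\h\|^3$, i.e.\ $C=288/24=12$ exactly, which closes this.
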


\begin{proof}
From RHS of Eq.~\eqref{eq: more compact form of DBQA states}, the DB-QITE recursion is given by
    \begin{align}\label{eq: GCI simplified}
         \ket{\omega_{k+1}} = \left(\1 - (1-e^{i\sqrt{s_k}}) \phi(-\sqrt{s_k}) e^{i\sqrt{s_k}\h}\right) \ket{\omega_{k}} \ ,
    \end{align}
     where we use the same notation again for the characteristic function (similar to Theorem.~\ref{thm:QITE_DBQA_FRR}), i.e.
    \begin{align}
        \phi(t)\coloneqq \bra{\omega_{k}}e^{it\h}\ket{\omega_{k}}\ .
    \end{align}
For all subsequent calculations in this proof, we define $t=\sqrt{s_k}$ to simplify the notation.
Our ultimate goal is to show that the fidelity between the ground-state and the $k$-th DBQA state ($F_k$) can be lower-bounded by
\begin{align}
    F_k \geq 1- q^{k} \ ,
\end{align}
where $q$ is a real parameter such that $0 <q < 1$.
To do so, we define the ground-state infidelity $\epsilon_{k}\coloneqq 1-F_{k}$ and we will derive a recursive inequality relating $\epsilon_{k}$ and $\epsilon_{k+1}$.

First, let us define the overlap between the ground-state and the $k$-th DBQA state as $\braket{\lambda_{0}}{\omega_{k}}$, 
which is related to the fidelity $F_k$ by $F_k= |\braket{\lambda_{0}}{\omega_{k}}|^2$.
Using Eq.~\eqref{eq: GCI simplified}, the overlap at $(k+1)$-th and $k$-th DBQA recursion is related by
\begin{align}
    \braket{\lambda_{0}}{\omega_{k+1}} &=\braket{\lambda_{0}}{\omega_{k}} - (1-e^{it}) \phi(-t)  \langle \lambda_0 | e^{it\h} | \omega_k \rangle  \\
    &= \left(1 - (1-e^{it})\phi(-t) \right) \braket{\lambda_{0}} {\omega_{k}} \ ,
    \label{eq: DBQA overlap}
\end{align}
where we use the expression $ \langle \lambda_0 | e^{it\h} = \langle \lambda_0 |e^{it \lambda_0} $ and the assumption $\lambda_0=0$ in the last line.
Using Eq.~\eqref{eq: DBQA overlap}, the  ground-state fidelities $F_{k+1} $ and $F_k$ are related by
\begin{align}
    F_{k+1} = | \braket{\lambda_{0}}{\omega_{k+1}}|^2
    = \left|g(t) \right|^2 F_k \ ,
\end{align}
where we define $ g(t)=1 - (1-e^{it})\phi(-t)$.
Using the definition of the infidelities, the ground-state infidelities $\epsilon_{k}$   and $\epsilon_{k+1}$ are related by
\begin{align}\label{eq: epsilon and p(t) relation}
    \epsilon_{k+1} = 1- |g(t) |^2 (1-\epsilon_k) = \epsilon_k-p(t)(1-\epsilon_{k})\ .
\end{align}
where we also define $p(t)=\lvert g(t)\rvert^{2}-1$ for simplicity.
Now, the remaining task is to show that $p(t) \geq c \epsilon_k$ for some positive constant $c$. 
We start by applying Taylor's theorem (see Sec.~\ref{sec: terminology}), i.e.
	\begin{align}\label{eq:f_in_Taylor}
	    p(t)= p(0)+tp^{(1)}(0)+\frac{t^{2}}{2}p^{(2)}(0) + \frac{t^{3}}{6}p^{(3)}(0) + \frac{t^{4}}{24}p^{(4)}(\xi)\ ,
	\end{align}where $\xi\in[0,t]$.
Notice that $p(t)$ is an even function with $p(0)=0$ as 
\begin{align}
  p(-t) = \lvert g(-t)\rvert^2 - 1 =\lvert g(t)^*\rvert^2 - 1 =\lvert g(t)\rvert^2 - 1= p(t) \ ,
\end{align}
 Hence, all odd order derivatives of $p(t)$ vanish, i.e. $p^{(2n+1)}(0) = 0$ for any non-negative integer $n$.
Therefore, the Taylor series  reduces to
\begin{align}\label{eq: simpler taylor series k}
	    p(t)= \frac{t^{2}}{2}p^{(2)}(0)  + \frac{t^{4}}{24}p^{(4)}(\xi) \ .
\end{align} 
\begin{enumerate}
    \item
    Directly evaluating $p^{(2)}(t)$ yields
\begin{align}
    p^{(2)}(t) &=2\mathrm{Re}[g^{*}(t)g^{(2)}(t)]+2\lvert g^{(1)}(t)\rvert^{2} \ ,
\end{align}
    To determine the expression of the term $p^{(2)}(0)$, we explicitly compute the derivatives of $g(t)$ up to second order.
    The results are given by
    \begin{align}
        g^{(1)}(t) &= ie^{it}\phi(-t) + (1-e^{it})\phi^{(1)}(-t)\ ,\label{app eq g1}\\
	g^{(2)}(t) &= -e^{it}\phi(-t) - 2ie^{it}\phi^{(1)}(-t) - (1-e^{it})\phi^{(2)}(-t)\ .\label{app eq g2}
    \end{align}
    Thus, we obtain
    \begin{align}
	p^{(2)}(0) &= 2\mathrm{Re}[g^{*}(0)g^{(2)}(0)]+2\lvert g^{(1)}(0)\rvert^{2} \\
    & = 2\mathrm{Re}[ 1 \cdot (-1 +2 E_k) ]+2\lvert i\rvert^{2} \quad = 4E_{k}\ . \label{eq: k^2 term}
\end{align}
where we use the relation $g^{(1)}(0)=i$, $\phi(0)=1$ and $\phi^{(1)}(0)= i E_k$.
    \item Here, we will derive an lower bound for 
    the term $p^{(4)}(\xi)$.

    First, recall that $p(t)=g(t) g^*(t) -1$ and hence its $n$-th order derivative is given by
        \begin{align}
        p^{(n)}(t) &=  \sum_{\tiny \makecell{n_a,n_b=1,\\n_a+n_b=n}}^n   g^{(n_a)} (t)g^{*(n_b)}(t) \ ,
    \end{align}
    where the constraint $n_a+n_b=n$ arises from the Leibniz rule.
    We can collect identical terms and rewrite it as 
    \begin{align}\label{eq: k^n combinatorics factors}
        p^{(n)}(t)
        &=  \sum_{r=0}^n \binom{n}{r}g^{(r)}(t) g^{*(n-r)}(t) \ ,
    \end{align}
    where we introduce the factor $\displaystyle  \binom{n}{r} $ to account for combinatorial degeneracy.
    In particular, for $n=4$, we have
    \begin{align}\label{eq: k^4}
         p^{(4)}(t)= g(t) g^{*(4)}(t) + 4 g^{(1)}(t) g^{*(3)}(t)+ 6 g^{(2)}(t) g^{*(2)}(t) + 4 g^{(3)}(t) g^{*(1)}(t)+ g^{(4)}(t) g^*(t) \ .
    \end{align}
    To obtain the lower bound for $  p^{(4)}(\xi)$, we first derive an upper bound for $ |g^{(r)}(\xi)|$ with arbitrary non-negative integer $r$.

      \begin{enumerate}[(a)]
          \item  In this part, we will determine the upper bound for $|g(\xi)|$. 
          
          Recall that $ g(t)=1 - (1-e^{it})\phi(-t)$ and we define $a(t)=- (1-e^{it})$ for notational simplicity.
          For the factor $a(\xi)$, it gives
          \begin{align}
              |a(\xi)|= |1-e^{i\xi}| = \sqrt{(1-\cos \xi)^2 + \sin^2 \xi} 
              =\sqrt{2-2\cos \xi}
              =\sqrt{4  \sin^2 \frac{\xi}{2}}
              \leq 2 \left\lvert \sin \frac{\xi}{2}\right\rvert 
              \leq \xi \ , \label{eq: 1-exp bound}
          \end{align}
          where we use the trigonometric identities $\cos \xi = 1- 2\sin^2 \frac{\xi}{2} $ in the last equality and we use the relation $| \sin \frac{\xi}{2}| \leq \frac{\xi}{2}$ in the last inequality. 
          Thus, we have
          \begin{align}
              |g(\xi)| \leq 1+ |a(\xi)| \lvert \phi(-\xi)\rvert \leq 1 + \xi 
        \leq 2 \ ,  \label{eq: g(t) <2}
          \end{align}
          where we use the triangle inequality in the first inequality and we use the bound $\lvert \phi(-\xi)\rvert \leq 1$. 
          Note that $\xi \leq t \leq 1 $ by assumption.

          \item  Next, we compute the upper bound for $| g^{(r)}(\xi)|$.
          For $r$-th order derivatives of $g(t)$, we have
      \begin{align}
       | g^{(r)}(\xi) | = \left\lvert \sum_{m=0}^r \binom{r}{m}a^{(m)}(\xi) \phi^{(r-m)}(-\xi) \right\rvert 
       \leq  \sum_{m=0}^r \binom{r}{m} \left|a^{(m)} (\xi)\right|  \left|\phi^{(r-m)}(-\xi)\right| \ ,
    \end{align}
      where we use a similar procedure from Eq.~\eqref{eq: k^n combinatorics factors} to deal with the combinatorial degeneracy. 
      We then split the sum into $m=0$ case and $m \neq 0$ cases to evaluate the upper bound, i.e. 
      it becomes
      \begin{align}
           | g^{(r)}(\xi) | &\leq  |a (\xi)||\phi^{(r)}(-\xi)| + \sum_{m=1}^r \binom{r}{m} |a^ {(m)} (\xi)|  |\phi^{(r-m)}(-\xi)| \\
           &\leq \xi \epsilon_{k} \| \h \|^{r} +\sum_{m=1}^r \binom{r}{m}   \epsilon_{k} \| \h \|^{r-m} \ .
      \end{align}
      For the first term, we use the bound $ |a (\xi)| \leq \xi$ (Eq.~\eqref{eq: 1-exp bound}) and Eq.~\eqref{eq: phi tighter bound}.
      Similarly, for the second term, we use the bound $|a^{(m)} (\xi)| \leq 1$ and Eq.~\eqref{eq: phi tighter bound}.
        Furthermore, let us assume that $1 \leq \| \h \| $ which leads to $\lVert \h^r \rVert\geq \lVert \h^{r-1}\rVert$.
        Thus, it simplifies to
          \begin{align}
       | g^{(r)}(\xi) |
       \leq \epsilon_{k} \| \h \|^{r-1} +\sum_{m=1}^r \binom{r}{m}   \epsilon_{k} \| \h \|^{r-1} 
       &= \epsilon_{k} \| \h \|^{r-1}  \left(1+\sum_{m=1}^r \binom{r}{m} \right) \\
       &= 2^r \epsilon_{k} \| \h \|^{r-1}  \ . \label{eq: g^r bound}    \end{align}
      where we assume that $\xi\|\h\|\leq 1$ in the first inequality and we employ binomial identity in the last inequality.
    
      \end{enumerate}
Finally, observe that Eq.~\eqref{eq: k^4} can be upper-bounded by
\begin{align}
    | p^{(4)}(\xi)| &\leq 2 |g(\xi)| \times | g^{(4)}(\xi)| + 8 | g^{(1)}(\xi)|\times| g^{(3)}(\xi)|  + 6 | g^{(2)}(\xi)|^2 \ ,
\end{align}
where we use the fact that $ | g^{(r)} (\xi)| =  | g^{*(r)} (\xi)|$.
Using Eq.~\eqref{eq: g(t) <2} and Eq.~\eqref{eq: g^r bound}, it becomes
\begin{align}
    | p^{(4)}(\xi)|
    &\leq 64 \epsilon_{k} \|\h\|^{3}+   128 \epsilon_{k}^2 \|\h\|^{2}+96 \epsilon_{k}^2 \|\h\|^{2} \\
    &\leq 64 \epsilon_{k} \|\h\|^{3}+   128 \epsilon_{k} \|\h\|^{3}+96 \epsilon_{k} \|\h\|^{3} = 288 \epsilon_{k} \|\h\|^{3} \ ,
\end{align}
where we use the bounds $\epsilon_k^2 \leq \epsilon_k$ and  $|\h\|^{2}  \leq |\h\|^{3} $ in the last line.
Finally, the lower bound for $  p^{(4)}(\xi)$ is given by
\begin{align}\label{lower bound for k^4}
    | p^{(4)}(\xi)|\leq 288  \epsilon_{k} \|\h\|^{3} \implies p^{(4)}(\xi) \geq - 288  \epsilon_{k} \|\h\|^{3} \ .
\end{align}
    \end{enumerate}
Combining Eq.~\eqref{eq: simpler taylor series k}, Eq.~\eqref{eq: k^2 term} and Eq.~\eqref{lower bound for k^4} yields
\begin{align}
    p(t) & \geq 2t^2 E_k - 12 t^4 \epsilon_k \|\h\|^{3} \\
    &\geq 2t^2 \Delta \epsilon_k - 12 t^4 \epsilon_k \|\h\|^{3} \ ,
\end{align}
where we use the bound $E_k \geq \lambda_1 \epsilon_k=\Delta \epsilon_k$ as shown in Lemma.~\ref{lemma: Ek > lambda epsilon}.
As stated in Theorem.~\ref{th: fidelity convergence}, we set $t^2 = s = \dfrac{\Delta}{12 \|\h\|^3}$ as a constant for all $k$-th DB-QITE recursions.
Thus, $p(t)$ can be lower-bounded by
\begin{align}
    p(t) \geq 2\epsilon_k \left(\dfrac{\Delta^2}{12 \|\h\|^3}-\dfrac{6\Delta^2}{144 \|\h\|^3}\right) = \dfrac{\Delta^2 \epsilon_k}{12 \|\h\|^3} \ .
\end{align}
Substituting it into Eq.~\eqref{eq: epsilon and p(t) relation} yields
\begin{align}
     \epsilon_{k+1} \leq \epsilon_k -  (1-\epsilon_{k})\dfrac{\Delta^2 \epsilon_k}{12 \|\h\|^3} =\left(1-\dfrac{\Delta^2 F_k}{12 \|\h\|^3}\right) \epsilon_k \leq \left(1-\dfrac{\Delta^2 F_0}{12 \|\h\|^3}\right) \epsilon_k = q \epsilon_k \ ,
\end{align}
where $q =1- s  F_0 \Delta$ as mentioned in Theorem.~\ref{th: fidelity convergence}. 
Ultimately, the ground-state fidelity is given by
\begin{align}
\label{infidelity relation app}
      \epsilon_{k} \leq q^{k} \epsilon_0 & \implies  F_k \geq 1- q^{k}F_0 \ge 1-q^k
\end{align}
where we used that $\epsilon_0= 1- F_0 \leq 1$.

\end{proof}

\subsection{Runtime consideration}\label{runtime calculation}

We remark that Eq.~\eqref{eq: simpler taylor series k} implies the relation 
    $F_{k+1} = F_k + 2 E_k s + \mathcal O(s^2)$ stated as Eq.~\eqref{additive fidelity increase} above.
    Next, we provide a calculation of the depth in DB-QITE in the weaker setting of choosing $s$ via Eq.~\eqref{thm step duration}, so independent of encountered energy $E_k$ or energy variance $V_k$.

\begin{corollary} \label{cor:depth_scaling}
    For $L$ qubits, DB-QITE amplifies initial fidelity $F_0$ to desired fidelity $F_\text{th}$ in circuit depth
    \begin{equation}\label{eq: depth scaling app}
\mathcal O\left(L\left(\frac{1-F_0}{1-F_\text{th}}\right)^{2/(s  F_0\Delta)}\right)  \, .
\end{equation}
where  $\displaystyle s = \frac{\Delta}{12 \|\h\|^3}$ as defined in Theorem.~\ref{th: fidelity convergence}.
\end{corollary}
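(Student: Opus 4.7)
The plan is to combine the geometric fidelity convergence established in Theorem~\ref{th: fidelity convergence} with the intrinsic three-fold recursive structure of the DB-QITE circuits in Eq.~\eqref{DB-QITE Uk}. First I would use the slightly tighter version of the convergence bound derived inside the proof of that theorem, specifically Eq.~\eqref{infidelity relation app}, namely $1-F_k \le q^k(1-F_0)$ with $q = 1-sF_0\Delta$. Demanding $1-F_{k_\star} \le 1-F_\text{th}$ and solving for $k_\star$ yields the sufficient iteration count $k_\star = \lceil \log((1-F_0)/(1-F_\text{th}))/\log(1/q) \rceil$.

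Second, I would translate iterations into circuit depth. Inspecting Eq.~\eqref{DB-QITE Uk} shows that $U_{k+1}$ invokes $U_k$ three times, alongside two Hamiltonian-simulation subroutines and one reflection about $\ket{0}$; hence the depth obeys the affine recursion $D_{k+1} \le 3 D_k + \mathcal{O}(L)$, which unrolls to $D_{k_\star} = \mathcal{O}(L \cdot 3^{k_\star})$. The $\mathcal{O}(L)$ per-layer cost comes from the efficient multi-controlled reflection constructions of Refs.~\cite{zindorf2024efficient,balauca2022efficient} together with the standard linear-depth product-formula compilation of Hamiltonian simulation for local $\h$, as already invoked in the End Matter discussion of circuit depths.

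Third, I would package the exponent. Rewriting $3^{k_\star} = ((1-F_0)/(1-F_\text{th}))^{\log 3/\log(1/q)}$ and invoking the elementary inequality $-\log(1-x) \ge x$ for $x \in (0,1)$ applied to $x = sF_0\Delta$ gives $\log(1/q) \ge sF_0\Delta$. Combined with the loose bound $\log 3 < 2$ this produces $\log 3/\log(1/q) \le 2/(sF_0\Delta)$, and multiplying by the $\mathcal{O}(L)$ prefactor yields exactly the scaling of Eq.~\eqref{eq: depth scaling app}.

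The main obstacle is conceptual rather than computational: it lies in justifying that every DB-QITE subroutine can indeed be compiled to depth $\mathcal{O}(L)$, so that the $L$ prefactor is not inflated when the $3^{k_\star}$ factor is unrolled; the relaxation $\log 3 < 2$ is a deliberate overestimate chosen to keep the exponent in the clean form $2/(sF_0\Delta)$ consistent with Eq.~\eqref{eq: depth scaling} of the End Matter, and a slightly tighter version with constant $\log_q^{-1} 3$ would follow immediately from the same argument.
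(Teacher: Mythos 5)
Your proposal is correct and follows essentially the same route as the paper: both solve $q^{k}\epsilon_0\le\epsilon_{\rm th}$ for $k$, convert the $\mathcal{O}(3^{k})$ recursive query count into depth by multiplying with an $\mathcal{O}(L)$ per-subroutine cost, and simplify the exponent via $-\log(1-sF_0\Delta)\approx sF_0\Delta$ together with $\log 3<2$. Your version is marginally more careful in replacing the paper's approximation $\log(1-x)\approx -x$ with the one-sided inequality $-\log(1-x)\ge x$ and checking that the base of the exponential exceeds one, but the argument is the same.
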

\begin{proof}
    We use Eq.~\eqref{infidelity relation app} to find $k$ such that
    \begin{align}
        \epsilon_{k} \leq q^{k} \epsilon_0 \le \epsilon_\text{th} := 1 -F_\text{th}\ .
    \end{align}
    After taking the logarithm on both sides we define
    \begin{align}
        k = \lceil \log( \epsilon_\text{th}/ \epsilon_0 ) / \log(q)\rceil\ .
    \end{align}
    Finally, we insert the lower bound to the query complexity estimate
    \begin{align}
        \mathcal O(3^k) 
        &= \mathcal O\left(( \epsilon_\text{th}/ \epsilon_0 )^{\log(3)/ \log(q)}\right)\\
        &= \mathcal O\left(( \epsilon_\text{th}/ \epsilon_0 )^{\log(3)/ \log(1-sF_0\Delta)}\right)\\
          &= \mathcal O\left(( \epsilon_0/\epsilon_\text{th}  )^{\log(3)/(sF_0\Delta)}\right)\ ,
    \end{align}
    where we use the identity $a^{\log_x(b)} = b ^{\log_x(a)}$ in the first line and $\log(1-x) \approx -x$ in the last line.
We bound the depth of each subroutine query as $\mathcal O(L)$ (i.e. dominated by reflections~\cite{zindorf2024efficient,zindorf_multicontrol_2025} as opposed to Hamiltonian simulations which can be done in $\mathcal O(1)$ time) so the overall depth is as stated.
\end{proof}

\section{Numerical simulations and gate-counts of DB-QITE using Qrisp}
\label{app Qrisp}

In this section, we provide numerical simulations and gate-counts for the compiled circuits for DB-QITE. A fully compilable implementation of DB-QITE requires careful management of quantum-algorithmic primitives. We utilize the Qrisp programming framework \cite{seidel2024qrisp}, whose automated memory management handles the exchange of auxiliary qubits across different modules without complicating the code.
Simulations are performed using Qrisp’s integrated \textit{statevector} simulator that leverages sparse matrices, enabling the simulation of DB-QITE for relatively large quantum systems. Apart from that, the simulator efficiently checks for potential factorization of state vectors, which can reduce the required resources significantly.

\subsection{Compilation for DB-QITE}

Our implementation utilizes the compilation primitives available within Qrisp to realize the DB-QITE unitary defined recursively in Eq.~\eqref{DB-QITE Uk}.
The DB-QITE unitary involves Hamiltonian evolutions and reflection gates, both of which need to be decomposed into the \texttt{CZ + U3} gate set for execution on available quantum hardware, and \texttt{Clifford + T} gate set for execution on fault-tolerant quantum hardware. Hamiltonian evolutions ($e^{i\sqrt{t}\h}$) are implemented using Trotter-Suzuki decomposition~\cite{PhysRevX.TrotterSuzukiError}. This is the standard approach to the task of Hamiltonian simulation and yields an efficient implementation when scaling the evolution duration. For the two-local Hamiltonians considered in our numerical studies, each term in the Trotter expansion can be efficiently simulated using a small number of CZ and single qubit rotations U3. To optimize the simulation depth, Qrisp's native implementation of Trotter-Suzuki decomposition employs a heuristic graph-coloring algorithm \cite{RLF} to parallelize the execution of Hamiltonian terms acting on non-overlapping qubits.

The reflection unitary is a multi-qubit controlled parameterized phase gate, assuming the quantum state is the tensor-product zero state; we see it by writing $e^{i\sqrt{t}\ket{0}\bra{0}}=\1+ (e^{i\sqrt{t}}-1)\ket{0}\bra{0}$ with matrix representation
\begin{equation}
    e^{i\sqrt{t}\ket{0}\bra{0}} = 
\begin{bmatrix}
e^{i\sqrt{t}} & 0 &  0 & 0 \\
0 & 1 & 0 &  0 \\
0 & 0 & \ddots &  0 \\
0  & 0 & 0 & 1 \\
\end{bmatrix}.
\end{equation}
This means that the first qubit is the target to which the operation $diag(e^{i\sqrt{t}},1)$ is applied, and the remaining $L-1$ qubits are the control qubits. 

\smallskip

Qrisp's native compilation routine for implementing the reflection gates ($ e^{i\sqrt{t} \ket0\bra0}$) is based on Ref.~\cite{balauca2022efficient}. The Qrisp implementation however deviates from the original implementation in a crucial way: The intermediate control values are no longer uncomputed via Margolus gates~\cite{song2003simplifiedtoffoligateimplementation} but instead Gidney's temporary logical AND \cite{Gidney2018halvingcostof}. This reduces the CZ count of the uncomputation from 3 to 0.5 (the CZ count only needs to be performed in half of the cases) and the T count from 4 to 0.

The above described technique is suitable to implement a multi-controlled X gate, the algorithm requires a multi-controlled phase gate. In principle, this can be achieved by wrapping two MCX gates around a phase gate and an ancilla. However, there is an even better way to achieve this. For this notice, that Balauca's MCX is a conjugation, i.e. of the form
\begin{equation}
    MCX = V W V^\dagger.
\end{equation}
The $V$ operator (un) computes the ancilla value, whilst the $W$ operator performs a simple Toffoli gate. We modify this structure in the following way towards implementing a multi-controlled phase gate:
\begin{equation}
    MCP(\Phi) = (V W) P(\Phi) (V W)^\dagger.
\end{equation}
In words: We use Balauca's ancillae computation method to reduce the control value into a single ancilla qubit, which is in the $\ket{1}$ state if the required conditional is given. On this ancilla, we execute the phase gate and conclude with the uncomputation of all involved ancillae.

A recently published, viable alternative is the compilation of multi-controlled X-gates using conditionally clean ancillae (CCA) \cite{khattar2024riseconditionallycleanancillae}. This method exhibits a similar conjugation structure, but in this case the $W$ operator requires a constant number of ancillae. The same technique as above can be used to implement a multi-controlled phase gate. The resulting procedure requires less ancillae, but it cannot use Gidney's logical AND for uncomputation. This fact drives the T-cost ($8n - 13$ for CCA and $4n -1$ for Balauca). As discussed in \cite{Gidney2018halvingcostof}, blocking an ancilla qubit for a given amount of space-time gives rise to an opportunity cost in the T count because the blocked ancillae cannot be used for T-state production. However, this cost is highly dependent on the space-time volume required to produce T states, which might change significantly until fault-tolerant quantum devices become available. The answer to the question which of these options proves advantageous therefore stays open.

For numerical simulations we utilize the latter procedure as it requires only a constant number of ancillae and handling a smaller amount of qubits is helpful for simulations.

\subsection{Numerical simulations of the Heisenberg model and initial states}

Let us consider DB-QITE applied to the antiferromagnetic Heisenberg model 
\begin{align}
\label{hamiltonian}
    \h = \sum\limits_{i=1}^{L-1}(X_iX_{i+1}+Y_iY_{i+1}+Z_iZ_{i+1})
\end{align}
where $L$ is the number of qubits. 
For the numerical simulations, we consider $L\in\{10,12,14,16,18,20\}$. Hamiltonian simulation, i.e. the unitary $e^{i\tau H}$, is implemented via the second-order Trotter-Suzuki formula with 2 steps. 
We remark that for some cases, one layer of the second-order Trotter-Suzuki formula could suffice.
In terms of runtime, roughly, each layer has linear runtime just like the reflection unitaries. 
In each step, we perform 2 Hamiltonian simulations and 1 reflection unitary; hence, taking 1 rather than 2 layers of the Trotter-Suzuki formula would lead to a shorter runtime, while in cases where $K$ layers are required, the runtimes presented below will scale in proportion to $K$.

In every DB-QITE step, we use a 20-point grid search to find the $s_k$ that yields the best energy gain.
Additionally, Eq.~\eqref{eqGC} has an approximate invariance 
\begin{align}
    G_{\alpha \beta s}(\hat A /{\alpha}, \hat B /{\beta}) = e^{i\sqrt{s \beta/\alpha}\hat A}e^{i\sqrt{s \alpha/\beta}\hat B}    
    e^{-i\sqrt{s\beta/\alpha}\hat A}
    e^{-i\sqrt{s\alpha/\beta}\hat B} =e^{-s[\hat A, \hat B]} +\mathcal O(s^{3/2})\ ,
\end{align}
but rescaling by $\alpha,\beta$ can influence the approximation constant in $\mathcal O(s^{3/2})$~\cite{double_bracket2024}. 
Specifically, we have
\begin{align}
    \|G_{\alpha \beta s}(\hat A /{\alpha}, \hat B /{\beta})-e^{-s[\hat A, \hat B]}\| \le s^{3/2}(\|[\hat A,[\hat A,\hat B]]\|/\sqrt{\beta/\alpha} + \|[\hat B,[\hat A,\hat B]]\|/\sqrt{\alpha/\beta}) \ .
\end{align}
We found empirically that setting $\alpha = 10$ and $\beta = 1$ allows us to identify time steps $s_k$ that yield better ground-state approximations than those obtained without reweighting the contributions of $\hat{H}$ and $\omega_k$.

We further remark that the DB-QITE steps involve two Hamiltonian simulations and one reflection. 
Depending on which of these components is more robust under noise, reweighting can be adjusted to improve noise resilience.
For instance, by shortening the Hamiltonian simulation time  and thus using fewer Trotter-Suzuki layers.
When optimizing for noise robustness or shorter circuits, the optimal reweighting strategy may differ from the one we used, which was designed to maximize energy decrease per step rather than circuit robustness.

Next, we discuss two types of initial states which \textit{a)} are designed analytically based on physical intuition so should remain meaningful for larger systems and \textit{b)} we consider a variational circuit which improves the initialization quality but may not be scalable.
Specifically, we consider \textit{a)} a tensor product of singlet states 
\begin{align}
    \ket{\text{Singlet}} = 2^{-L/4}(\ket{10}-\ket{01})^{\otimes L/2}
\end{align}
of consecutive qubits, and \textit{b)} a type of the so-called problem-specific Hamiltonian Variational Ansatz (HVA)~\cite{bosse_Heisenberg_2022, kattemolle_vanwezel_2022,wecker2015progress}
\begin{align}
\ket{\text{HVA}_p(\mathbf{t})} = \prod_{j=1}^{p}(e^{-it_{j,0} \h_0}e^{-it_{j,1} \h_1})\ket{\text{Singlet}}
\end{align}
where $\h_0, \h_1$ are obtained from $\h$ by restricting to summation over odd (even) indices $i$ such that all terms in $\h_0, \h_1$ commute. In numerics, we will use just one layer $p=1$ because it will be enough to illustrate the case of a significant improvement of the initialization.
In other words we use the ansatz
\begin{align}
\ket{\text{HVA}} = e^{-it_{0} \h_0}e^{-it_{1} \h_1}\ket{\text{Singlet}}= U_\text{HVA}\ket{\text{Singlet}}
\end{align}
whose meaning is to balance between the location of the singlets and putting the singlet sublattices into a superposition.

Fig.~\ref{fig:QITE Heisenberg energy and fidelity across L} shows the performance of DB-QITE for a range of system sizes $L=10,12,\ldots 20$ for both these initial conditions.
For all $L$, we find that just $k=2$ DB-QITE steps allow to reach $90\%$ fidelity when initializing with HVA.
We find across all considered system sizes that a better quality of initialization leads to a better DB-QITE performance, both in terms of ground-state fidelity and achieved energy.
However, for $L=20$ (which requires decreasing the Qrisp simulator's sparsity cut-off ratio to ensure high precision simulations) we find that the convergence requires more than $k=5$ steps to reach an ultra-high fidelity regime, i.e. to reach fidelity exceeding $F_k\ge99\%$ in that many-body system.
Having said that even for $L=20$, DB-QITE is able to prepare a state with energy within a fraction of the spectral gap $\Delta$ from the ground-state value $\lambda_0$.
This suggests that very short circuits could be enough to prepare states allowing to prepare very low-energy states for further physics studies on  a quantum computer.

Fig.~\ref{fig:QITE Heisenberg circuits gate counts} quantifies the runtime by counting gates of the DB-QITE circuits.
In near-term, the number of CZ gates matters most because usually operations an individual qubits are more coherent than gates acting on two qubits by coupling them.
On the other hand, for typical error-correction codes CZ gates can be implemented transversally which makes them cheaper than some of single-qubit rotations U3 which required $T$-gate resources. 
Typically the single-qubit count is about twice the CZ count.
For example, $k=2$ for $L=10$ leads to $F_2\approx 99\%$ and requires $N_\text{CZ}\approx 1.4 \times 10^3$ CZ gates and $N_\text{U3}\approx 2.3 \times 10^3$ single-qubit gates.
For $L=20$ we obtain $F_2\approx 92\%$ which requires $N_\text{CZ}\approx 3 \times 10^3$ CZ gates and $N_\text{U3}\approx 4.8 \times 10^3$ single-qubit gates.
Thus, the gate cost to achieve a high ground-state fidelity qualitatively doubled when doubling the system size, while it requires a much larger effort to prepare an ultra-high fidelity ground-state approximation for the larger system.

\newpage
\begin{figure}[h!]
    \centering
    \includegraphics[scale=0.65]{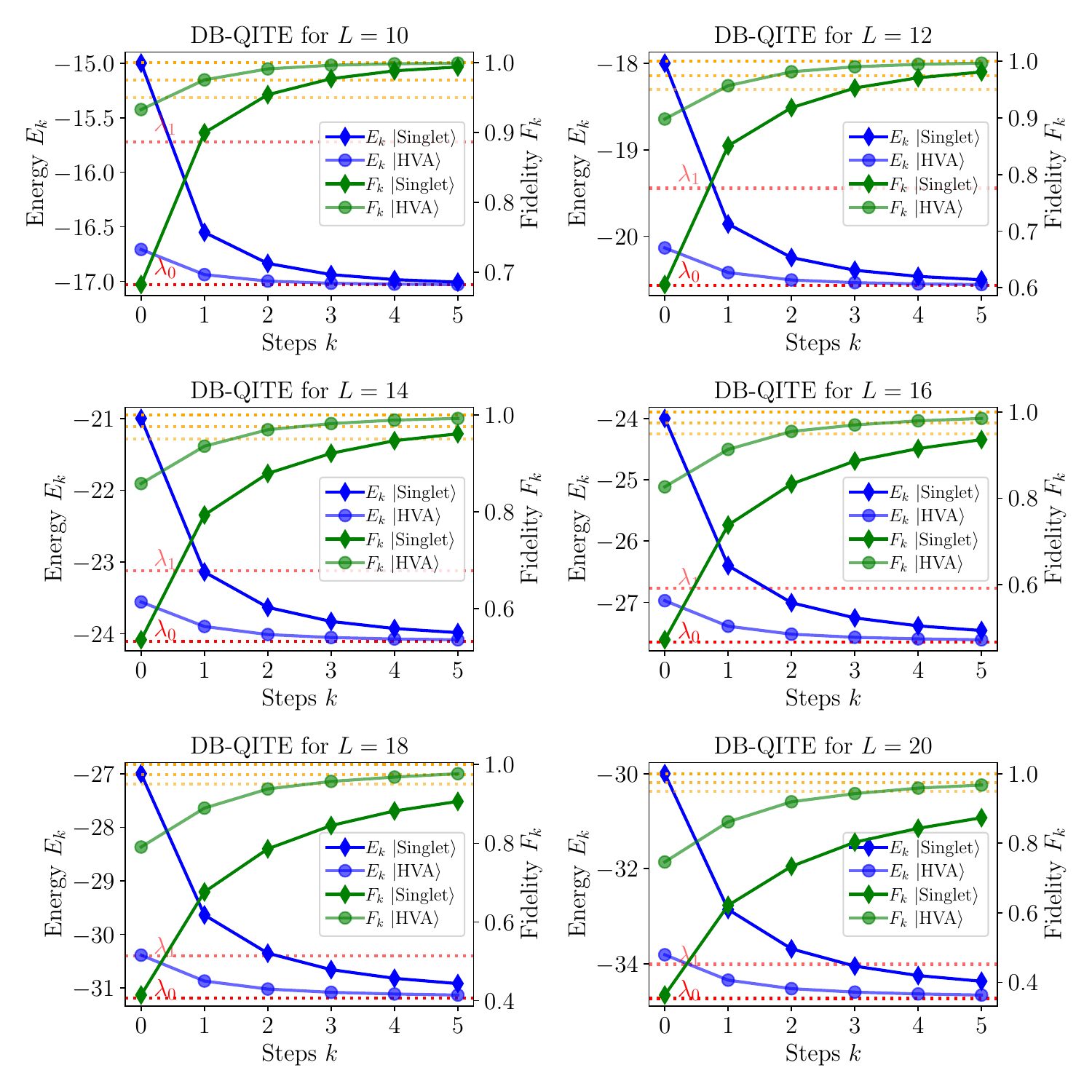}
    \caption{5 iterations of DB-QITE for $L$-qubit Heisenberg model. The figures show the energies $E_k=\langle \omega_k | \h | \omega_k \rangle$ and the fidelities $F_k=|\langle \omega_k | \lambda_0 \rangle|^2$ with the ground state $\ket{\lambda_0}$ for the $k$-th iterates $\ket{\omega_k}$ for initializations $\ket{\omega_0}=\ket{\text{Singlet}}$ and $\ket{\omega_0}=\ket{\text{HVA}}$. 
    The upper (dashed, orange) lines indicate the fidelities $1.0, 0.975, 0.95$, and the lower (dashed, red) lines indicate the ground state energies $\lambda_0$ and the first excited state energies $\lambda_1$.
    Let us make the following observations: 
    1) The $\ket{\text{HVA}}$ initializations achieve a higher fidelity to the ground state compared to the $\ket{\text{Singlet}}$ initializations, and the fidelity of the $k$-th iterates for  $\ket{\text{HVA}}$ consistently exceeds the respective fidelities for $\ket{\text{Singlet}}$. However, this gap narrows significantly for an increasing number of DB-QITE steps $k$. Therefore, the choice of the particular initial state can be less impactful as long as there is \textit{some} overlap with the ground state.
    2) The optimal evolution times $s_k$ found iteratively by a 20-point grid search tend to decrease with $k$, 
    while remaining orders of magnitude larger than guaranteed by Theorem \ref{th: fidelity convergence}. In particular, large initial evolution times $s_0$ yield a rapid decrease in energy in the first step of DB-QITE. Similarly, we observe that the reached fidelities $F_k$ are much larger than guaranteed by Theorem \ref{th: fidelity convergence} by optimizing the $s_k$ durations~\cite{xiaoyue2024strategies}. For example, for $L=10$ we empirically observe $F_k>1-q^k$ for $q=0.5$, while Theorem \ref{th: fidelity convergence} would merely provide a theoretical lower bound with $q$ being close to 1.
    3) The fidelity $F_k$ with the ground state $\ket{\lambda_0}$ reached for the $k$-th iterate decreases with increasing number of qubits $L$, e.g., for 5 steps of DB-QITE with $\ket{\text{HVA}}$ initialization, fidelities  $F_5=0.999$ for $L=10$, and $F_5=0.967$ for $L=20$ are achieved.
    }
    \label{fig:QITE Heisenberg energy and fidelity across L}
\end{figure}
\begin{figure}[h!]
    \centering
    \includegraphics[scale=0.55]{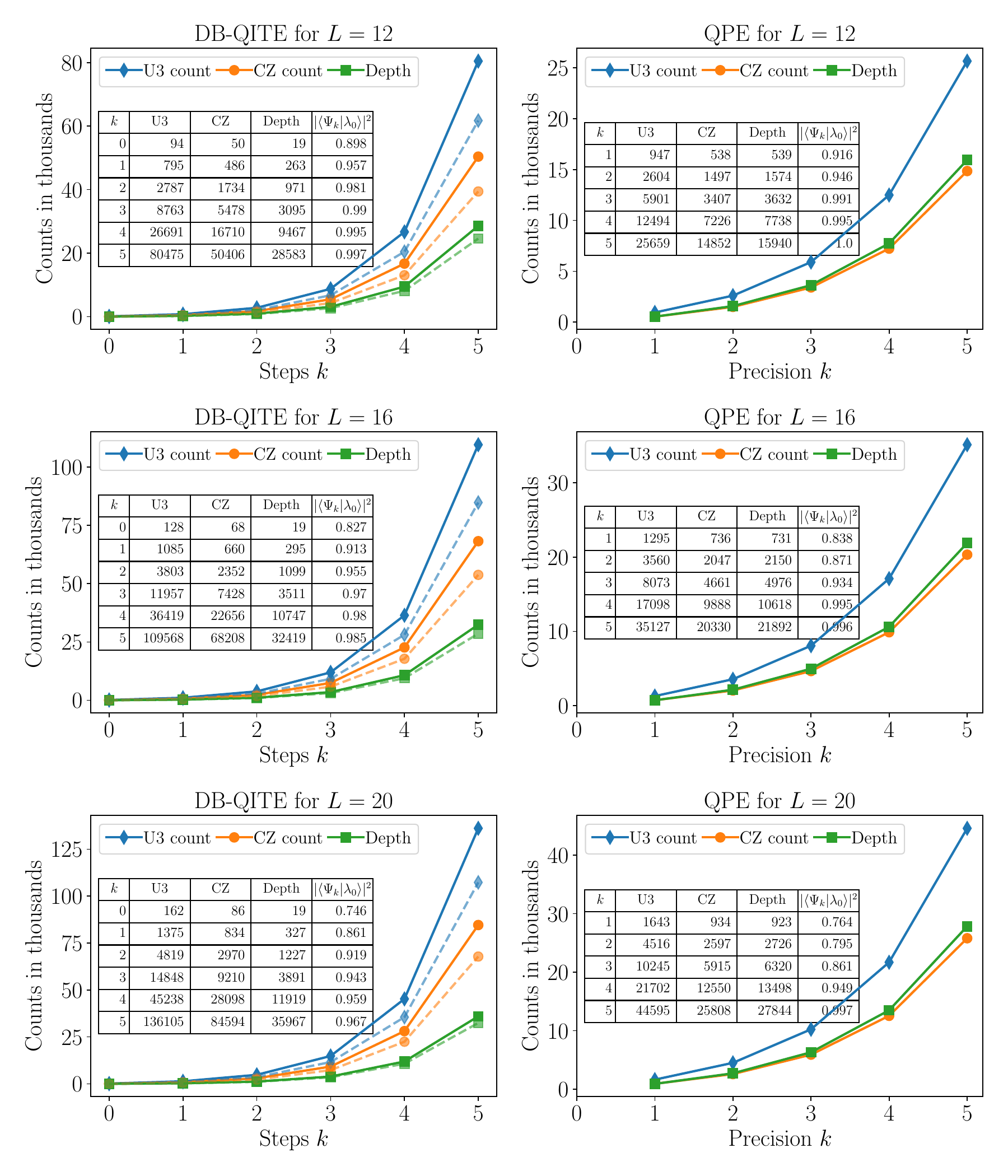}
    \caption{Gate counts, circuit depth and ground-state fidelity for DB-QITE (left column) and QPE (right column) for the Heisenberg model with $L=12,16$ and $20$ qubits for $\ket{\text{HVA}}$ and $\ket{\text{Singlet}}$ (dashed) as initializations. 
    For $L=12$ the ground state is prepared with fidelity $F_3 = 99\%$ with $k=3$ steps of DB-QITE. The circuit implementing this requires $N_\text{CZ}\approx 5.5 \times 10^3$ CZ gates and $N_\text{U3}\approx 8.8 \times 10^3$ single-qubit gates.
    For $L=20$ the ground state is prepared with fidelity $F_2 = 92\%$ with $k=2$ steps of DB-QITE. The circuit implementing this requires $N_\text{CZ}\approx 3 \times 10^3$ CZ gates and $N_\text{U3}\approx 4.8 \times 10^3$ single-qubit gates.
    The structure of DB-QITE's circuits is more homogeneous and e.g. for $k=4$ DB-QITE has circuit depth $D_4 \approx 1.2\times 10^4$ reaching $F_4 \approx 96\%$ which is lower circuit depth of QPE with $k=4$ precision qubits given by $D_4\approx 1.3 \times 10^4$ giving only $F_4 \approx 95\%$.
    When considering long enough circuits then QPE reaches enough precision ($k$ for QPE denotes the number of auxiliary qubits encoding precision) and under the assumption of all-to-all connectivity  outperforms DB-QITE by reaching higher fidelity through less gates and lower circuit depth.
    }
    \label{fig:QITE Heisenberg circuits gate counts}
\end{figure}

\newpage
\subsection{Conditions for quantum phase estimation (QPE) to outperform DB-QITE}\label{comparision QPE}
We next compare the runtime of DB-QITE to that of quantum phase estimation (QPE) for the ground-state preparation task.
Specifically, we will make a comparison to ground state preparation with phase estimation for known ground energy reviewed in Ref.~\cite{ge2019faster}.
We here aim to compare to the (likely) strongest algorithmic method in the regime when classical simulation is available.
In other words, for $L\le20$ qubits we can compute almost all relevant quantities to tightly parametrize QPE by having a $[0,1)$ spectrum.
In the case of $L\gg 20$ qubits such classical aid will not be present for a general Hamiltonian and we discuss the prospective runtime overheads later in this section and next introduce the precise method.

\paragraph{Description of the procedure for QPE.}
To set up the method we apply an affine transformation $\h\mapsto \h' = (\h-\lambda_0\1)/(\|\h\|-\lambda_0)$ to achieve that $\lambda_0=0$ and the spectrum of $\h$ is contained in $[0,1)$. 
We then perform controlled-evolutions using the rescaled Hamiltonian $U_j = \Lambda_j(e^{2\pi i 2^j \h'})$ where $\Lambda_j$ indicates controlling on qubit $j$.
The label $j$ encodes the digit of precision accessed.
The protocol then concludes by performing a measurement on all of the $k$ qubits encoding the precision and given the rescaling to $[0,1)$ the ground-state will correspond to measuring $0$'s.
Hamiltonian simulation, i.e. the unitary $e^{i t H}$, is implemented via the second-order Trotter formula with 2 steps for both methods.
The control $\Lambda_j$ is implemented by standard compilation methods which are natively available in Qrisp.
The success probability of preparing the (exact) ground state with QPE is upper bounded by the fidelity $F_0=| \langle \omega_0 | \lambda_0 \rangle |^2$ of the initial state with the ground state.

\paragraph{QPE vs. DB-QITE for singlet initialization.}
Fig.~\ref{fig:QITE QPE Heisenberg} summarizes the results for the $\ket{\text{Singlet}}$ initial condition.
First, for DB-QITE we find a gradual increase of recursion steps required to reach a threshold fidelity.
Second, the gate count is exponential in the number of steps.

For $L=10$, DB-QITE with $k=2$ steps provides ground-state fideltity $F_2\approx95\%$ requiring $N_\text{CZ}\approx 1.1\times 10^3$ CZ gates and $N_\text{sq}\approx 1.7\times 10^3$ single-qubit gates. QPE with $k=2$ precision qubits provides ground-state fideltity $F_2\approx 92\%$ with a success probability of $P\approx 76\%$ requiring $N_\text{CZ}\approx 1.2\times 10^3$ CZ gates and $N_\text{sq}\approx 2.1\times 10^3$ single-qubit gates lagging the results for DB-QITE. Importantly, $k=2$ requires about a thousand two-qubit gates which is within the reach of commercially available quantum hardware.

When aiming for higher fidelities, QPE starts to outperform DB-QITE:
For $L=10$, DB-QITE with $k=3$ steps provides ground-state fideltity $F_3\approx98\%$ requiring $N_\text{CZ}\approx 3.5\times 10^3$ CZ gates and $N_\text{sq}\approx 5.5\times 10^3$ single-qubit gates. QPE with $k=3$ precision qubits provides ground-state fideltity $F_3\approx 100\%$ with a success probability of $P\approx 70\%$ requiring $N_\text{CZ}\approx 2.7\times 10^3$ CZ gates and $N_\text{sq}\approx 4.7\times 10^3$ single-qubit gates.

With increasing number of qubits $L$, and decreasing initial fidelity $F_0$, the comparison further shifts in favor of QPE:
For $L=20$, QPE with $k=5$ precision qubits provides ground-state fideltity $F_5\approx 99\%$ with a success probability of $P\approx 37\%$ requiring $N_\text{CZ}\approx 2.6\times 10^5$ CZ gates and $N_\text{sq}\approx 4.4\times 10^5$ single-qubit gates. For DB-QITE the fideility $F_5$ remains well below $90\%$.
Unfortunately it is unrealistic to expect that this performance of QPE compared to DB-QITE would be sustained in cases when the ground-state energy is not known exactly.
In that case the spectrum would not be scaled as tightly and QPE would require more precision qubits and thus a much larger runtime.
See Ref.~\cite{ge2019faster} for analytical estimates of the overheads in the case where $\lambda_0$ is not known.

\paragraph{QPE vs. DB-QITE for inaccurate spectrum rescaling.}
In addition to results of QPE applied with the Hamiltonian $\h'$ rescaled to the $[0,1)$ interval, Fig.~\ref{fig:QITE QPE Heisenberg} presents results for $\h'/2$ rescaled to $[0,1/2)$ and $\h'/10$ rescaled to $[0,1/10)$.
This models a situation where QPE receives a strong guidance about the ground-state energy $\lambda_0$ but the norm $\|\h\|$ is over-estimated by a factor of $2$ and $10$, respectively.
Such over-estimation could arise, e.g. from using the triangle inequality for a decomposition of $\h$  into a sum of terms whose norm is known.
For example, for the Heisenberg model we have
\begin{align}
    \|\h\| \le \sum\limits_{i=1}^{L-1}\|X_iX_{i+1}+Y_iY_{i+1}+Z_iZ_{i+1}\| \le \sum\limits_{i=1}^{L-1}(\|X_iX_{i+1}\|+\|Y_iY_{i+1}\|+\|Z_iZ_{i+1}\|)\le 3(L-1)
\end{align}
which would give $\|\h\| \le 57$ for $L=20$ which is very loose but represents a difficulty that would appear for use-cases of quantum computation for difficult optimization instances in materials science.
We find that if one knows the exact values of $\lambda_0$ and $\|\h\|$ then QPE outperforms DB-QITE on platforms which have all-to-all interactions.
Thus, QPE requires stronger assumptions compared to DB-QITE and in near-term the latter will be advantageous.

We stress that the rescaling will be even worse when $\lambda_0$ is not known exactly.
Then the spectrum will not be rooted at $0$ and thus even more qubits will be required to resolve the energy with sufficient precision.
Moreover, the precise value of the ground-state energy will not be known and given that the initial ground-state fidelity may not be known then QPE might need to be executed for very many instances in order to allocate sufficient probability of observing measurement corresponding to the ground-state.

This should be compared with DB-QITE which is a quantum algorithm of a different type and thus requires no such estimates.
Our numerical results are computed based on evaluations of $E_k$ for 20 different values of $s_k$.
On quantum hardware this would require estimation of $E_k$ from data which is a well-studied problem and can be expected to be relatively robust (e.g. in comparison to evaluating gradients of $E_k$ with respect to microscopic parameters of the circuit as opposed to the macro-parameter $s_k$).
The overhead for the estimation $E_k$ will scale with $k$ and given that $k$ must be kept at minimum it should not affect the runtime substantially.

\paragraph{QPE vs. DB-QITE for pretrained initializations.}
These observations remain in place also for the variational initializations $\ket{\text{HVA}}$ presented in Fig.~\ref{fig:QITE QPE Heisenberg}.
Again, if exact rescaling is in place then QPE outperforms DB-QITE in reaching the ultra-high fidelity $F_5\ge 99\%$.
On the other hand, we see that DB-QITE benefits from an improved initialization more than QPE.
Specifically, for $L=20$ both DB-QITE with $k=3$ steps and QPE with $k=4$ precision qubits reach a fidelity of about $95\%$ through a similar runtime.
We remark that the steps are different in nature for these methods but they qualitatively coincide because for each there is a notion of exponential convergence at play.
To reach fidelity $F_2\ge 90\%$ DB-QITE is the preferred method because $k=2$ steps suffice while for $k=2$ QPE has $F_2 \ll 90\%$ and so DB-QITE has significantly faster runtime to reach the threshold of $90\%$ fidelity than QPE.

Finally, we comment that the circuit used to pretrain the initialization is a generalization of the $\ket{\text{Singlet}}$ initial state because it uses Hamiltonian evolutions that rotate towards singlets on alternating even and odd sublattices while $\ket{\text{Singlet}}$ implements them on only one sublattices.
However, despite the clear advantage of being informed by the physics of the model, the ansatz we use for training is limited in expressivity as it uses a low-depth circuit.
This showcases the strength of DB-QITE whose operation through quantum computation implements the intuition about the physics of the system in that DB-QITE implements steps along the Riemannian gradient which reduces the energy cost-function.

\paragraph{Discussion of susceptibility to noise.}
Finally, we comment in more detail about the robustness to experimental noise on near-term hardware.

Firstly, on architectures which do not have all-to-all interactions the runtime of both methods will be increase, generally DB-QITE should be less affected.
Specifically, the native connectivity must be used to route using SWAP gates connections needed for the required non-local gates~\cite{li2019tackling,yamamoto2023error,Efthymiou2024qibolabopensource}.
Ref.~\cite{zindorf_multicontrol_2025} shows that the reflection gates can be compiled with linear scaling and the proportionality constant increases only slightly even if the hardware architecture has only nearest-neighbour connectivity.
There is a difference, however, in that QPE needs routing from the qubits encoding the system to the qubits encoding the digits of the energy read-out.
This difference will show up especially in cases when the connectivity of the input Hamiltonian matches the hardware architecture.
In that case, Hamiltonian simulation in DB-QITE will have no routing overheads while the controlled Hamiltonian simulation of QPE will require it.
Additionally, with the constructions in Ref.~\cite{khattar2024riseconditionallycleanancillae}, the reflections can be executed with nearest-neighbor gates in linear time.

Besides overheads, the methods differ in their reliance on two-qubit gates.
For QPE the reliance is fundamental and the qubits responsible for precision will not faithfully represent the energy of the system if the controlled unitary operations are not close enough to perfection.
This could lead to measurements that point to wrong bits showing up in the energy representation.
In turn, for DB-QITE miscalibration of gates has a relatively transparent outcome: Two-qubits gates are used to implement the local terms of the Hamiltonian and if these gates have errors then we will be implementing DB-QITE with a different Hamiltonian $\h'$, albeit with the same locality; see Ref.~\cite{robbiati2024double} for a ground-state preparation DBQA which does not involve reflections but rather local diagonal Hamiltonians which are similarly robust.
The energy obtained with such deformed Hamiltonian will be transparently close to the energy of the ideal implementation up to an error given by $\varepsilon =2\|\h-\h'\|$.
In particular, for stable phases of matter in condensed-matter physics applications this robustness of DB-QITE could allow for useful near-term explorations.
\paragraph{Runtime gains for QPE when warm-starting with DB-QITE.}
Fig.~\ref{fig:QITE QPE together} showcases that using $k=2$ steps of DB-QITE on either $\ket{\text{Singlet}}$ or $\ket{\text{HVA}}$ systematically improves the performance of QPE.
Specifically for any number of precision qubits, the fidelity of QPE to the ground-state is higher when warm-starting using DB-QITE.
This is the most pronounced for $k=1$ or $k=2$ precision qubits where the QPE output fidelity is essentially the fidelity of the initialization.
For $k\ge3$ the QPE circuits begin increasing the fidelity but warm-starting gives systematically better outputs.
The gate counts shown in Fig.~\ref{fig:QITE QPE together} highlight the strength of DB-QITE in that $k=1$ or $k=2$ steps of DB-QITE have very short runtime and so the overhead of warm-starting is very small compared to the total runtime of QPE for $k\ge 3$.

When considering warm-starting, the case of using $k=4$ precision qubits in QPE is key.
This case represents the situation where QPE is already improving the fidelity but is limited in precision from reaching the ultra-high fidelity that we find for $k=5$.
In use-cases of quantum computation such situation will be the main challenge in that circuit depth could limit QPE from running for large numbers of precision qubits because the QPE runtime grows exponentially with $k$.
We see that for $k=4$ an improved warm-start using DB-QITE improves QPE output to about $50\%$ between QPE with $k=4$ and $k=5$.
The depth cost of running QPE for $k=5$ increases very strongly compared to $k=4$, in contrast, the sizeable improvement using DB-QITE comes at almost no overhead in comparison.

Moreover, warm-starting with DB-QITE increases the fidelity of the initial state with the ground state, hence improves the success probability of preparing the ground state with QPE.
We conclude that warm-starting using DB-QITE has a significant potential for improving the performance of QPE in practice.

\begin{figure}[htbp]
    \centering
    \includegraphics[scale=0.5]{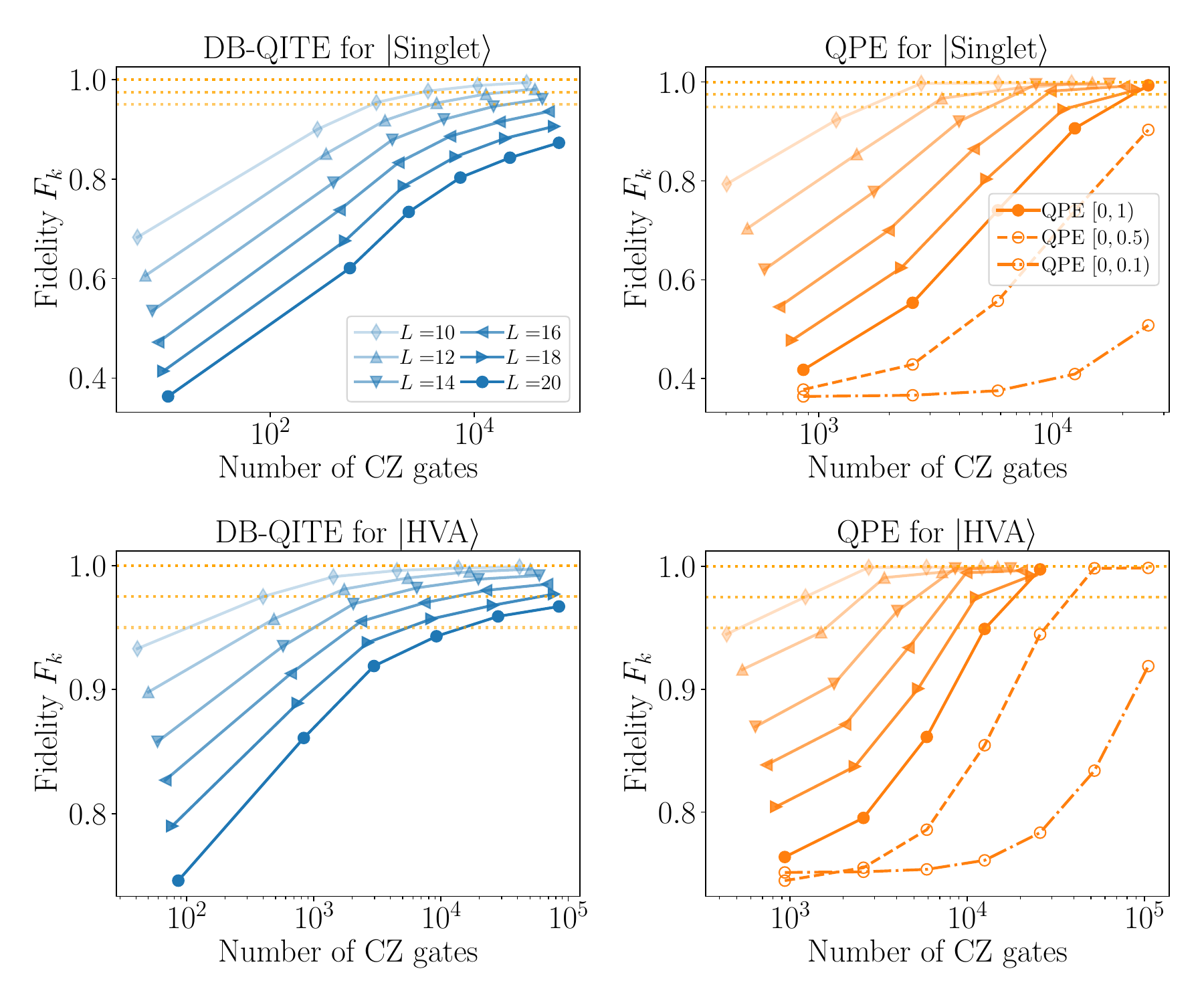}
    \caption{Comparison between DB-QITE and QPE for $L$-qubit Heisenberg model for $\ket{\text{Singlet}}$ (top) and $\ket{\text{HVA}}$ (bottom) initializations. 
    Fidelities $F_k$ with the ground state $\ket{\lambda_0}$ and number of CZ gate are shown as a function of steps $k=0,1,\dotsc,5$ for DB-QITE (left) and for QPE (right) as a function of the number of precision qubits $k=1,2,\dotsc,5$. 
    The dashed-orange lines indicate the fidelities $1.0, 0.975, 0.95$. For example, for 10 qubits just $k=4$ steps of DB-QITE are required to reach almost perfect fidelity, which is achieved with QPE  with $k=3$ precision qubits.
    Both methods require exponential gate counts but QPE converges faster than DB-QITE. For $\ket{\text{HVA}}$ initializations and $L=20$ qubits, a fidelity $F\approx 95\%$ is reached with $k=3$ steps of DB-QITE, or QPE with $k=4$ precision qubits, requiring $N_{\text{CZ}}\approx 10^4$ CZ gates in both cases. While DB-QITE deterministically prepares the (approximate) ground state, QPE has a success probability of $P\approx 79\%$. If QPE suffers from an inaccurate spectrum rescaling to $[0,0.5)$, the fidelity for $k=4$ precision qubits drops to $F\approx 86\%$ and an additional precision qubit is required. In this scenario DB-QITE clearly outperforms QPE.
    }
    \label{fig:QITE QPE Heisenberg}
\end{figure}

\begin{figure}[htbp]
    \centering
    \includegraphics[scale=0.5]{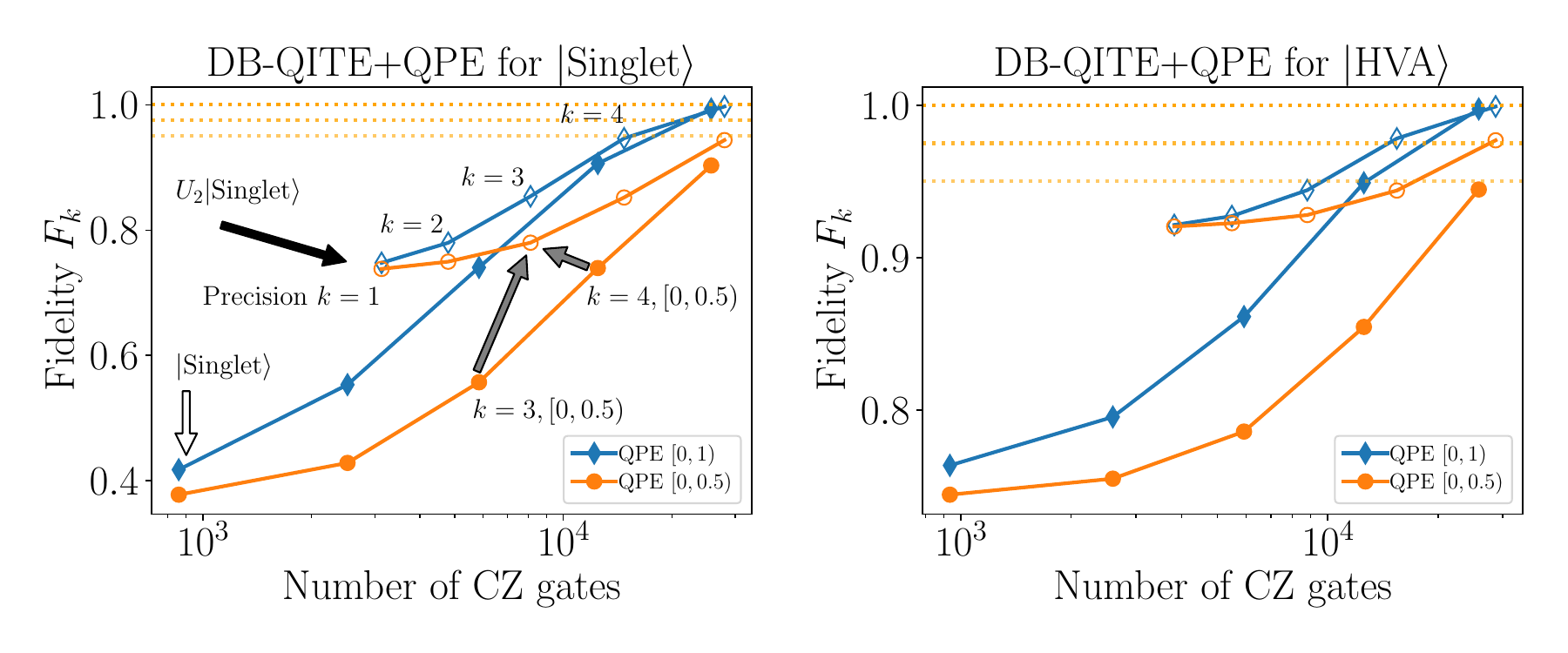}
    \caption{Combining DB-QITE and QPE: A comparison for QPE for $20$-qubit Heisenberg model with $\ket{\text{Singlet}}$, $\ket{\text{Singlet+QITE}}$ (left), $\ket{\text{HVA}}$, $\ket{\text{HVA+QITE}}$ initial states (right). Warm-starting with DB-QITE is achieved by performing $k=2$ steps of DB-QITE from the $\ket{\text{Singlet}}$ and $\ket{\text{HVA}}$ initial states, respectively.
    The figures show fidelities with the ground state $\ket{\lambda_0}$ and numbers of CZ gates as a function of the number of precision qubits $k=1,2,\dotsc,5$. 
The Hamiltonian $\h$ is rescaled by an affine transformation such that the spectrum in contained is $[0,1)$ (blue diamonds). 
For a more realistic scenario where the norm of the operator is not known exactly, the Hamiltonian is rescaled such that its spectrum is contained in $[0,0.5)$ (orange circles). In this case, additional precision qubits for the QPE are required and in effect the QPE runtime in order to reach $F_k \ge 90\%$ doubles. 
Unless QPE has reached full precision then DB-QITE warm-start increases QPE's output fidelity.
Additionally, even in the case of QPE converging onto the solution (end points of blue curves with diamonds) the warm-start should be used because it increases the success probability which is upper bounded by the initialization fidelity of QPE.
}
    \label{fig:QITE QPE together}
\end{figure}

\end{document}